\let\oldnl\nl% Store \nl in \oldnl
\newcommand{\nonl}{\renewcommand{\nl}{\let\nl\oldnl}}
\def\ps@pprintTitle{%
 \let\@oddhead\@empty
 \let\@evenhead\@empty
 \def\@oddfoot{\centerline{\thepage}}%
 \let\@evenfoot\@oddfoot}
\newtheorem{lemma}{Lemma}
\newtheorem{theorem}{Theorem}
\newproof{proof}{Proof}
\newdefinition{remark}{Remark}
\newdefinition{example}{Example}
\begin{document}

\begin{frontmatter}

\title{Crash tolerant gathering on grid by asynchronous oblivious robots\tnoteref{mytitlenote}}

\tnotetext[fn1]{The first author would like to thank NBHM, DAE, Govt. of India and the second author would like to thank CSIR, Govt. of India for their financial support to carry out research. }

%% Group authors per affiliation:
\author[math]{Kaustav Bose\corref{cor1}}
\ead{kaustavb@research.jdvu.ac.in}

\author[math]{Ranendu Adhikary}
\ead{ranendua@research.jdvu.ac.in}

\author[it]{Sruti Gan Chaudhuri}
\ead{srutiganc@it.jusl.ac.in}

\author[math]{Buddhadeb Sau}
\ead{bsau@math.jdvu.ac.in}

\address[math]{Department of Mathematics, Jadavpur University, West Bengal, India}

\address[it]{Department of Information Technology, Jadavpur University, West Bengal, India}

\cortext[cor1]{Corresponding author}

%\fntext[fn1]{This is the specimen author footnote.}

%% or include affiliations in footnotes:
%%\author[mymainaddress,mysecondaryaddress]{Elsevier Inc}
% \ead[url]{www.elsevier.com}
%
% \author[mysecondaryaddress]{Global Customer Service\corref{mycorrespondingauthor}}
% \cortext[mycorrespondingauthor]{Corresponding author}
% \ead{support@elsevier.com}
%
% \address[mymainaddress]{1600 John F Kennedy Boulevard, Philadelphia}
% \address[mysecondaryaddress]{360 Park Avenue South, New York}

\begin{abstract}

Consider a system of autonomous mobile robots initially randomly deployed on the nodes of an anonymous finite grid. A gathering algorithm is a sequence of moves to be executed independently by each robot so that all robots meet at a single node after finite time. The robots operate in Look-Compute-Move cycles. In each cycle, a robot takes a snapshot of the current configuration of the grid in terms
of occupied nodes (\emph{Look}), then based on the perceived configuration, decides whether to stay put or to move to an adjacent node (\emph{Compute}), and in the later case makes an instantaneous move accordingly (\emph{Move}). The robots have \emph{weak multiplicity detection} capability, which enables them to detect if a node is empty or occupied by a single robot or by multiple robots. The robots are \emph{asynchronous}, \emph{oblivious}, \emph{anonymous}, can not communicate with each other and execute the same distributed algorithm. In a faulty system, however, any robot can \emph{crash}, which means that it becomes completely inactive and does not take part in the process any further. In that case a fault-tolerant gathering algorithm is an algorithm that gathers all the non-faulty robots at a single node. This paper considers a faulty system that can have at most one crash fault. With these assumptions  deterministic fault-tolerant gathering algorithms are presented that gather all initial configurations that are gatherable in a non-faulty system, except for one specific configuration called the \emph{2S2 configuration}.

\end{abstract}

\begin{keyword}
Crash fault \sep Oblivious robots \sep Gathering algorithm \sep Asynchronous robots \sep Weak multiplicity detection \sep Anonymous graph \sep Look-Compute-Move cycle \sep Fault-tolerant algorithm
% \MSC[2010] 00-01\sep  99-00
\end{keyword}

\end{frontmatter}

\section{Introduction}\label{intro}

Robot swarms are a distributed system of autonomous mobile robots that collaboratively execute some complex tasks. Swarms of low-cost, weak, simple robots are emerging as a viable alternative to using a single powerful and expensive robot. Using swarm robot systems is particularly appealing while dealing with large-scale tasks in hostile or hazardous environments, as they can be more resilient to malfunctions or faults. Therefore devising algorithms for different complex tasks for a system of robots with  a minimal set of capabilities has recently received much attention from the distributed computing community. Considering the problems in fault-prone systems is particularly challenging. A comprehensive survey of different practical applications and research problems in multiple robot systems can be found in \cite{Cao97, Flocchini12}.

In swarm robot systems, \emph{gathering} is one of the most fundamental and widely studied problems. The problem is to devise a distributed algorithm that allows a system of weak robots, initially situated at different locations, to gather at some unspecified point within finite time and remain there. Many variants of the problem has been considered in literature with different assumptions on the robot capabilities as well as the underlying terrain on which they move. In this paper, we have considered a system of asynchronous and oblivious robots deployed on an anonymous finite grid. The robots are also prone to crash faults. The basic framework of our robotic system is described in detail in the following subsection.

\subsection{Basic model}

A system of $k$ ($k > 2$) robots is initially randomly deployed on the nodes of an \emph{anonymous} graph. By anonymous, it means neither the nodes nor the edges are labeled. In this paper, the input graph will be an $m \times n$ $(m \leq n)$ finite undirected grid embedded in the Euclidean plane.  The robots are assumed to be fully \emph{oblivious}, meaning that they have no memory of past configurations and previous actions. The robots are \emph{uniform} in the sense that they execute the same deterministic algorithm, and \emph{anonymous} in the sense that they are indistinguishable by their appearance and do not have any kind of identifiers. The robots are completely \emph{autonomous}, meaning there is no central control, no common coordinate system and no agreement on directions or chirality.  Furthermore, there are no means of communication between the robots. 

The robots have \emph{unlimited visibility} or \emph{global visibility}, that is they are able to sense the entire grid. The robots are assumed to be dimensionless, treated as points and do not obstruct the visibility or movement of any other robot. The robots are equipped with \emph{weak multiplicity detection} capability, which enables a robot to sense whether a node is occupied by a single robot or multiple robots. However, they can not ascertain the exact number of robots at a node occupied by multiple robots. We assume that initially there are no multiplicities, i.e., all the robots are initially located at distinct nodes. 

The robots, when active, operate according to the so-called \emph{LOOK-COMPUTE-MOVE} cycle. In each cycle, a previously idle or inactive robot wakes up and executes the following steps:

\begin{description}
 \item[LOOK:] The robot takes a snapshot of the current configuration of the grid. The configuration perceived by the robot is returned in form of an $m \times n$ matrix whose elements are from the set $\{0, 1, 2\}$, where 0 represents an empty node, 1 represents a singly occupied node and 2 represents a node occupied by multiple robots. We shall refer to a singly occupied node as a \emph{singleton} and a node occupied by multiple robots as \emph{multiplicity}. Since the grid is anonymous and there is no agreement on direction, a perceived configuration can be represented by many matrices. In particular, if $m = n$, there can be 8 different matrices satisfying the same configuration. 
 
 \item[COMPUTE:] Based on the perceived configuration, the robot performs computations according to a deterministic algorithm to decide whether to stay idle or to move to an adjacent node. As mentioned earlier, the deterministic algorithm is the same for all robots.
 
 \item[MOVE:] Based on the outcome of the algorithm the robot either remains stationary or makes an instantaneous move to an adjacent node. Since the moves are instantaneous, it implies that the robots are always seen on nodes, not on edges.
\end{description}

After executing a LOOK-COMPUTE-MOVE cycle, a robot becomes inactive. A robot may remain inactive or idle indefinitely, before waking up again to perform another LOOK-COMPUTE-MOVE cycle. However, unless the robot has crashed it does not remain inactive for infinite amount of time. We assume that the system of robots is \emph{fully asynchronous}, which means that the amount of time spent in LOOK, COMPUTE, MOVE and inactive states is finite (unless the robot has crashed) but unbounded, unpredictable and not same for different robots. As a result, the robots do not have
a common notion of time. Also the configuration perceived by a robot during the LOOK phase may significantly change before it actually makes a move. 

The robots are susceptible to \emph{crash-faults}. In a crash-prone system a robot can stop functioning at any time. It becomes completely inactive and does not take part in the process any further. However the robot is still physically present in the network, and can be perceived by all other non-faulty robots, but can not be identified as a crashed robot. In our system, we have assumed that at most one robot can crash. Our problem is to devise a distributed algorithm that gathers all the non-faulty robots at a single node.

\subsection{Related works}

The gathering problem has been extensively studied in continuous domain under various assumptions \cite{Cieliebak03, Cieliebak12,  Prencipe07, Flocchini05, Czyzowicz09, Lin07, Lin207}. In discrete domains the problem has been studied in different graph topologies.
The problem of gathering two robots on an anonymous ring was studied in \cite{Kranakis03, De06, Dessmark06}. The main difficulty of gathering problems is that the robots have to break symmetry to agree on a common meeting location. In \cite{Kranakis03} tokens were used to break symmetry, and in \cite{De06, Dessmark06} robots were assumed to have distinct identifiers. In \cite{Dessmark06} the robots move in synchronous steps, while in \cite{De06} the robots are asynchronous, but are allowed to meet inside an edge. The problem for $k \geq 2$ robots was considered in \cite{Flocchini04}, where the robots have memory and are allowed to use identical stationary tokens. The problem was first considered in a very minimal setting in \cite{Klasing08}, where the robots were assumed be identical, asynchronous, memoryless and without tokens or any kind of communication capability. They proved that without multiplicity detection gathering is impossible on rings for $k \geq 2$ robots. With weak multiplicity detection capability, they solved the problem for all configurations with an odd number of robots, and all the asymmetric configurations with an even number of robots by different algorithms. In \cite{Klasing10}, symmetric configurations with an even number of robots were studied, and the problem was solved when the number of robots is greater than 18. These left open the gatherable symmetric configurations with an even number of robots between 4 and 18, as the case of just 2 robots is ungatherable \cite{Klasing08}. Some of these configurations were solved in \cite{D11, Koren10, Haba08} in separate algorithms. In \cite{D14} a single unified algorithm was proposed that achieves gathering for all gatherable initial configurations except some potentially gatherable configurations with 4 robots. The problem was studied with weak local multiplicity detection in \cite{Izumi10, Kamei11, Kamei12}. A complete characterization of the problem on ring with local weak multiplicity detection has been provided in \cite{D14mini}. On finite grids, a full characterization of all gatherable configurations was given in \cite{Navarra16}. Furthermore, they showed that on these configurations gathering can be achieved even without any multiplicity detection capability. In \cite{Di17}, a full characterization of optimal gathering in infinite grid with global strong multiplicity detection was presented. The problem was investigated in \cite{Guilbault13} in asynchronous setting on regular bipartite graphs with weak multiplicity detection and limited visibility, i.e., the robots are only able to observe their neighboring nodes.

The gathering problem was first studied in fault-prone environment by Agmon and Peleg \cite{Agmon06}. They proposed an algorithm that solves the gathering problem in the plane in \emph{semi-synchronous} model in presence of at most one crash fault. Furthermore they proved that there exists no deterministic gathering algorithm in the semi-synchronous model that can tolerate a \emph{Byzantine robot}.  
A Byzantine robot is a faulty robot that behaves arbitrarily. They also showed that the gathering problem can be solved in the \emph{fully synchronous} model for a system of $n$ robots with $f$ Byzantine robots if $n \geq 3f + 1$. Fault-tolerant gathering algorithms in continuous domain have been studied for different types of faults under various assumptions \cite{Defago06, Bouzid13, Izumi12, bhagat16}. Fault-tolerant gathering on graphs were studied in \cite{Dieudonne14, Bouchard16, Chalopin16, Pelc17}. In \cite{Chalopin16} the agents were subject to \emph{delay faults}: the adversary delays the move of an agent for a few rounds. The problem in presence of Byzantine robots in synchronous setting was studied in \cite{Dieudonne14, Bouchard16}. The problem in presence of crash faults in asynchronous setting was first studied in the recent work by Andrzej Pelc in \cite{Pelc17}. Gathering algorithms using faulty tokens were studied in \cite{flocchini04token, das08}.  In \cite{das16}, a different type of fault was considered: faults were modeled as a malicious mobile agent that can block the path of the non-faulty agents and prevent them from gathering.

% The problem of gathering two robots is usually called \emph{rendezvous problem} \cite{dessmark2006deterministic, chalopin2010rendezvous, czyzowicz2012meet}.

\subsection{Our contribution}

In this work we have considered the gathering problem for a system of asynchronous, oblivious, anonymous robots on an anonymous finite grid where there can be at most one crash fault. To the best of our knowledge the only work till date on gathering on graphs in presence of crash faults is the the recent work by Andrzej Pelc in \cite{Pelc17}. However the model is different from ours. In \cite{Pelc17}, the movements are not instantaneous, the robots have different labels, infinite memory, can exchange information with other robots when they meet on a node or an edge and can wait for a time of its choice at any node. In terms of the model, the work closest to ours is \cite{Navarra16} where it is shown that any non-partitive configuration of anonymous, asynchronous, oblivious robots on a finite grid is gatherable without any multiplicity detection. But in their algorithms there are many configurations where at most one robot is allowed to move at a time. Hence in those situations the algorithm can not survive a crash fault. The algorithm proposed in \cite{Di17} for infinite grids also works for finite grids. However it requires strong multiplicity detection capability. In this paper we have addressed the problem where at most one robot can suffer a crash fault. Assuming weak multiplicity detection capability, we have devised fault-tolerant deterministic gathering algorithms for all configurations that are gatherable in non-faulty systems, except the 2S2 configuration. 

\section{Some basic results on feasibility of gathering}

% The system consists of $n$ simple, autonomous, anonymous, oblivious robots. The robots are represented as points in the  grid lying in the Euclidean space. They do not agree on any global coordinate system while they have their own local  coordinate system. They do not share any common chirality or communication. We assume that initially each node is occupied by atmost one robot. Every robot has full visibility range. We consider the \textit{ASYNC} model with additional assumptions. Movement of the robots are instanteneous i.e.  robots do not have a snapshot where a robot is on an edge. A robot can develop crash fault at any stage of execution but at most one robot have crash fault. Actually we consider the $(n,1)$ crash fault model. They are not equiped with any form of multiplicity detection capability.
% 
% Now we provide the basic definations for understanding the proposed results.
% 

In this section we present some basic results on feasibility of gathering on general graphs. For a detailed exposition the readers are referred to \cite{Navarra17}. Assume that a set of robots is randomly deployed on the nodes of a simple undirected connected graph $G=(V,E)$, with vertex set $V$ and edge set $E$. The configuration of the robots on the graph can be represented by the pair $(G,f)$, where $f:V \longrightarrow \{0, 1, 2\}$ is a function defined as, 
\[
  f(v) =
  \begin{cases}
    0 & \text{if $v$ is an empty node} \\
    1 & \text{if $v$ is a singleton} \\
    2 & \text{if $v$ is a multiplicity} \\
   
  \end{cases}
 \]

% Our underlying topology is a simple undirected graph $G=(V,E)$, with vertex set $V$ and edge set $E$. Define a function $f:V\longrightarrow\mathbb{N}$, where the value of the function at each point of the domain represents the number of robots on each vertex of $G$. We call $(G,f)$ a configuration whenever $\sum_{v\in V} f(v)$ is bounded and greater than zero. A configuration is final if all the robots are on a single vertex $u$. Algebraically $f(u)>0$ and $f(v)=0$, $\forall v\in V-{u}$. The distance $d(u,v)$ between two vertices $u, v$ in $V$ is the number of edges of a shortest path connecting $u$ to $v$.

% An \emph{isomorphism} between two graphs $G=(V_G,E_G)$ and $H=(V_H,E_H)$ is a bijection $\varphi: V_G\longrightarrow V_H$ such that $(u,v) \in E_G$ iff $(\varphi(u),\varphi(v)) \in E_H$. Two graphs $G=(V_G,E_G)$ and $H=(V_H,E_H)$ are said to be \emph{isomorphic} there exists an isomorphism between them. An \emph{automorphism} on a graph $G$ is an isomorphism from $G$ to itself. The set of all automorphisms of $G$ forms a group called  the \emph{automorphism group} of $G$ and is denoted by $Aut(G)$. These definition can be naturally extended for robot configurations on graphs. 

  An \emph{automorphism} on a configuration $(G,f)$ is a bijection $\varphi : V \longrightarrow V$ such that for all $u, v \in V$, 1) $u, v$ are adjacent if and only if $\varphi(u), \varphi(v)$ are adjacent, 2) $f(v)=f(\varphi(v))$. The set of all automorphisms of $(G,f)$ forms a group called the \emph{automorphism group} of $(G,f)$, denoted by $Aut(G,f)$. If $|Aut(G,f)|=1$, we say that $(G,f)$ is \emph{asymmetric}, otherwise it is \emph{symmetric}.

For an automorphism $\varphi \in Aut(G,f)$, let $<\varphi> \subseteq Aut(G,f)$ be the cyclic subgroup generated by $\varphi$. Elements of this groups are $\{\varphi^0, \varphi^1, \varphi^2,\dots, \varphi^{p-1}\}$, where $\varphi^0$ is the identity, $\varphi^k=\underbrace{\varphi\circ\varphi\circ\cdots\circ\varphi}_\text{$k$ times}$ and $p$ is the order of $\varphi$. If $H$ is a subgroup of $Aut(G,f)$, the \emph{orbit of a vertex $v \in V$ under the action of $H$} is the set $H_v$= \{$\sigma(v) | \sigma\in H$\}.

\begin{description}
 \item [Partitive configuration:] Let $\mathcal{C} = ((V, E),f)$ be a configuration. An automorphism $\varphi\in Aut(\mathcal{C})$ is said to be \emph{partitive} on $V'\subseteq V$ if $|H_u|=p$ for all $u\in V'$, where $p>1$ is the order of $\varphi$ and $H=\{\varphi^0, \varphi^1, \varphi^2,\dots, \varphi^{p-1}\}$. $\mathcal{C}$ will be called a \emph{partitive configuration} if there is a $\varphi\in Aut(\mathcal{C})$ partitive on $V$.
 
\end{description}

% A configuration is \textit{periodic} if it is invariant with respect to rotations of $90$ or $180$ degrees, where the rotation point coincides with the geometric center of the grid.
% 
% A configuration is \textit{symmetric} if it is invariant after a reflection with respect to a vertical, horizontal, or diagonal (in case of square grids) axis passing through the geometric center of the grid.
% 
% 
% 
% Now as we know the orbits $H_u$ form a partition of $V$. We can associate an equivalence relation on $V$ by, $x\sim y$ iff there exits a $\sigma\in H$ with $\sigma(x)=y$. The orbits become the equivalence classes under this relation. Two elements $x$ and $y$ are equivalent iff $H_x=H_y$. Moreover, $f(u)=f(v)$ whenever $u\sim v$. Hence an algorithm treats two equivalent vertices as same. 

\begin{theorem}[\cite{Navarra17}] Let $\mathcal{C}=((V, E), f)$ be a non-final configuration. If there exists a $\varphi\in Aut(\mathcal{C})$ partitive on $V$ then $\mathcal{C}$ is not gatherable.
 
%  \item [Result 2] Given a configuration $\mathcal{C}=((V, E), f)$, if there exists an automorphism $\varphi\in Aut(\mathcal{C})$ that is partitive on $V-\{v\}$, with $f(v)=0$, then, any gathering algorithm can not assure the gathering on a vertex $u\neq v$ (Theorem $2$, \cite{Navarra13})
 
\end{theorem}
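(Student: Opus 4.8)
The plan is to argue by contradiction: given any purported deterministic gathering algorithm $\mathcal{A}$, I would construct a single \emph{fair} execution of $\mathcal{A}$ along which the configuration keeps $\varphi$ as an automorphism forever, and then observe that no such configuration can be final (i.e.\ can have all robots on one node). First I would isolate the elementary fact that a partitive automorphism is fixed-point-free: if $\varphi$ has order $p>1$ and $|H_u|=p$ for every $u\in V$ (where $H=\{\varphi^0,\dots,\varphi^{p-1}\}$), then $\varphi(u)\neq u$ for all $u$. Consequently, no configuration $\mathcal{C}'=((V,E),f')$ with $\varphi\in Aut(\mathcal{C}')$ is final: a final configuration has a unique non-empty node $v$, every automorphism must preserve $f'$ and hence fix $v$, but $\varphi(v)\neq v$. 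So it suffices to produce a fair execution of $\mathcal{A}$ in which every reachable configuration admits $\varphi$ as an automorphism.

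For the execution I would use a fully synchronous schedule in rounds: in each round, all robots perform \emph{Look} simultaneously, then all perform \emph{Compute}, then all perform \emph{Move}. This is a legitimate special case of the asynchronous adversary, and since every robot acts in every round it is fair. The invariant to maintain is that the configuration $C_t$ at the start of round $t$ satisfies $\varphi\in Aut(C_t)$; the base case is the hypothesis $\varphi\in Aut(\mathcal{C})$. For the inductive step, pick one representative $u_0$ in each orbit $\{u_0,\varphi(u_0),\dots,\varphi^{p-1}(u_0)\}$ of $H$ on $V$. When a robot at $u_0$ takes its snapshot, the adversary lets it adopt an arbitrary local coordinate frame $S$; every robot at $\varphi^j(u_0)$ is then made to adopt the frame $\varphi^j(S)$. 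This assignment is globally consistent precisely because $\varphi^p=\mathrm{id}$. Since $\varphi$ is a graph automorphism preserving $f$ (the invariant), the coordinatized snapshot obtained by a robot at $\varphi^j(u_0)$ — including its own position within it — is exactly the $\varphi^j$-image of the one obtained by a robot at $u_0$. As $\mathcal{A}$ is deterministic and reads nothing beyond this coordinatized snapshot, a robot at $u_0$ that moves to some $x\in\{u_0\}\cup N(u_0)$ forces every robot at $\varphi^j(u_0)$ to move to $\varphi^j(x)$. Hence the permutation $\varphi$ carries the multiset of robot destinations at each node to that at its $\varphi$-image, which gives $f_{t+1}(\varphi(w))=f_{t+1}(w)$ for every $w$; combined with $\varphi$ being a graph automorphism, this yields $\varphi\in Aut(C_{t+1})$.

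By induction $\varphi\in Aut(C_t)$ for all $t$, so by the first paragraph no $C_t$ is final; thus $\mathcal{A}$ never gathers the robots along this fair execution. As $\mathcal{A}$ was arbitrary, $\mathcal{C}$ is not gatherable. (Note that the hypothesis that $\mathcal{C}$ is non-final is in fact automatic here, since a configuration with a partitive automorphism cannot be final.)

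The step requiring the most care is the inductive maintenance of the invariant: one must justify that the adversary's freedom to choose each robot's local frame (modelling the absence of a common coordinate system and chirality) can be exercised $\varphi$-coherently around each orbit — this is exactly where $\varphi^p=\mathrm{id}$ enters — and that a deterministic, oblivious, anonymous algorithm is thereby forced to produce $\varphi$-symmetric moves, including the degenerate case in which several robots occupy a common multiplicity node and therefore see literally the same pointed snapshot. Everything else (fairness of the schedule, the fixed-point argument, the passage to "not gatherable") is routine.
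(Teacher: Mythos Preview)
The paper does not actually prove this theorem: it is quoted verbatim from \cite{Navarra17} and no argument is given in the present paper. So there is nothing to compare your approach against here.

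That said, your proof is the standard symmetry-preserving-adversary argument and is correct. The key ingredients---(i) a partitive $\varphi$ is fixed-point-free, hence no $\varphi$-symmetric configuration is final; (ii) under a fully synchronous schedule the adversary can assign local frames $\varphi$-coherently around each orbit (well-defined because $\varphi^{p}=\mathrm{id}$), so that anonymous, oblivious, deterministic robots at $\varphi$-equivalent nodes receive identical pointed snapshots and hence output $\varphi$-equivalent moves; (iii) this propagates $\varphi\in Aut(C_t)$ inductively---are all present and correctly handled, including the multiplicity case. Your parenthetical remark that the ``non-final'' hypothesis is automatic is also correct.
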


In case of a configuration on a finite grid it is easy to show that it can have two types of symmetries (non-trivial automorphisms): 1) reflection, defined by an axis of reflection which acts as a mirror, and 2) rotation, defined by a center and an angle of rotation.  The center of a rotation can be a vertex, or the center of an edge, or the center of the area surrounded by four vertices, whereas the angle of rotation can be $\frac{\pi}{2}$ or $\pi$. Reflection axis can pass through vertices or through the middle of edges. It is easy to see that a configuration on a finite grid is partitive if and only if it either has a reflectional symmetry having its axis not passing through any vertex or a rotational symmetry having its center not lying on a vertex. As a result we have the following theorem.

\begin{theorem}[\cite{Navarra16}] If a configuration $\mathcal{C}$ on a finite grid has a reflectional symmetry having its axis not passing through any vertex or a rotational symmetry having its center not lying on a vertex, then $\mathcal{C}$ is ungatherable.

\end{theorem}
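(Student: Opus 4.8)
The plan is to reduce this immediately to the above theorem of~\cite{Navarra17}: a non-final configuration admitting an automorphism partitive on $V$ is ungatherable. So it suffices to check two things about $\mathcal{C}$ under the stated hypothesis: (a) $\mathcal{C}$ is non-final, and (b) the hypothesised symmetry is in fact a partitive automorphism on the whole vertex set. No full classification of grid symmetries is needed for this direction — we are simply handed the relevant automorphism and only have to analyse its orbits.

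For (a), suppose $\mathcal{C}$ were final. Then there is a single occupied node $v^{*}$, which is the unique multiplicity, hence is fixed by every element of $Aut(\mathcal{C})$. But a reflection that fixes a vertex has that vertex on its axis, and a rotation that fixes a vertex is centred at that vertex; either possibility contradicts the assumption that the axis misses every vertex, resp.\ that the centre is not a vertex. Hence $\mathcal{C}$ is non-final and the theorem of~\cite{Navarra17} is applicable once partitivity is established.

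For (b), let $\varphi\in Aut(\mathcal{C})$ be the reflection or the rotation provided by the hypothesis, and put $H=\langle\varphi\rangle$, of order $p$; here $p=2$ if $\varphi$ is a reflection or a $\pi$-rotation, and $p=4$ if $\varphi$ is a $\frac{\pi}{2}$-rotation. For any vertex $u$, the orbit size $|H_{u}|$ divides $p$, and $|H_{u}|<p$ would force some power $\varphi^{k}$ with $0<k<p$ to fix $u$. Viewing $\varphi$ as an isometry of the Euclidean plane in which the grid is embedded, every such nontrivial power of $\varphi$ is again a reflection in the same axis (when $\varphi$ is a reflection, its only nontrivial power is $\varphi$ itself) or a nontrivial rotation about the same centre (when $\varphi$ is a rotation — in particular the square of a $\frac{\pi}{2}$-rotation is the $\pi$-rotation about that same centre). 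The fixed-point set of such a map is exactly the axis, resp.\ the single centre point, so $u$ must lie on the axis or coincide with the centre — both ruled out by hypothesis. Therefore $|H_{u}|=p$ for every $u\in V$, i.e.\ $\varphi$ is partitive on $V$, and the theorem of~\cite{Navarra17} yields that $\mathcal{C}$ is ungatherable.

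The only step that requires any care is this last fixed-point bookkeeping for the powers $\varphi^{k}$ — being precise that a reflection has no other nontrivial power and that the relevant rotations all share the same non-vertex centre, so that ``axis not through a vertex'' / ``centre not a vertex'' genuinely annihilates every nontrivial vertex stabiliser. Everything else is a routine application of the orbit–stabiliser relation together with the already-quoted fact that $\mathcal{C}$ is then partitive.
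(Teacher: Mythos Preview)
Your proof is correct and follows exactly the approach the paper indicates: the paper simply remarks that such a symmetry makes the configuration partitive and then invokes Theorem~1 (the result of~\cite{Navarra17}), saying only that ``it is easy to see'' this. You have supplied the orbit--stabiliser/fixed-point verification the paper leaves implicit, and you additionally check the non-final hypothesis of Theorem~1, which the paper does not mention explicitly; both are welcome but do not constitute a different route.
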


% It can be shown that a rotation can written as a composition of some reflections. Hence any symmetry on a configuration on a finite grid is either a reflection or a combination of reflections. It is easy to see that a symmetry on a finite grid is partitive if and only if it is either a reflection having its axis not passing through any vertex or a rotation having its center not lying on a vertex. 

\section{Gathering algorithms}

A fault-tolerant gathering algorithm requires to gather all the non-faulty robots at a single node. Hence in presence of a single crash fault, the final configuration may have exactly two occupied nodes: one singleton having the crashed robot and the other a multiplicity having all the non-faulty robots. Since the robots have weak multiplicity detection capability, they can identify this configuration and report a successful execution of a gathering algorithm. However the same configuration will also arise if all the robots except one have gathered at some node and the remaining one has not crashed, but is yet to reach the gathering point. In this case the algorithm should not terminate at this point. Hence whenever a configuration with one singleton and one multiplicity is formed, the robots at the multiplicity will terminate, while the robot at the singleton, if not faulty, will move towards the multiplicity. However for simplicity, in the remaining of the paper we shall say that \emph{gathering is accomplished} when one of the following configurations is created: 1) exactly one occupied node 2) two occupied nodes with one singleton and one multiplicity.
% 
% An important element of our gathering algorithm is the \emph{lexicographic sequence}.

In the remaining of the paper we shall always name our grid as $ABCD$ as shown in figure \ref{demo1}. We associate to each corner two strings or sequences with elements from $\{0,1,2\}$. The two sequences associated with $D$ will be denoted by $\lambda_{DA}$ and $\lambda_{DC}$. $\lambda_{DA}$  will be defined as the following. Scan the grid from $D$ along $DA$ to $A$ and sequentially all grid lines parallel to $DA$ in the same direction. For each node put a $0, 1$ or $2$ according to whether it is empty, singleton or a multiplicity. The string or sequence of length $mn$ thus obtained is the sequence $\lambda_{DA}$. For example, in figure \ref{demo1} the  sequence $\lambda_{DA}$ is $010102000000000100010220100000001011$. Similarly we can define the other seven sequences $\lambda_{AB}$, $\lambda_{AD}$, $\lambda_{BA}$, $\lambda_{BC}$, $\lambda_{CB}$, $\lambda_{CD}$, and $\lambda_{DC}$. Any two of these sequences can be compared using the lexicographical order or dictionary order. Similar types of strings or sequences are commonly used in gathering algorithms for different graph topologies like grids \cite{Navarra16}, rings \cite{D14}, trees \cite{Navarra16}.

% As the name suggests, any two of these sequences are compared using the lexicographical order or dictionary order

 \begin{figure}
\centering

    \def\svgwidth{0.5\textwidth}
    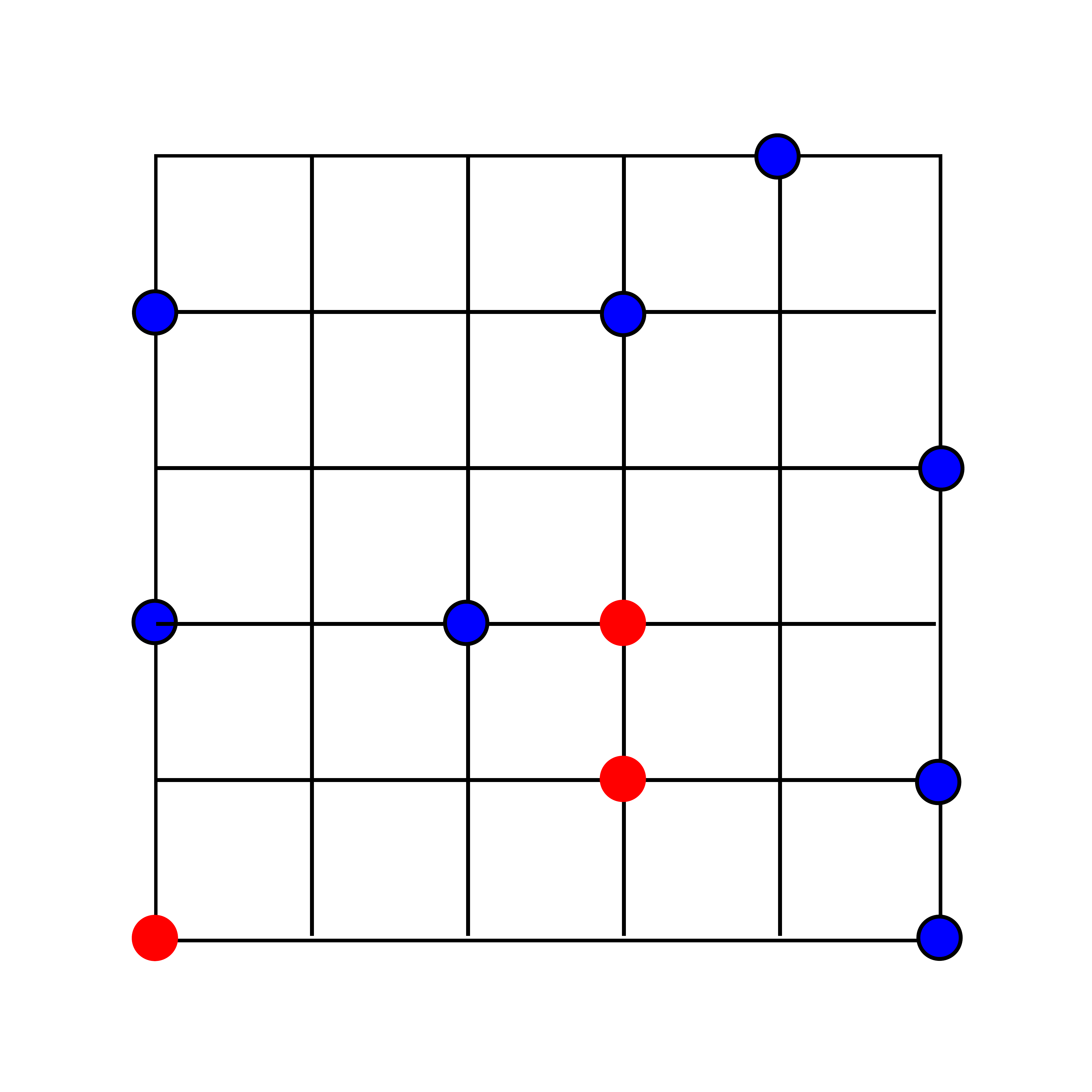

\caption{A configuration of robots on a grid. The blue nodes are singletons, and the red nodes are multiplicities.}
\label{demo1}
\end{figure}

We shall have different algorithms according to the structure of the grid, in particular, the length of the sides of the grid.
According to the length of the sides (the number of nodes on them), the grid can be classified into three types: odd $\times$ odd, even $\times$ odd and even $\times$ even. Algorithms for these cases are described in detail in the following subsections.

\subsection{Odd $\times$ odd grid}

In an odd $\times$ odd grid, the problem is trivially solvable. For any configuration of the robots, each of them can identify the unique central node of the grid regardless of their individual views. All the robots will be asked to move towards this node. This destination node will remain invariant under any movement of the robots, because it depends upon the very structure of the grid and not on the configuration of the robots. So all the non-faulty robots will eventually gather at the central node in finite time.

\subsection{Even $\times$ odd grid}

Without loss of generality assume that columns are of even length and the rows are odd. Consider the sequences $\lambda_{DA}$, $\lambda_{AD}$, $\lambda_{BC}$, and $\lambda_{CB}$ associated with the corners $D, A, B$ and $C$ respectively. If $\lambda_{DA} = \lambda_{AD}$ (which also implies $\lambda_{BC} = \lambda_{CB}$), the configuration is symmetric with the axis of symmetry passing through the middle of ${DA}$ and $CB$. Since ${DA}$ and $CB$ are even, the axis of symmetry is passing through edges. Thus the configuration is partitive and hence ungatherable. Similarly if $\lambda_{DA} = \lambda_{BC}$ (which implies $\lambda_{AD} = \lambda_{CB}$), then the configuration admits a $\pi$-rotational symmetry, and hence is partitive. If $\lambda_{DA} = \lambda_{CB}$ (implying $\lambda_{AD} = \lambda_{BC}$), the configuration is symmetric with the axis of symmetry passing through the middle of ${DC}$ and ${AB}$. Since ${DC}$ and ${AB}$ are odd, the axis of symmetry passes through nodes. Therefore this configuration is not partitive. So among the four sequences, at most two can be lexicographically largest. Furthermore, if there are two largest sequences, then they are associated to two corners of an odd side. Then based on the lexicographically largest sequence, we can choose between ${AB}$ and $DC$. Without loss of generality assume that ${DC}$ is associated with the largest sequence, i.e., $\lambda_{DA}$ or $\lambda_{CB}$ (or both) is the largest sequence. Now the grid can be divided into two equal halves by a line passing through the middle of the even sides of the grid. Call the half containing the side $DC$ the \emph{northern half}, and the one containing $AB$ the \emph{southern half}. We want to gather the robots at the central node of $DC$. The algorithm in this case is exactly the same as the one presented in \cite{Navarra16}, except for a minor modification to survive a potential crash fault. In \cite{Navarra16}, all the robots in the southern half are asked to move to the northern half until the southern half becomes completely empty. The robots in the northern half do not move until the southern half becomes completely empty. Once all the robots are in the northern half, they will all move to the center of the side $DC$. It can be proved that during the movements the northern half of the grid remains invariant, i.e., the largest sequence remains to be one of $\lambda_{DA}$ and $\lambda_{CB}$. However in the first phase of the algorithm, if a robot in the southern half crashes then the robots in the northern half will remain in wait mode eternally. This problem can be easily fixed by simply asking the robots in northern half to wait until all but one robot in the southern half reach the northern half. Thus we have the following theorem. 

\begin{theorem}
 All non-partitive configurations on even $\times$ odd grids are gatherable despite at most one crash fault.
\end{theorem}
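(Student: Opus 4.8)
The plan is to reduce the statement to the symmetry classification of even $\times$ odd grids and then import the fault-free algorithm of \cite{Navarra16} with the modification described above. First I would fix the convention that the sides $DA$ and $CB$ have even length while $DC$ and $AB$ have odd length, and examine the four boundary sequences $\lambda_{DA}, \lambda_{AD}, \lambda_{BC}, \lambda_{CB}$. The core combinatorial step is the case analysis on equalities among them: $\lambda_{DA}=\lambda_{AD}$ forces a reflection whose axis is the perpendicular bisector of the even side $DA$, hence passes through edge midpoints, hence is partitive; $\lambda_{DA}=\lambda_{BC}$ forces a $\pi$-rotation about the grid center, hence is partitive; while $\lambda_{DA}=\lambda_{CB}$ forces a reflection about the perpendicular bisector of the odd side $DC$, whose axis passes through vertices, hence is \emph{not} partitive. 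Consequently, in a non-partitive configuration at most two of the four sequences can be lexicographically maximal, and when two are, they are the sequences rooted at the two ends of one odd side. This lets every robot, regardless of its individual view of the anonymous grid, unambiguously designate one odd side --- without loss of generality $DC$ --- as the target side and its central vertex $o$ as the target node; both choices depend only on the (movement-invariant) labelling given by the sequences.

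Next I would partition the grid along the line through the midpoints of $DA$ and $CB$ into the \emph{northern half} (containing $DC$) and the \emph{southern half} (containing $AB$), and run the two-phase protocol: in phase one, every robot strictly in the southern half steps one node northward as long as the southern half contains more than one robot, while robots already in the northern half stay put; once at most one robot remains south, phase two begins and every robot moves along a shortest path to $o$. Two things must be checked. (i) Correctness of the underlying algorithm of \cite{Navarra16}: throughout both phases the lexicographically largest sequence stays one of $\lambda_{DA}, \lambda_{CB}$, so the designated side $DC$ and node $o$ never change and no robot ever reverses direction, and eventually all robots are at $o$. (ii) Survival of one crash. If the crashed robot is in the southern half, the northern robots are instructed to wait only until all \emph{but one} southern robot has crossed, so they are never blocked; all non-faulty robots then finish phase one, converge to $o$ in phase two, and the crashed robot remains as an isolated singleton, giving a configuration with one multiplicity and one singleton, which by our convention counts as gathering accomplished. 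If the crashed robot is in the northern half (or crashes during phase two), $o$ is still a structurally determined, movement-invariant target, so every remaining non-faulty robot reaches $o$; if the crash happens to be at $o$ we obtain a single multiplicity, otherwise again the one-singleton--one-multiplicity configuration. In the fault-free run all robots end at $o$.

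The main obstacle I expect is (i): transporting the invariance argument of \cite{Navarra16} into the fully asynchronous setting, where a robot may compute on a snapshot taken before some other robot's move has been applied. One must verify that a pending northward step executed on a stale view can never produce a configuration in which a different corner becomes lexicographically maximal, and that no robot can simultaneously believe it should still step north (phase one) and that phase two has begun; the parity of the side lengths --- in particular the fact that $DC$ and $AB$ remain the two odd sides under every legal move --- is what rules these out. A secondary subtlety is the termination of the phase-one waiting condition: since the adversary picks which robot crashes and when, one argues that ``all but one southern robot has reached the northern half'' is eventually met in every case, because when the crashed robot is southern it can be taken to be exactly the one left behind, and when it is northern the southern robots keep moving unobstructed. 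Once these points are settled, combining the symmetry classification of the first paragraph with the modified protocol yields the theorem.
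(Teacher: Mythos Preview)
Your proposal is correct and follows essentially the same approach as the paper: the symmetry classification via equalities among $\lambda_{DA},\lambda_{AD},\lambda_{BC},\lambda_{CB}$, the choice of the odd side $DC$ and its center as target, the two-phase protocol imported from \cite{Navarra16}, and the single modification of relaxing the phase-one waiting condition to ``at most one robot left in the southern half'' are exactly what the paper does. Your write-up is in fact more explicit than the paper's own treatment, which simply states the modification and cites \cite{Navarra16} for the invariance of the northern half rather than discussing the asynchrony and pending-move issues you flag in (i).
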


\subsection{Even $\times$ even grid}\label{sec_even}

In the even $\times$ even case, we shall exclusively consider only square grids. The algorithm for a square grid can easily be modified for general rectangular grids. Rectangular grids with unequal sides are particularly easy to deal with as they do not admit any non-partitive symmetry and it is easier to fix a destination for gathering. In case of square grids we may have non-partitive symmetries when the axis of symmetry is a diagonal of the square. Breaking symmetries and devising algorithms in such cases are at times a complicated task. 

In case of an $n \times n$ even grid, the \emph{minimum enclosing square} or \emph{MES} is defined as the smallest square sub-grid, having the same geometric center as the original grid, that contains all the robots. The MES also is clearly even $\times$ even. Also the MES has at least one robot on the boundary. Based on whether the corners of the MES are occupied or not, we shall have five cases: no corners occupied, exactly one corner occupied, exactly two corners occupied, exactly three corners occupied and all four corners occupied. These cases will be discussed in detail in the following sections. We have devised our algorithms in such a way that the MES remains invariant while the robots move. Hence without loss of generality we shall assume that the original grid itself is the MES. This implies that there is at least one robot on the boundary of the grid.  

% \begin{algorithm}
%     \setstretch{1.1}
%     \SetKwInOut{Input}{Input}
%     \SetKwInOut{Output}{Output}
%     \SetKwProg{Fn}{Function}{}{}
%     \SetKwProg{Pr}{Procedure}{}{}
% 
% %     \underline{function Euclid} $(a,b)$\;
% %     \Input{Two nonnegative integers $a$ and $b$ A two-dimensional digital image is a grid of pixels, each assigned with gray values or color values. A two-dimensional digital image is giv}
% %     \Output{$\gcd(a,b)$}
% %     
% %     \nonl \emph{brief description the algorithm}
%     
%     \Pr{\textsc{Gather()}}{
%     
%     \While{gathering is not achieved}{
%     
%     
%     \uIf{(no corner is occupied)}{
%     \textsc{Move0}()\;
%     }
%     
%     \uElseIf{exactly one corner is occupied}{
%     \textsc{Move1}()\;
%     }
%     
%     \uElseIf{exactly two corners are occupied}{
%     \textsc{Move2}()\;
%     }
%     
%     \uElseIf{exactly three corners are occupied}{
%     \textsc{Move3}()\;
%     }
%     
%     \ElseIf{exactly four corners are occupied}{
%     \textsc{Move4}()\;
%     }
%     }
%     }
%       
%     \caption{\textsc{Gather}}\label{gather}
% \end{algorithm}

\subsubsection{Exactly one corner occupied}

% % First we define a particular type of configuration called \emph{critical configuration}: consider a configuration with exactly one corner of the MES occupied. Consider the two corners of the MES other than the occupied corner and its diagonally opposite on. If exactly one of the corners has exactly one adjacent node occupied with a singleton robot, we call the configuration a critical configuration.  
%  Here we should clarify some details regarding the pseudo-codes given in this paper. Also whenever we say `move towards some corner node', its means that the robot will be asked to make a move so that 1) its Manhattan distance from that corner is reduced, and 2) in doing so it does not move to any other corner.

\begin{theorem}\label{move0th}
 Any configuration on an even $\times$ even square grid with exactly one corner occupied is gatherable despite at most one crash fault.
\end{theorem}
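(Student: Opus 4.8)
The plan is to designate a unique, robot-independent destination on the boundary of the (assumed) MES and argue that this destination stays fixed throughout the execution, so that the standard "everyone walks to the target" argument survives a single crash. Since exactly one corner of the grid is occupied, that corner is canonically distinguishable from the other three: it is the unique corner carrying a robot. Call it $D$ (relabelling $ABCD$ so that $D$ is the occupied corner). I would take the destination to be $D$ itself. The key observation is that "exactly one corner occupied" together with "$D$ is that corner" is a predicate that every robot can evaluate from its snapshot, and it is preserved under any legal move of the algorithm, provided the algorithm never vacates $D$ and never causes a second corner to become occupied.

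The algorithm I would prescribe is: every robot not already at $D$ moves along a shortest grid path towards $D$, with a fixed tie-breaking rule for the choice of first step (e.g. move along the longer of its two coordinate-displacements to $D$, breaking a further tie by the lexicographically largest sequence among the relevant corner sequences $\lambda_{D\cdot}$). One must check that such a move never takes a robot into one of the other three corners $A,B,C$: a robot heading to $D$ along a monotone staircase can only enter a corner of the grid if it starts on the boundary line through that corner, and in that situation the shortest path runs along the edge directly into $D$, never through $A$, $B$, or $C$. Hence after every move the configuration still has exactly one occupied corner, namely $D$, so every non-crashed robot continues to compute $D$ as its destination. Also $D$ is never emptied: the robot(s) at $D$ are told to stay, and the only way $D$ could empty is if that robot crashes — but then the crashed robot remains physically at $D$, so $D$ is still occupied and still recognized as the unique occupied corner.

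With invariance of the target established, termination is routine: each non-faulty robot strictly decreases its grid-distance to $D$ every time it is activated and moves, the distance is a non-negative integer, and by fairness every non-faulty robot is activated infinitely often until it stops moving; asynchrony and stale snapshots do not hurt because the destination $D$ is the same in every snapshot any robot could ever take. Thus all non-faulty robots reach $D$ in finite time. If the crashed robot (if any) is not at $D$, the terminal configuration has one multiplicity at $D$ and one singleton at the crash site — which is exactly the "gathering accomplished" condition declared admissible in Section~3; if the crashed robot is at $D$ or there is no crash, the terminal configuration is a single occupied node.

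The main obstacle is the careful case analysis showing that the prescribed shortest-path move never creates a second occupied corner and never leaves $D$ — i.e. that the single-occupied-corner invariant is genuinely maintained under asynchronous, possibly-outdated moves. Everything else (distinguishability of $D$, monotone distance decrease, fault tolerance) is immediate once that invariant is in place; in particular, because the destination is determined by the corner structure and not by the fine configuration of the robots, the usual asynchrony pitfalls (a robot moving on the basis of an obsolete view) cannot cause it to head for the wrong place.
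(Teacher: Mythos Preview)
Your approach is essentially the paper's: designate the unique occupied corner as the target, have every robot reduce its Manhattan distance to it while never stepping onto another corner, and observe that the target is invariant so asynchrony and a single crash are harmless. The paper's own proof is two sentences to exactly this effect.

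One small slip in your justification: the sentence ``in that situation the shortest path runs along the edge directly into $D$'' only covers robots on the edges $DA$ and $DC$. A robot on $AB$ adjacent to $A$ \emph{does} have a monotone shortest path to $D$ that passes through $A$, so your stated argument does not rule that out. Your tie-breaking rule (move along the longer coordinate displacement first) in fact avoids $A$ in that case, but the sentence you wrote does not establish it. The paper sidesteps this entirely by simply \emph{stipulating} ``in doing so it does not move to any other corner'' as part of the move specification, rather than deriving corner-avoidance from a particular shortest-path rule; you could do the same and drop the flawed sentence.
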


\begin{proof}
 All the robots will be asked to move towards the occupied corner. A move by a robot is to be made in such a way that  1) its Manhattan distance from the occupied corner is reduced, and 2) in doing so it does not move to any other corner. Clearly the non-faulty nodes will gather at the initially occupied corner after finite time. $\square$
\end{proof}

% \begin{algorithm}[!htb]
%     \setstretch{1.1}
% %     \SetKwInOut{Input}{Input}
% %     \SetKwInOut{Output}{Output}
%     \SetKwProg{Fn}{Function}{}{}
%     \SetKwRepeat{Do}{do}{while}
% 
% \nonl \emph{}
%     
%     \nonl \Fn{\textsc{Move1}()}{
%     
%         
%     \If{ $r$ is not at a corner}{
%     
%     Wait() \;
%     
%     }
%     
%     \Else{
%     
%     Move towards the occupied corner. \;
%     
%     }
%     
%     
%     }
%     \caption{\textsc{Move1}}
%     \label{Move1}
% \end{algorithm}

\subsubsection{No corners occupied}

\begin{figure}[ht]
\centering
\subcaptionbox[Short Subcaption]{
       A purely asymmetric configuration \label{}
}
[
    0.3\textwidth % width of caption
]
{
    \def\svgwidth{0.3\textwidth}
    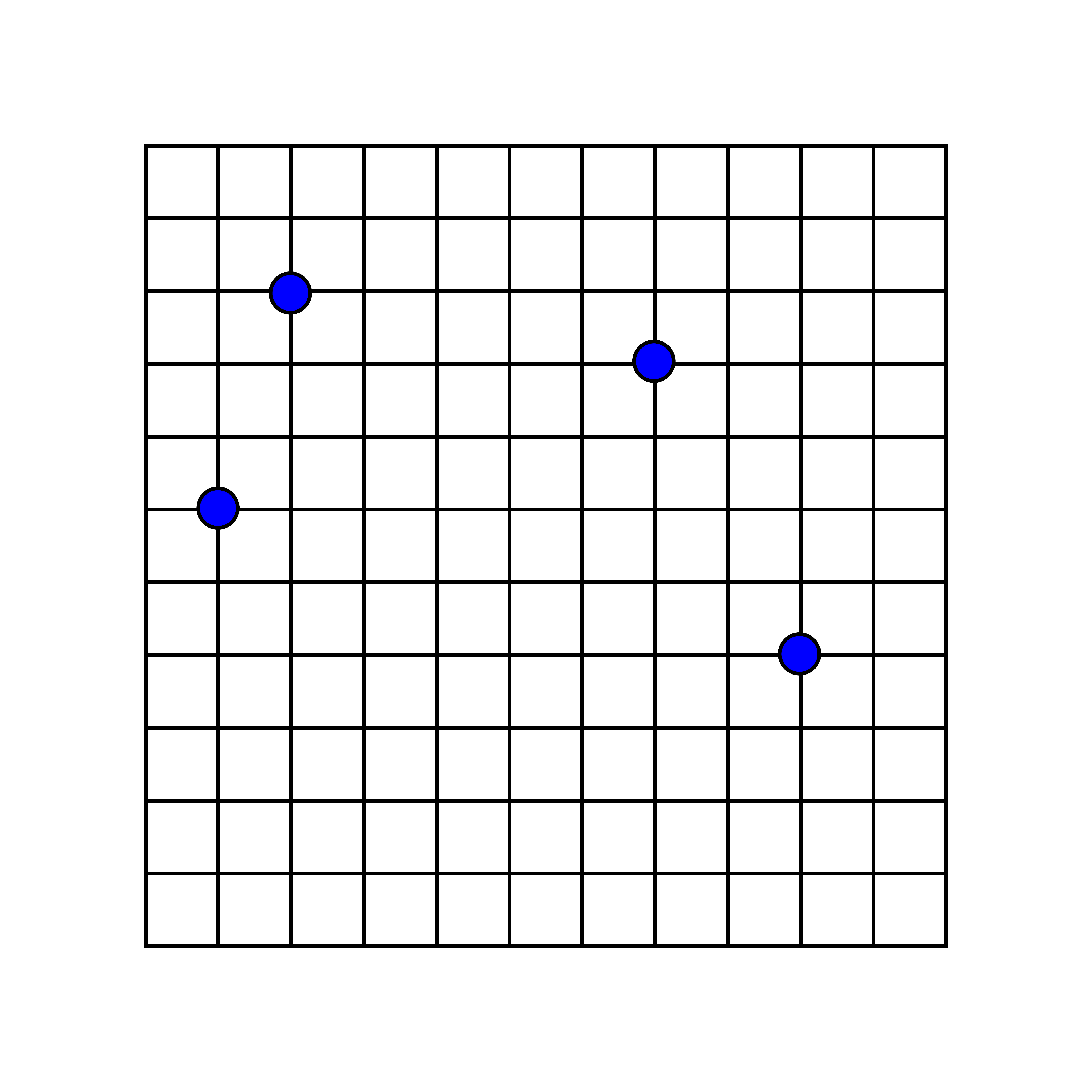
}
\hfill % seperation
\subcaptionbox[Short Subcaption]{
    A symmetric configuration of the first type \label{}
}
[
    0.3\textwidth % width of caption
]
{
    \def\svgwidth{0.3\textwidth}
    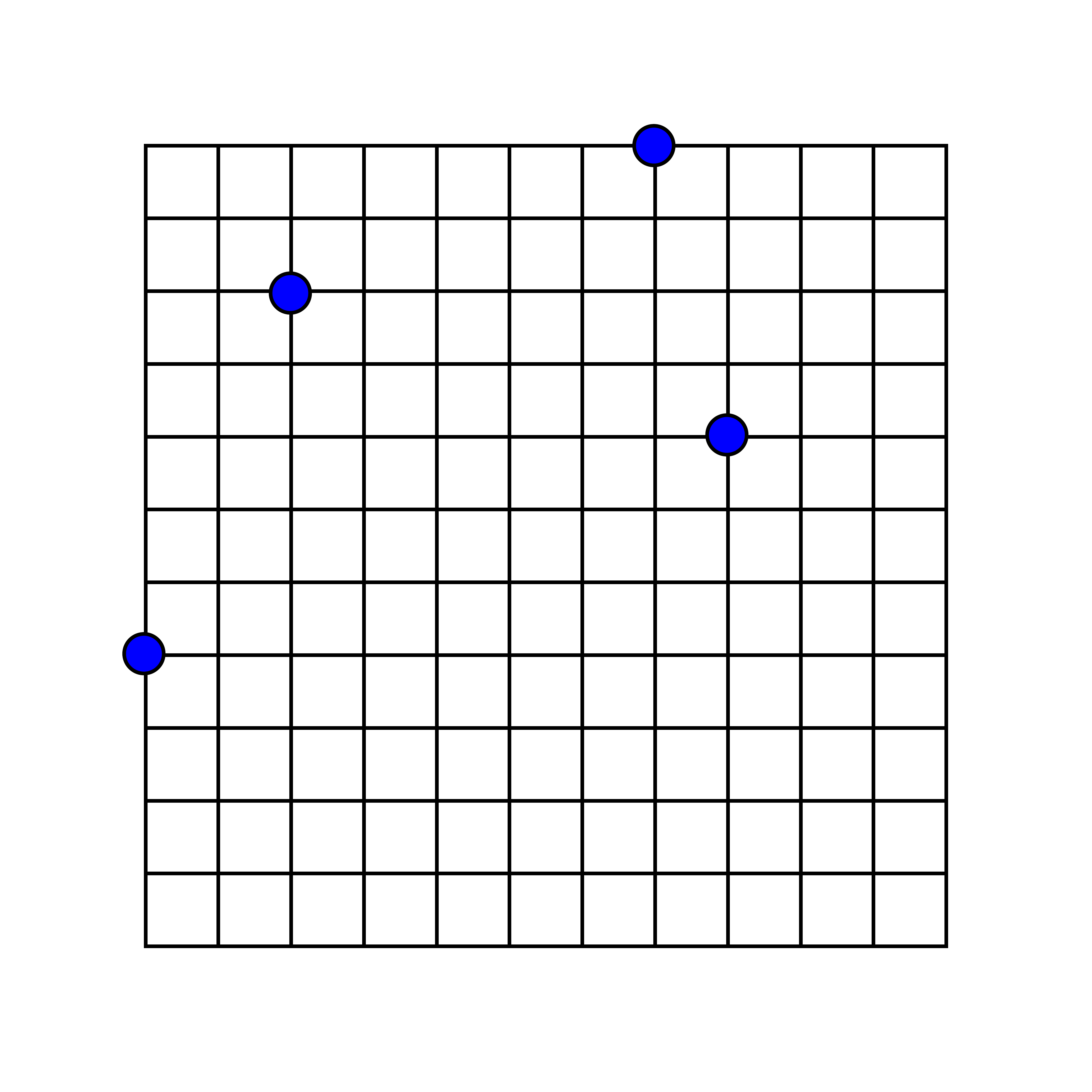
}
\hfill
\subcaptionbox[Short Subcaption]{
       An almost symmetric configuration of the first type \label{as1fig}
}
[
    0.3\textwidth % width of caption
]
{
    \def\svgwidth{0.3\textwidth}
    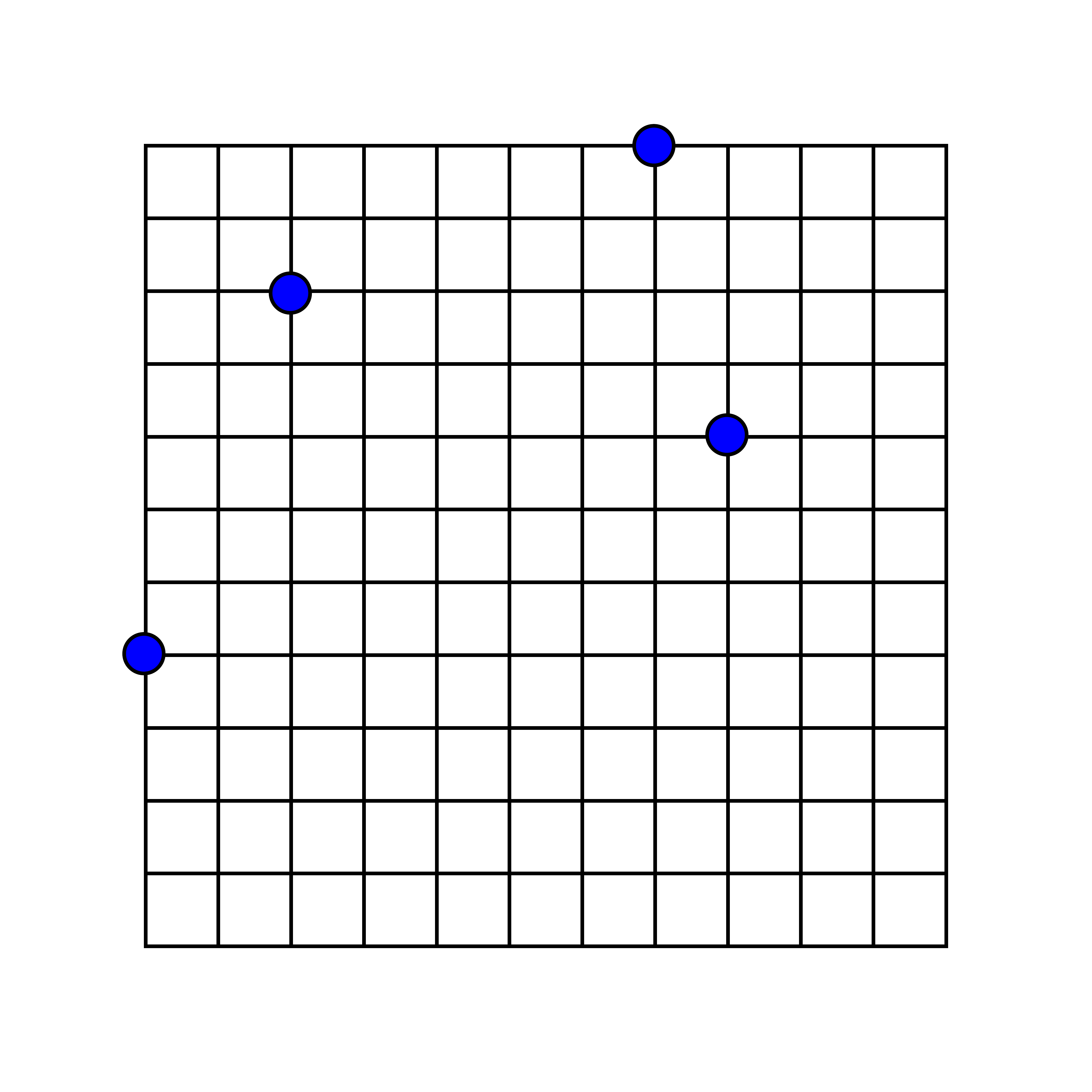
}
\\
\subcaptionbox[Short Subcaption]{
       A symmetric configuration of the second type \label{}
}
[
    0.3\textwidth % width of caption
]
{
    \def\svgwidth{0.3\textwidth}
    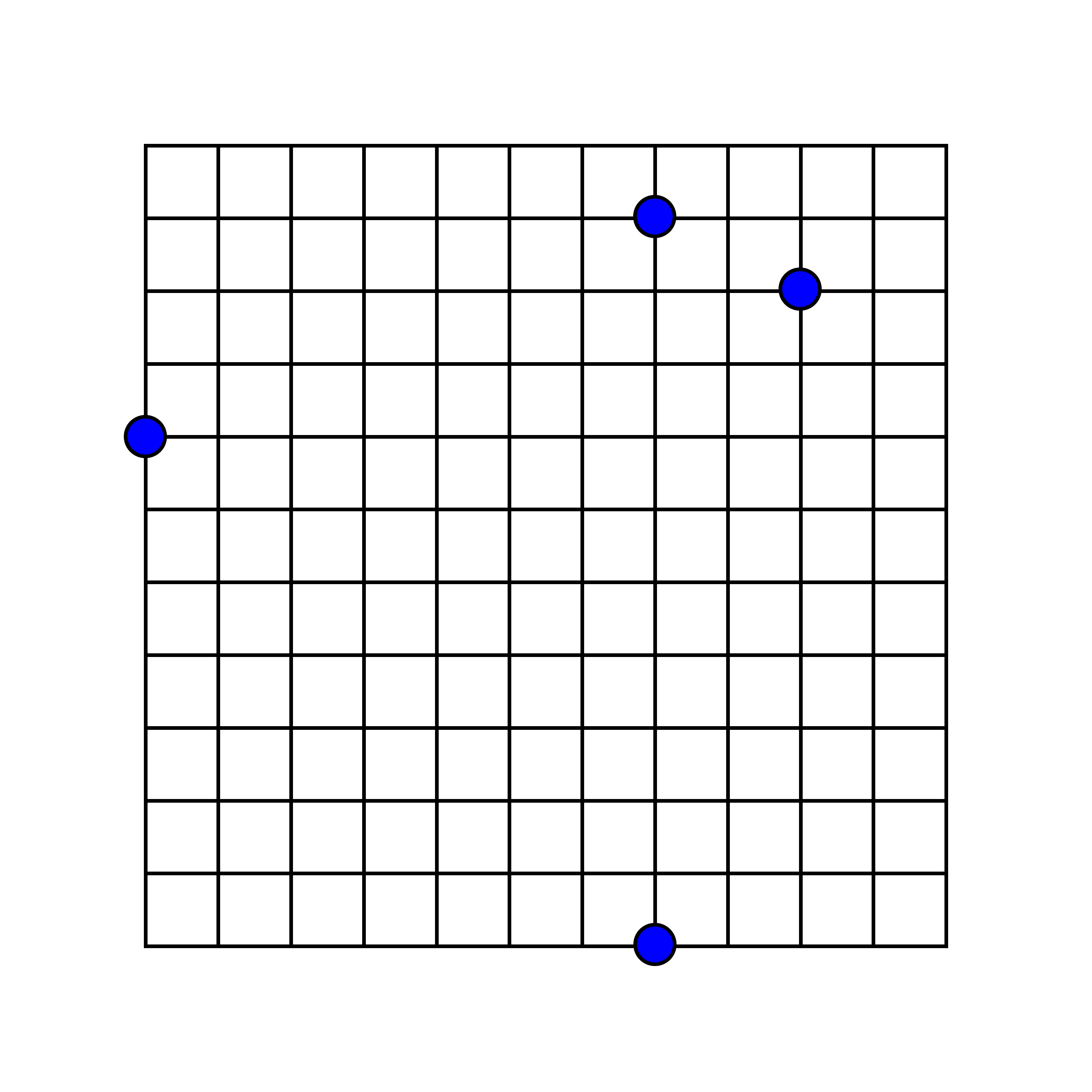
}
\hfill
\subcaptionbox[Short Subcaption]{
       An almost symmetric configuration of the second type \label{as2fig}
}
[
    0.3\textwidth % width of caption
]
{
    \def\svgwidth{0.3\textwidth}
    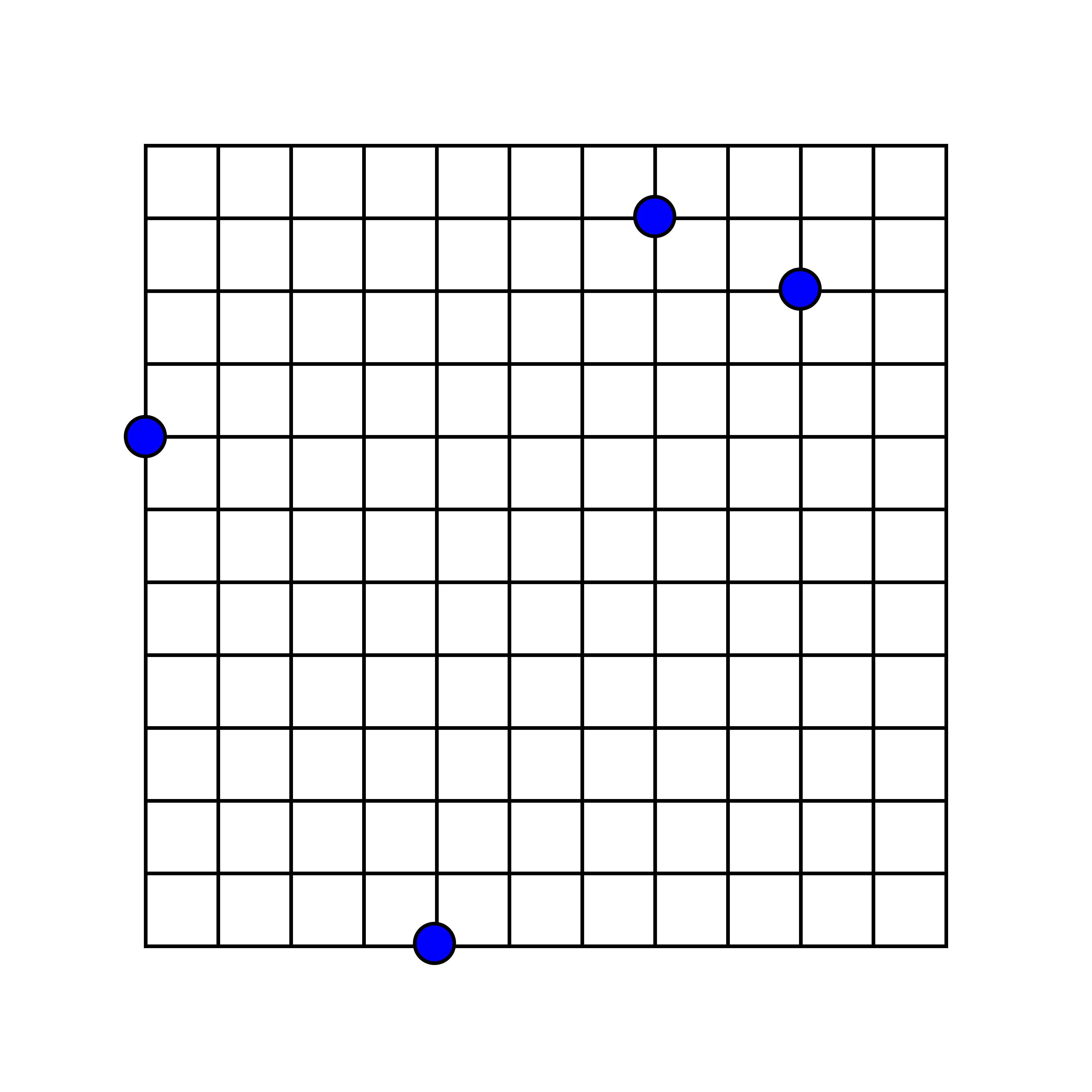
}
\hfill
\subcaptionbox[Short Subcaption]{
       A critical configuration \label{}
}
[
    0.3\textwidth % width of caption
]
{
    \def\svgwidth{0.3\textwidth}
    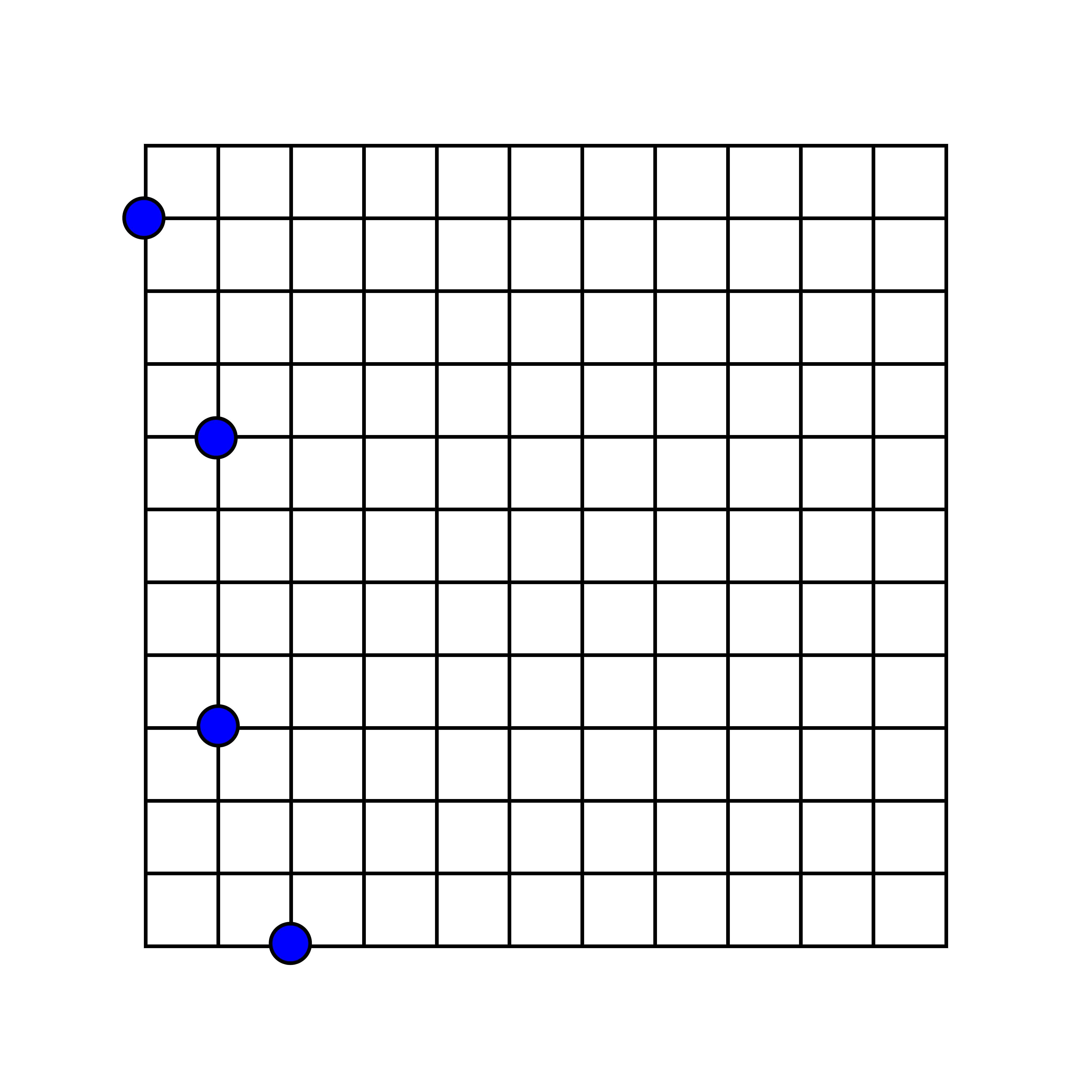
}

\caption[Short Caption]{Different types of configurations with no corners occupied. In each case $r_1$ and $r_2$ are the leading duo.}
\label{eg}
\end{figure}

Consider the case where none of the corners are occupied. However as mentioned earlier at least one robot is on the boundary of the grid. Now consider the sequences $\lambda_{AB}$, $\lambda_{AD}$, $\lambda_{BA}$, $\lambda_{BC}$, $\lambda_{CB}$, $\lambda_{CD}$, $\lambda_{DA}$, $\lambda_{DC}$. Suppose that $\lambda_{DA}$ is initially lexicographically largest. This implies that there must be at least one robot on $DA$. We shall call $D$ the \emph{largest corner}. Now we can have the following scenarios:

\begin{description}
 \item [Case 1]  $\lambda_{DA}$ is strictly the largest sequence. In that case the configuration is asymmetric. This is because if the configuration admits any symmetry, then $\lambda_{DA}$ must be equal to at least one other sequence.
 
 \item [Case 2] $\lambda_{DA} = \lambda_{DC}$. In this case the configuration is symmetric with the axis of symmetry  passing through $D$ and $B$.  This symmetry is not partitive. We call this symmetry as \emph{symmetry of the first type}.
 
 \item [Case 3] $\lambda_{DA} = \lambda_{AD}$. This implies that the configuration is symmetric with the axis of symmetry  passing through the middle of $DA$ and $CB$. Since the grid is even $\times$ even, the axis of symmetry does not pass though any node of the grid. Hence this is a partitive symmetry and thus ungatherable. 
 
 \item [Case 4] $\lambda_{DA} = \lambda_{CB}$. In this case the configuration is symmetric with the axis of symmetry  passing through the middle of $AB$ and $DC$, and hence partitive.
 
 \item [Case 5] $\lambda_{DA} = \lambda_{AB}$. In this case the configuration has a $\frac{\pi}{2}$-rotational symmetry. Similarly $\lambda_{DA} = \lambda_{CD}$  also leads to a $\frac{\pi}{2}$-rotational symmetry. This is a partitive symmetry, and hence the configuration is ungatherable.
 
%  there is an automorphism on the grid which is the composition of two reflections: one with axis passing through the middle of $AB$ and $DA,$ and another with axis passing through the corners A and C. This automorphism is also partitive. Similarly $\lambda_{DA} = \lambda_{CD}$  and $\lambda_{DA} = \lambda_{BC}$ also leads to a partitive configuration.
 
 \item [Case 6] $\lambda_{DA} = \lambda_{BA}$ leads to a symmetric configuration with the axis of symmetry passing through $A$ and $C$. This symmetry is not partitive. We call this symmetry as \emph{symmetry of the second type}. 
 
 \item [Case 7] $\lambda_{DA} = \lambda_{BC}$. In this case the configuration has a $\pi$-rotational symmetry, which is a partitive symmetry.
\end{description}

 If the initial configuration has no corners occupied, then our strategy would be to occupy exactly one corner in the first phase of the algorithm. Then in the second phase the rest of the robots would gather at the occupied corner as described in theorem \ref{move0th}. Since there could be one crash fault, we need to ask at least two robots to move in the first phase. So accordingly we have to specify two robots, which we shall call the \emph{leading duo}, to move towards a particular corner of the grid. The leading duo have to be defined differently for the different types of configurations that we have classified, namely asymmetric, symmetric of the first type and symmetric of the second type. But before that we need to distinguish some particular type of asymmetric configurations, called \emph{almost symmetric configurations}. We classify the asymmetric configurations in the three following types:
 
 \begin{description}
  \item [Almost symmetric of the first type:]  Consider an asymmetric configuration with no corners occupied and no multiplicities. Since the configuration is asymmetric, there is a unique lexicographically largest sequence, say $\lambda_{DA}$. Hence there is at least one robot on $DA$. We shall call the configuration \emph{almost symmetric of the first type}, if there is at least one robot on the boundary edge $DC$ and the two sequences $\lambda_{DA}$ and $\lambda_{DC}$ become equal if the leading non-zero terms of both sequences are changed to 0. See figure \ref{as1fig}. In this case the robots corresponding to the leading non-zero terms of the two sequences associated with the largest corner, i.e. $\lambda_{DA}$ and $\lambda_{DC}$, are called the \emph{robots impeding symmetry}. 
  
  \item [Almost symmetric of the second type:] Consider an asymmetric configuration with no corners  occupied and no multiplicities. Since the configuration is asymmetric, there is a unique corner, say $D$, with which the strictly largest sequence is associated. Consider $A$ and $C$, i.e., the two corners other than the one with the largest sequence and its diagonally opposite corner. Now there are four sequences associated with $A$ and $C$. Again since the configuration is asymmetric, we can choose between $A$ and $C$ depending on the largest of these four sequences. Suppose it is $A$. We shall call the configuration \emph{almost symmetric of the second type} if  the two sequences associated with $A$, namely $\lambda_{AD}$ and $\lambda_{AB}$, become equal if the leading non-zero terms of both sequences are changed to 0. See figure \ref{as2fig} for an example. In this case $A$ will be called the \emph{second largest corner}, and the robots corresponding to the leading non-zero terms of the two sequences associated with the second largest corner are called the \emph{robots impeding symmetry}. Again note that the robots impeding symmetry must lie on the border of the grid.
  
  \item [Purely asymmetric:] The asymmetric configurations that are not almost symmetric will be called \emph{purely asymmetric}.
 \end{description}

 Now we are in the position to define the \emph{leading duo}. 
 
 \begin{enumerate}
 
  \item The leading duo of a purely asymmetric configuration are 1) the robots on the first two occupied places of the largest sequence when the first occupied node is a singleton, 2) the robots on the first occupied node of the largest sequence when the first occupied node is a multiplicity. In the first case the robots of the leading duo will be called the \emph{first robot} and the \emph{second robot}, according to the order in which they appear in the largest sequence.
  
  \item If the configuration is symmetric of the first type then the leading duo are the robots corresponding to the first non-zero terms of the two sequences associated with the largest corner.
  
  \item If the configuration is symmetric of the second type, then there are two diagonally opposite largest corners. Since the configuration is not partitive, we can choose between the two other corners by the largest sequences among the four sequences associated with them. This corner will be called the \emph{second largest corner}. The leading duo are the robots corresponding to the first non-zero terms of the two sequences associated with the second largest corner.  
  
  \item If the configuration is almost symmetric (of either types), then the leading duo would be the robots impeding symmetry.
 \end{enumerate}

 We shall discuss the purely asymmetric case later. The algorithm for the other configurations are described in lines 12-17 of algorithm \ref{Move0}. Notice that there could be an ambiguity if a symmetric configuration is both almost symmetric of the first type and almost symmetric of the second type. But we shall prove in lemma \ref{disjoint} that this can not happen.

 \begin{lemma}\label{disjoint}
  An asymmetric configuration can not be both almost symmetric of the first type and almost symmetric of the second type.
 \end{lemma}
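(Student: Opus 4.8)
I argue by contradiction. Suppose $\mathcal{C}$ is an asymmetric configuration (no corners occupied, no multiplicities) that is simultaneously almost symmetric of the first type and of the second type. Being almost symmetric of the first type, $\mathcal{C}$ has a unique lexicographically largest sequence, say $\lambda_{DA}$ (WLOG, by relabelling corners); thus $D$ is the largest corner, there is a robot on edge $DA$ and a robot on edge $DC$, and the two robots impeding symmetry are the leading robots $r_1$ of $\lambda_{DA}$ (on $DA$) and $r_2$ of $\lambda_{DC}$ (on $DC$). The defining equation "modified $\lambda_{DA}=$ modified $\lambda_{DC}$" says precisely that, once $r_1$ and $r_2$ are deleted, the configuration is invariant under the reflection $\rho_1$ across the diagonal $DB$ (the unique non-trivial grid automorphism fixing $D,B$ and exchanging $A,C$). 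Since $\mathcal{C}$ itself is asymmetric, $r_1$ and $r_2$ cannot be exchanged by $\rho_1$; comparing the first $n$ entries of $\lambda_{DA}$ and $\lambda_{DC}$ (the two boundary edges at $D$) and using $\lambda_{DA}>\lambda_{DC}$, the leading robot $r_1$ is strictly closer to $D$ along $DA$ than $r_2$ is to $D$ along $DC$.

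The technical heart is to show that this near-$\rho_1$-symmetry forces the \emph{second} largest corner of $\mathcal{C}$ to be $C$, not $A$. One tracks how the four sequences $\lambda_{AD},\lambda_{AB},\lambda_{CD},\lambda_{CB}$ of $\mathcal{C}$ compare, using that $\mathcal{C}$ agrees with its $\rho_1$-image everywhere except for the deletion of $r_1$ and the insertion of $\rho_1(r_2)$ (both on $DA$). For a genuinely $\rho_1$-symmetric configuration these four sequences would split into the mirror pairs $\lambda_{AD}\leftrightarrow\lambda_{CD}$ and $\lambda_{AB}\leftrightarrow\lambda_{CB}$; because the single localized defect sits near $D$ and because $r_2$ lies closer to $D$ than $r_1$, in the scans the defect is first witnessed on the $C$-side. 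A careful check of the first position at which $\lambda_{AD}$ and $\lambda_{CD}$ (respectively $\lambda_{AB}$ and $\lambda_{CB}$) differ shows that position is occupied in the $C$-sequence and empty in the $A$-sequence, hence $\lambda_{CD}>\lambda_{AD}$ and $\lambda_{CB}>\lambda_{AB}$. So the largest of the four sequences attached to the two non-$D$, non-$B$ corners is one of $C$'s, i.e. the second largest corner is $C$.

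Now invoke the second-type hypothesis. Since the second largest corner is $C$, almost symmetry of the second type means that after deleting the leading robots $s_1$ of $\lambda_{CD}$ and $s_2$ of $\lambda_{CB}$ (both incident to $C$) the configuration becomes invariant under the reflection $\rho_2$ across the diagonal $CA$ (fixing $A,C$, exchanging $D,B$); note that $r_2$ and $s_1$ both lie on the common boundary edge $DC$, at its two ends. Re-running the argument of the previous paragraph with $C$ in the role of the "largest" corner, $\rho_2$ in place of $\rho_1$, and $\{D,B\}$ in place of $\{A,C\}$: if the winning sequence at $C$ is $\lambda_{CD}$ one gets at once that the largest of the four sequences at $D$ and $B$ is one of $B$'s, i.e. $\max\{\lambda_{BA},\lambda_{BC}\}>\max\{\lambda_{DA},\lambda_{DC}\}=\lambda_{DA}$, contradicting the maximality of $\lambda_{DA}$. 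In the remaining case (the winning sequence at $C$ is $\lambda_{CB}$) one combines the two near-symmetries along the shared edge $DC$: the first-type equation ties the robot pattern on $DC$ minus $r_2$ to that on $DA$ minus $r_1$, while the second-type equation ties the pattern on $DC$ minus $s_1$ to that on $CB$ minus $s_2$; feeding these relations, together with $\lambda_{CB}>\lambda_{AB}$ and the analogous $\lambda_{BC}>\lambda_{BA}$ obtained above, back into the comparison of all eight sequences forces $\mathcal{C}$ to be invariant under one of $\rho_1,\rho_2$ or their composition, contradicting asymmetry. Either way we reach a contradiction, proving the lemma.

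The main obstacle is the bookkeeping in the two "heart" steps: determining exactly which scanned position first witnesses a discrepancy between a mirror pair of sequences after a single robot is removed or inserted, and enumerating the handful of sub-cases (which of a corner's two sequences is the larger, and whether a boundary edge incident to the relevant corner is empty, since that affects where the "leading robot" of a sequence is located). These are routine but must be done with care; everything else follows from the localized-defect structure established in the first step.
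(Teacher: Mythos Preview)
Your opening move matches the paper's: set $\lambda_{DA}$ strictly largest, let $r_1$ (on $DA$, at distance $x$ from $D$) and $r_2$ (on $DC$, at distance $y$ from $D$) be the first-type impeding robots, note $x<y$, and conclude that the second largest corner must be $C$. The paper states this tersely (``it implies from $x<y$ that $C$ must be the second largest corner''); your description via the localized $\rho_1$-defect is the same idea.

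Where you diverge is in extracting the contradiction, and here your argument has a genuine gap. Your plan is to ``re-run'' the first step with the near-$\rho_2$-symmetry at $C$ to force $\max\{\lambda_{BA},\lambda_{BC}\}>\lambda_{DA}$. But the re-run only mirrors the first step when the impeding robot on $CD$ is the one closer to $C$, i.e.\ when $\lambda_{CD}$ is the larger of $C$'s two sequences; in that sub-case you do get $B$ beating $D$ and a clean contradiction. In the other sub-case ($\lambda_{CB}$ larger at $C$) the re-run yields $D>B$, which is \emph{consistent} with $\lambda_{DA}$ being maximal, so no contradiction falls out. Your fallback there---``feeding these relations \dots\ forces $\mathcal{C}$ to be invariant under one of $\rho_1,\rho_2$ or their composition''---is not an argument; in particular the inequality $\lambda_{BC}>\lambda_{BA}$ you invoke as ``obtained above'' was never established, and composing two ``almost'' symmetries with defects on different edges does not automatically produce an exact symmetry.

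The paper avoids this difficulty by using a different case split after fixing $C$ as second largest: either $r_2$ is \emph{not} one of the second-type impeding robots, or it is (equivalently, $DC$ carries a single robot). In the first case one reads off from the two almost-symmetries that the closest robots to $B$ on $BC$ and on $AB$ sit at the same distance $y$ from $B$, while the closest robot to $D$ on $DA$ sits at distance $x\neq y$; this directly violates the second-type condition at $C$. In the second case every boundary side has exactly one robot, their positions are pinned down, and a direct comparison of the $(n(x-1)+1)$th terms of $\lambda_{DA}$ and $\lambda_{BA}$ gives $\lambda_{DA}<\lambda_{BA}$, contradicting maximality. Both branches are short and concrete. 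Your high-level ``defect'' viewpoint is a reasonable way to \emph{discover} these facts, but to close the proof you need to descend to this kind of position-by-position comparison in the stubborn sub-case rather than appeal to an unspecified combination of the two near-symmetries.
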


 \begin{proof}
  If possible, consider an asymmetric configuration which is both almost symmetric of the first type and almost symmetric of the second type. Let $\lambda_{DA}$ be the (strictly) largest sequence. Let $r_1$ on $DA$ and $r_2$ on $DC$ be the robots impeding symmetry of the first type. Let $r_1$ and $r_2$ be at $x$th and $y$th node on $DA$ and $DC$ respectively from $D$ (in the sense that the 1st node from $D$ is $D$ itself). Clearly $x < y$. Now the configuration is also almost symmetric of the second type. Then either $A$ or $C$ is the second largest corner. But it implies from $x < y$ that $C$ must be the second largest corner.

    \textbf{Case 1:} Suppose that $r_2$ is not impeding symmetry of the second type. Then the robot on $BC$ closest to $B$ is at $y$th place from $B$. Call this robot $r_3$. Since the configuration is also almost symmetric of the first type, the robot on $AB$ closest to $B$, say $r_4$, is at $y$th place from $B$. This is contradiction to the fact that the configuration is almost symmetric of the second type. This is because the robot on $DA$ closest to $D$ is at $x$th place from $D$, while the robot on $AB$ closest to $B$ is at $y$th place from $B$, and $x \neq y$.
    
    \textbf{Case 2:} Now let $r_2$ be a robot impeding symmetry of the second type. In this case all the four boundary sides, namely $DA, AB, BC$ and $CD$, have exactly a single robot on them. Call them $r_1$, $r_4$, $r_3$ and $r_2$, respectively. Since the configuration is almost symmetric of the second type, $r_4$ is at $x$th place from $B$. As the configuration is also almost symmetric of the first type, $r_3$ is at $x$th place from $B$. Now lets compare the sequences $\lambda_{DA}$ and $\lambda_{BA}$.  The $(n(x-1)+1)$th term of $\lambda_{BA}$ is 1, corresponding to $r_3$. On the other hand the $(n(x-1)+1)$th term of $\lambda_{DA}$ is 0, because $r_2$ at $y$th position from  $D$ on $DC$ and $y > x$. Due to the almost symmetry of the second type, the first $n(x-1)$ terms of both the sequences are equal. Hence we have $\lambda_{DA} < \lambda_{BA}$, a contradiction. $\square$

 \end{proof}

 \begin{lemma}\label{almost_second_type_lemma}
 \begin{enumerate}
  \item Consider a non-partitive configuration which is symmetric of the second type or almost symmetric of the second type. Then one move made by (one or both of) the leading duo according to algorithm \ref{Move0} does not create a partitive configuration.
  
  \item Consider a non-partitive configuration which is symmetric of the first type or almost symmetric of the first type. Then one move made by (one or both of) the leading duo according to algorithm \ref{Move0} does not create a partitive configuration.
 \end{enumerate}

\end{lemma}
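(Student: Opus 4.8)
The plan is to argue by contradiction. Suppose that from a non-partitive configuration $\mathcal{C}$ of the stated type, some move by one or both robots of the leading duo, as prescribed by algorithm~\ref{Move0}, produces a partitive configuration $\mathcal{C}'$. I treat the two parts in parallel. Let $T$ be the corner the leading duo move towards --- the largest corner $D$ in part~2, the second largest corner $A$ in part~1 --- and, in the genuinely symmetric case, let $\sigma\in Aut(\mathcal{C})$ be the (non-partitive) diagonal reflection ($\sigma_{DB}$ in part~2, $\sigma_{AC}$ in part~1); in all cases the two robots of the leading duo sit on the two boundary sides incident to $T$ and each moves one step towards $T$. First rule out the case in which a corner becomes occupied: since $\mathcal{C}$ has no occupied corner and only the leading duo move, the sole corner that can become occupied is $T$, so $\mathcal{C}'$ has exactly one occupied corner; but on an even~$\times$~even square grid a configuration with a single occupied corner admits no partitive automorphism, because every non-identity grid automorphism either displaces every corner (the two midline reflections and the three non-trivial rotations about the centre) or is the reflection in the diagonal through that corner, whose axis passes through vertices and so is non-partitive. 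Hence from now on no corner is occupied in $\mathcal{C}'$.

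The technical engine is a bookkeeping step describing how a single step of a boundary robot $r$, lying on a side incident to $T$ and moving towards $T$, affects the eight sequences $\lambda_{XY}$: the only entries that change are those at the cell $r$ vacates and the cell it newly occupies, and for each sequence one simply checks which of the two is scanned first; reading this off, the two sequences based at $T$ itself strictly increase, and --- with a small additional case distinction when $r$ leaves a multiplicity, which merely turns a single $0$ into a $1$ --- one obtains that after the move $\lambda'_{AD}>\lambda'_{CD}$ in part~1 and $\lambda'_{DA}>\lambda'_{BA}$ in part~2 (intuitively because the move brings a robot strictly closer to $T$ along a side incident to $T$, which is ``progress'' for the sequence that begins by reading that side from $T$ and ``regress'', or at least not progress, for the partner sequence).

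Now split on the move. If $\mathcal{C}$ is almost symmetric (hence already asymmetric) or only one robot of the leading duo has moved, then $\mathcal{C}'$ has a \emph{unique} lexicographically largest sequence: a robot impeding symmetry, respectively one robot of the symmetric pair, has moved strictly closer to $T$ along its side, so the leading non-zero term of the corresponding sequence based at $T$ shifts to a strictly earlier position, and the bookkeeping step makes that sequence strictly dominate the other seven (a handful of ties --- for instance when the two robots impeding symmetry start at adjacent positions --- are disposed of directly, using that an almost symmetric configuration has no multiplicities). A configuration with a unique largest sequence is asymmetric, hence not partitive --- contradiction. Otherwise $\mathcal{C}$ is genuinely symmetric and both robots of the leading duo, which form a $\sigma$-symmetric pair, have moved, so $\mathcal{C}'$ is again $\sigma$-symmetric. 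Were $\mathcal{C}'$ partitive it would also carry a partitive automorphism $\psi$ (a midline reflection or a rotation about the centre); then $\langle\sigma,\psi\rangle\le Aut(\mathcal{C}')$ contains the reflection $\sigma'$ in the other diagonal, since $\sigma\circ\rho_\pi=\sigma'$ and a diagonal reflection together with a midline reflection or a $\tfrac{\pi}{2}$-rotation generates the whole dihedral group of the square. So $\mathcal{C}'$ is symmetric about the other diagonal too, forcing $\lambda'_{AD}=\lambda'_{CD}$ in part~1 and $\lambda'_{DA}=\lambda'_{BA}$ in part~2 --- contradicting, in each case, the strict inequality established by the bookkeeping step. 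This finishes the argument.

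The routine but genuinely laborious part is the bookkeeping step, with all eight lexicographic comparisons and the multiplicity sub-case, together with the verification of the unique-largest-sequence claim and its ties. The conceptual core, which I expect to be the main obstacle to get exactly right, is the symmetric-move case: one has to see that a move respecting the diagonal symmetry $\sigma$ can create a partitive configuration only by simultaneously creating the complementary diagonal symmetry $\sigma'$, and then exhibit an explicit pair of sequences on which the move has demonstrably destroyed that would-be symmetry.
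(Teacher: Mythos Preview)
Your overall architecture is attractive, and for part~2 (first type) the ``unique largest sequence'' strategy is essentially sound: since $T=D$ is already the largest corner, pushing a boundary robot on $DA$ or $DC$ one step closer to $D$ does make one of $\lambda_{DA},\lambda_{DC}$ strictly dominate. The group-theoretic observation for the symmetric-both-move case (that $\sigma$ together with any partitive automorphism forces the other diagonal reflection $\sigma'$) is also a genuinely nice shortcut compared to the paper's symmetry-by-symmetry case analysis.

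The gap is in part~1. There $T=A$ is only the \emph{second} largest corner, and your central claim---that after a move of the leading duo towards $A$ the corresponding sequence based at $A$ ``strictly dominates the other seven''---is simply false. Take a $6\times6$ grid with corners $D=(1,1)$, $A=(6,1)$, $B=(6,6)$, $C=(1,6)$, and robots at $(2,1),(6,5),(3,3),(4,4),(2,4),(3,5)$. This is symmetric of the second type about $AC$, with $D,B$ the largest corners and $A$ the second largest; the leading duo are $r_1=(2,1)$ and $r_2=(6,5)$. If only $r_1$ moves to $(3,1)$, then $\lambda'_{AD}$ has its first nonzero term at position~4, while $\lambda'_{DA}$ has its first nonzero at position~3 and $\lambda'_{BA}$ at position~2. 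So neither sequence based at $A$ is even close to largest; in fact $B$ becomes the largest corner. Your bookkeeping step only tells you that $\lambda_{AD}$ and $\lambda_{AB}$ \emph{increase}, not that they overtake $\lambda_{DA}$ and $\lambda_{BA}$, which were strictly larger to begin with. The parenthetical about ``ties when the two robots impeding symmetry start at adjacent positions'' does not address this; the failure is generic, not a boundary case.

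The paper's proof proceeds quite differently for part~1: it fixes one of the four possible partitive symmetries of $\mathcal{C}'$ (horizontal reflection, vertical reflection, $\tfrac{\pi}{2}$-rotation, $\pi$-rotation), uses that symmetry to locate additional robots on the boundary sides $BC$ and $CD$, and then compares sequences associated to $A$ and to $C$ in the \emph{original} configuration to contradict the hypothesis that $A$ was the second largest corner. Your argument would need either to adopt this kind of case analysis for part~1, or to replace the false ``sequence at $A$ becomes largest'' claim by a correct statement about which sequence does become largest (in the example above it is one based at $B$) and then carry the comparison through all seven competitors---which is essentially the same amount of work.
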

 
\begin{proof}
 \textbf{1.} We only prove for almost symmetric configurations of the second type. The proof for symmetric configurations of the second type is similar. Assume that  $\lambda_{DA}$ is the largest sequence and $A$ is the second largest corner. The proof would be exactly similar if $C$ were the second largest corner. Suppose that the robots on $AD$ and $AB$, that are closest to $A$ are respectively $r_1$ and $r_2$, and are at the $x$th and $y$th node from $A$ respectively. Since $\lambda_{DA}$ is the largest sequence and the configuration is almost symmetric of the second type, $x > y$.

 \textbf{Case 1 ($\boldsymbol{r_2}$ moves): } Suppose that $r_2$ makes a move towards $A$, and a partitive configuration is created. Now a partitive configuration can be created in three ways. The new configuration has either a horizontal symmetry (with respect to figure \ref{almost_second_type_fig_a}), or a vertical symmetry (figure \ref{almost_second_type_fig_b}), or a $\frac{\pi}{2}$-rotational symmetry (figure \ref{almost_second_type_leading_fig_c}), or a $\pi$-rotational symmetry (figure \ref{almost_second_type_leading_fig_d}).

   \textbf{Horizontal symmetry:} After the move on $AB$, the robot $r_2$ is now at $(y-1)$th place from $A$. Due to the horizontal symmetry there is a robot $r_2'$ on $DC$ at $(y-1)$th place from $D$. Since the configuration is almost symmetric of the second type, there is a robot $r_2''$ on $BC$ at $(y-1)$th place from $B$. Again due to the horizontal symmetry there is a robot $r_2'''$ on $BC$ at $(y-1)$th place from $C$. Since only the leading duo has moved, $r_2'''$ was at the same place initially. Recall that $r_2$ was closest to $A$ and was initially at $y$th place from $A$. But we see that $r_2'''$ was closer to $C$, than $r_2$ was to $A$. This contradicts the fact that $A$ is the second largest corner. 
   
   \textbf{Vertical symmetry:} The robot $r_2$ is now on $AB$ at $(y-1)$th place from $A$. Due to the vertical symmetry there is a robot $r_2'$ on $AB$ at $(y-1)$th place from $B$. Since the configuration is almost symmetric of the second type, there is a robot $r_2''$ on $DA$ at $(y-1)$th place from $D$. Again due to the vertical symmetry there is a robot $r_2'''$ on $CB$ at $(y-1)$th place from $C$. Similar to the last case this contradicts the fact that $A$ is the second largest corner.

    \textbf{$\boldsymbol{\frac{\pi}{2}}$-rotational symmetry:} Again the robot $r_2$ is on $AB$ at $(y-1)$th place from $A$. Due to the $\frac{\pi}{2}$-rotational symmetry there is a robot $r_2'$ on $BC$ at $(y-1)$th place from $B$. Again by the $\frac{\pi}{2}$-rotational symmetry there is a robot $r_2''$ on $DC$ at $(y-1)$th place from $C$. This contradicts the fact that $A$ is the second largest corner.

    \textbf{$\boldsymbol{\pi}$-rotational symmetry:} The robot $r_2$ is on $AB$ at $(y-1)$th place from $A$. Due to the $\pi$-rotational symmetry there is a robot $r_2'$ on $CD$ at $(y-1)$th place from $C$. Again this is a contradiction to the fact that $A$ is the second largest corner.

  \begin{figure}[thb!]
\centering
\subcaptionbox[Short Subcaption]{
       A horizontal symmetry is created after a move by $r_2$ \label{almost_second_type_fig_a}
}
[
    0.4\textwidth % width of caption
]
{
    \def\svgwidth{0.4\textwidth}
    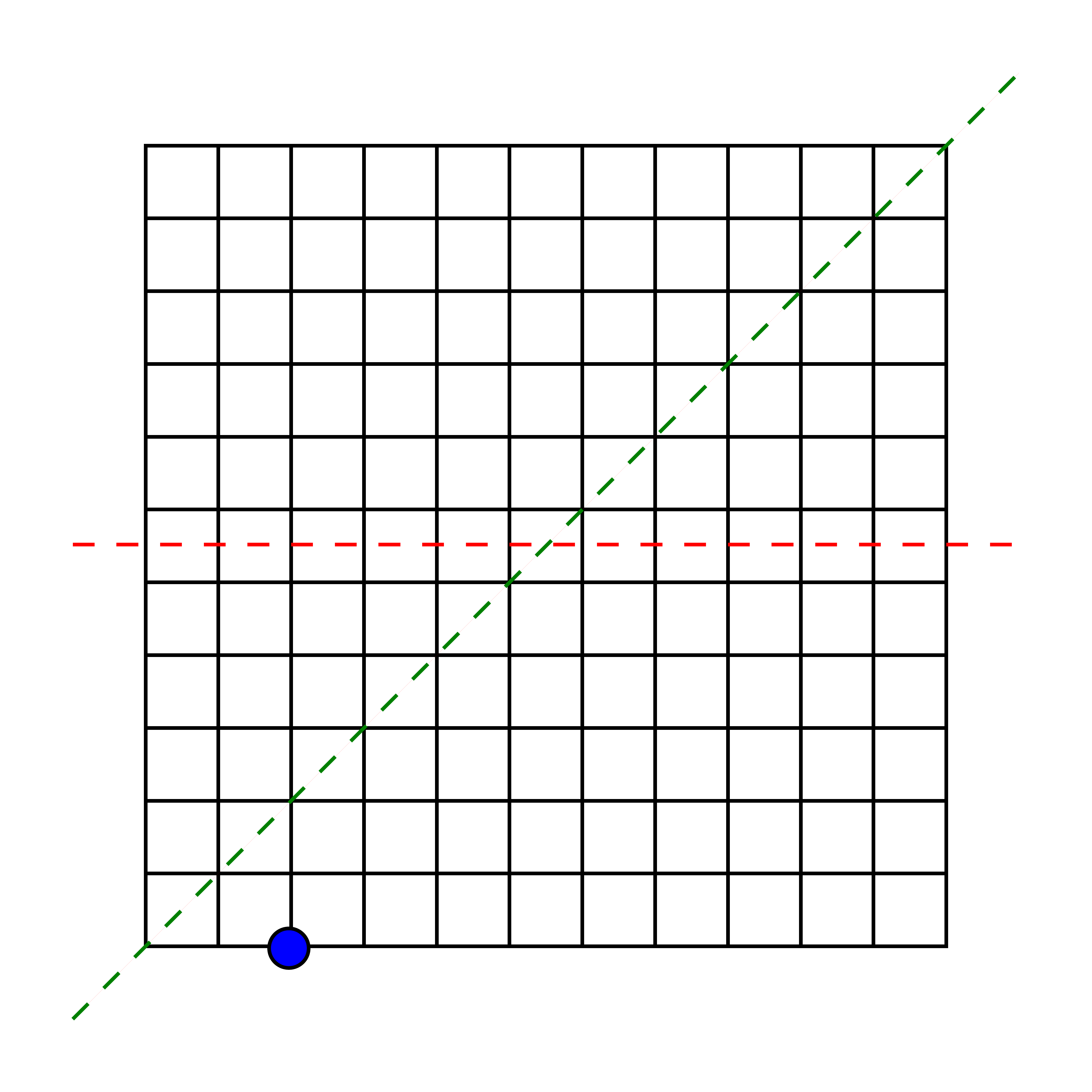
}
\hfill % seperation
\subcaptionbox[Short Subcaption]{
    A vertical symmetry is created after a move by $r_2$ \label{almost_second_type_fig_b}
}
[
    0.4\textwidth % width of caption
]
{
    \def\svgwidth{0.4\textwidth}
    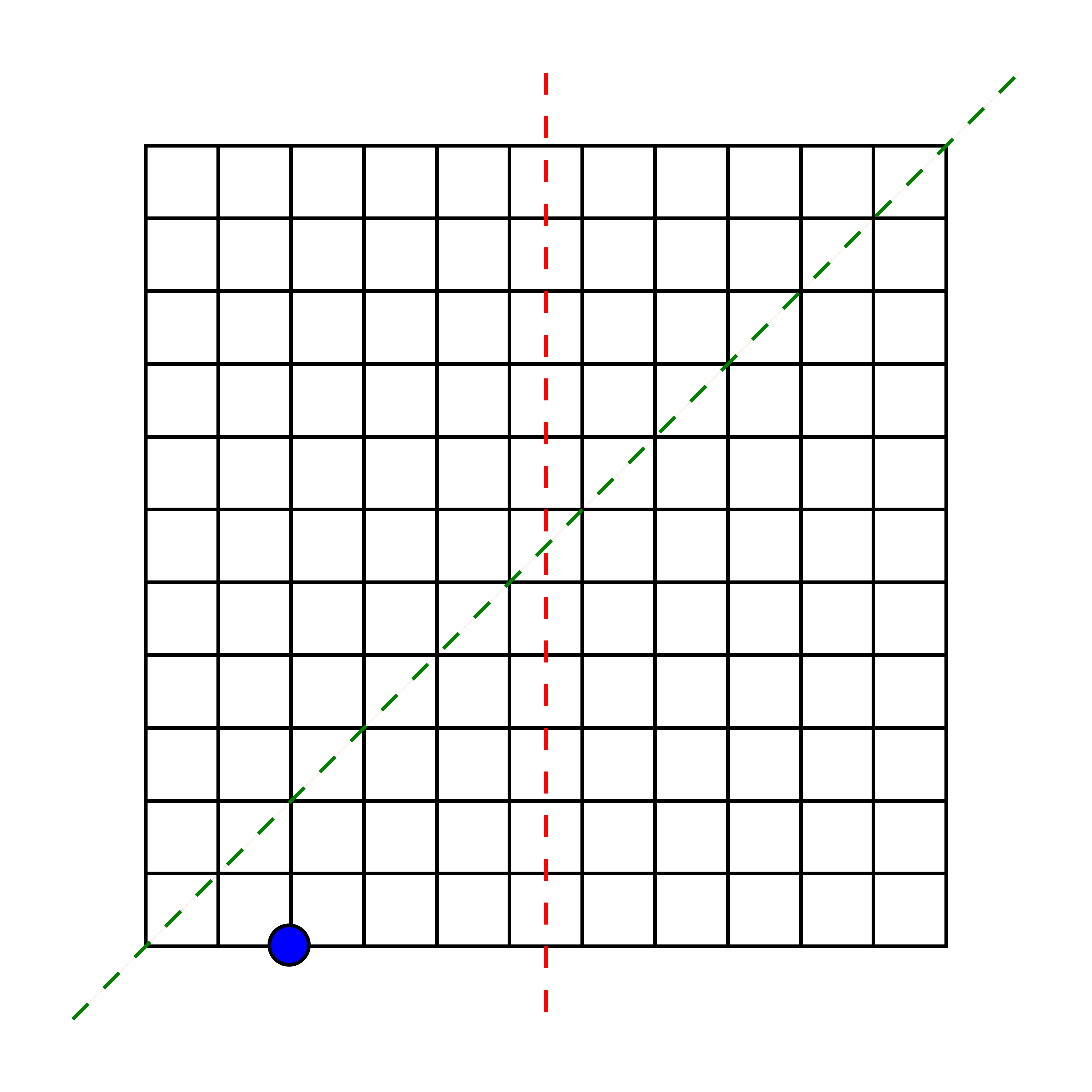
}
\\
\subcaptionbox[Short Subcaption]{
        A $\frac{\pi}{2}$-rotational symmetry is created after a move by $r_2$ \label{almost_second_type_leading_fig_c}
}
[
    0.4\textwidth % width of caption
]
{
    \def\svgwidth{0.4\textwidth}
    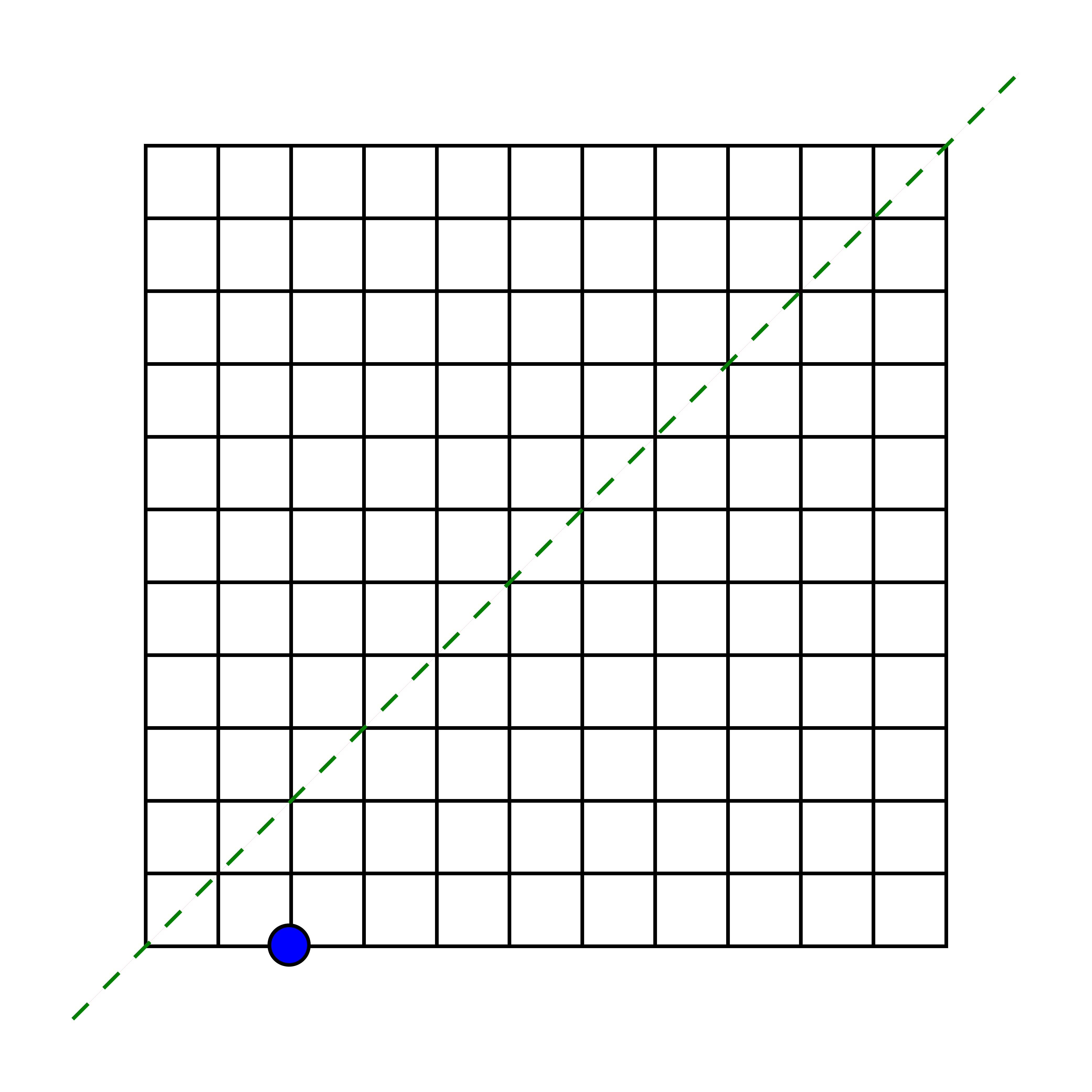
}
\hfill
\subcaptionbox[Short Subcaption]{
       A $\pi$-rotational symmetry is created after a move by $r_2$ \label{almost_second_type_leading_fig_d}
}
[
    0.4\textwidth % width of caption
]
{
    \def\svgwidth{0.4\textwidth}
    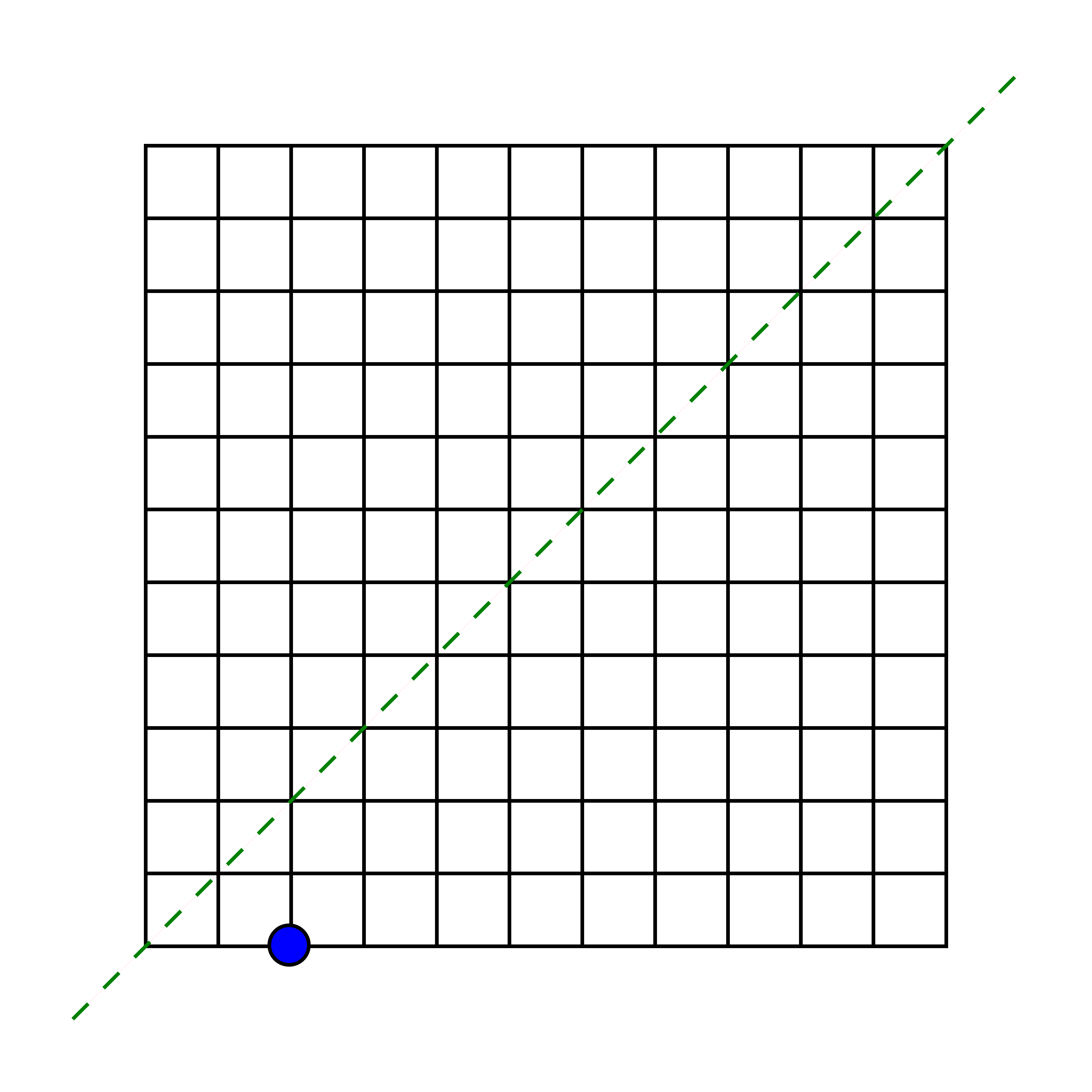
}

\caption[Short Caption]{Illustrations supporting the proof of case 1 of lemma \ref{almost_second_type_lemma}}
\label{almost_second_type_fig}
\end{figure}

  \textbf{Case 2 (Only $\boldsymbol{r_1}$ moves):} Now assume that only $r_1$ moves, and a partitive configuration is created.

   \textbf{Horizontal symmetry:} After the move on $DA,$ the robot $r_1$ is now at $(x-1)$th place from $A$. Since $r_1$ is the first robot on $AD$ from $A$, due to horizontal symmetry the first robot on $DA$ from $D$, say $r_1'$, is at $(x-1)$th place. Recall that $y<x$. But $y \nless x-1$, as $\lambda_{DA}$ was initially the largest sequence. Hence $y = x-1$, i.e., $r_2$ is at $(x-1)$th place from $A$ on $AB$. Due to the horizontal symmetry there is a robot $r_2'$ on $DC$, at $(x-1)$th place from $D$. Since the configuration is almost symmetric of the second type, there is a robot $r_2''$ on $BC$ at $(x-1)$th place from $B$. Again because of the horizontal symmetry, there is a robot $r_2'''$ on $BC$, at $(x-1)$th place from $C$. Since $r_1'$ is at $(x-1)$th place from $D$ and as the configuration is almost symmetric of the second type, there is a robot $r_1''$ on $AB$ at $(x-1)$th place from $B$. Again due to horizontal symmetry there is a robot  $r_1'''$ on $DC$, at $(x-1)$th place from $C$. Refer to the figure \ref{almost_second_type_trailing_fig_a}. Now let us compare the sequences associated with $A$ and $C$, before the move by $r_1$.
  
  \underline{$\boldsymbol{\lambda_{AD}^{old}}$ \textbf{vs} $\boldsymbol{\lambda_{CD}^{old}}$} 
  
  The first non-zero term of $\lambda_{AD}^{old}$ is the $x$th term, while the first non-zero term of $\lambda_{CD}^{old}$ is the $(x-1)$th term. Hence  $\lambda_{AD}^{old} < \lambda_{CD}^{old}$.
  
  \underline{$\boldsymbol{\lambda_{AB}^{old}}$ \textbf{vs} $\boldsymbol{\lambda_{CD}^{old}}$}
  
  Since only $r_1$ moves, the first $n(x-2)$ terms of the two sequences are unchanged. Clearly before the move the $n(x-2)+1$th term of $\lambda_{AB}$ was 0. On the other hand the $n(x-2)+1$th term of $\lambda_{CD}$ was 1, corresponding to the robot $r_2'''$. Hence $\lambda_{AB}^{old} < \lambda_{CD}^{old}$.
  
  This is a contradiction to the fact that $A$ is the second largest corner.
  
  \textbf{Vertical symmetry:} Due to the vertical symmetry there is a robot $r_2'$ on $AB$, at $y$th place from $B$, which is the closest robot to $B$ on $AB$. Since the configuration is almost symmetric of the second type, the first robot on $DA$ from $D$, say $r_2''$, is at the $y$th node from $D$. By the vertical symmetry we have $r_2'''$ on $CB,$ and then due to being almost symmetric of the second type we have  $r_2''''$ on $CD,$ both at the $y$th place from $C$ and closest to $C$. Again due to the vertical symmetry there is a robot $r_2'''''$ on $DC$ at the $y$th node from $D$. After the move $r_1$ is now at the $(x-1)$th node from $A$ on $AD$. Hence, by the vertical symmetry there is a robot $r_1'$ on $BC$ at the $(x-1)$th node from $B$. Now let us compare the sequences associated with $A$ and $C$, before the move by $r_1$.
  
  \underline{$\boldsymbol{\lambda_{AD}^{old}}$ \textbf{vs} $\boldsymbol{\lambda_{CB}^{old}}$} 
  
   Before the move, $r_1$ was at $x$th place from $A$ on $AD$. On $CB,$ the first robot from $C$ is $r_2'''$ at $y$th place from $C$. So clearly $\lambda_{AB}^{old} < \lambda_{CB}^{old}$, as $y < x$.
  
  \underline{$\boldsymbol{\lambda_{AB}^{old}}$ \textbf{vs} $\boldsymbol{\lambda_{CB}^{old}}$}

  We have $y<x \Rightarrow y \leq x-1$. First assume that $y < x-1$. For any of the eight sequences $\lambda$, we construct a sequence $\Lambda$ of $n$ terms by taking the first $n$ terms of $\lambda$ and replacing the last non-zero term with 0. Since only $r_1$ has moved to create the vertical symmetry, the first $n$ terms of $\lambda_{AB}^{old}$ and $\lambda_{BA}^{old}$ are equal. Hence also $\Lambda_{AB}^{old} = \Lambda_{BA}^{old}$.
  Since the configuration is almost symmetric of the second type, $\Lambda_{BA}^{old} = \Lambda_{DA}^{old}$. Also by the vertical symmetry, $\Lambda_{DA}^{old} = \Lambda_{CB}^{old}$. Hence $\Lambda_{AB}^{old} = \Lambda_{CB}^{old}$. This implies that in the first $n$ places of the sequences $\lambda_{CB}^{old}$ and $\lambda_{AB}^{old}$, all the non-zero terms, i.e. 1, are at identical places, except for the last non-zero term (corresponding to the robots $r_1'$ and $r_2'$ respectively). Now $r_1'$ is at the $(x-1)$th place from $B$ on $CB,$ while $r_2'$ is at the $y$th place from $B$ on $AB$. Since $y < x-1$, we clearly have $\lambda_{AB}^{old} < \lambda_{CB}^{old}$.
  
  Now let $y = x-1$.  This implies that the configuration has become symmetric of the second type after the move by $r_1$. Hence  $\lambda_{BA}^{new} = \lambda_{DA}^{new}$. By the vertical symmetry, $\lambda_{DA}^{new} = \lambda_{CB}^{new}$ and $\lambda_{AB}^{new} = \lambda_{BA}^{new}$. Hence we have $\lambda_{AB}^{new} = \lambda_{CB}^{new}$. Now since only $r_1$ has moved, the first $n(x-2)=n(y-1)$ terms of $\lambda_{AB}^{old}$ and $\lambda_{CB}^{old}$ have not changed and hence they are equal. The $(n(x-2) +1)$th term of $\lambda_{AB}^{old}$ is 0, since $r_1$ was initially at the $x$th place from $A$ on $DA$. On the other hand the $(n(x-2) +1)$th term of $\lambda_{CB}^{old}$ is 1, corresponding to the robot $r_2''''$. Hence $\lambda_{AB}^{old} < \lambda_{CB}^{old}$.

  This is a contradiction to the fact that $A$ is the second largest corner.

  \textbf{$\boldsymbol{\frac{\pi}{2}}$-rotational symmetry:} After the move on $DA,$ the robot $r_1$ is now at $(x-1)$th place from $A$, and is also the first robot on $DA$ from $A$. The robot $r_2$ is the first robot on $AB$ from $A$ at $y$th place from $A$. Due to the $\frac{\pi}{2}$ rotational symmetry there is a robot $r_1'$ on $AB$ at  $(x-1)$th place from $B$, and a robot $r_2'$ on $BC$ at $y$th place from $B$. Again there is a robot $r_1''$ on $BC$ at  $(x-1)$th place from $C$, and a robot $r_2''$ on $CD$ at $y$th place from $C$. Then since the configuration is almost symmetric of the second type $y = x-1$.

  \underline{$\boldsymbol{\lambda_{AD}^{old}}$ \textbf{vs} $\boldsymbol{\lambda_{CD}^{old}}$} 
  
  The first non-zero term of $\lambda_{AD}^{old}$ is the $x$th term corresponding to $r_1$, while the first non-zero term of $\lambda_{CD}^{old}$ is the $y=(x-1)$th term corresponding to $r_2''$. Hence  $\lambda_{AD}^{old} < \lambda_{CD}^{old}$.
  
  \underline{$\boldsymbol{\lambda_{AB}^{old}}$ \textbf{vs} $\boldsymbol{\lambda_{CD}^{old}}$}
  
  Due to the $\frac{\pi}{2}$ rotational symmetry $\lambda_{AB}^{new} = \lambda_{BC}^{new}=\lambda_{CD}^{new}$. Since only $r_1$ has moved, the first $n(x-2)=n(y-1)$ terms of $\lambda_{AB}^{old}$ and $\lambda_{CD}^{old}$ have not changed and hence they are equal. The $(n(x-2) +1)$th term of $\lambda_{AB}^{old}$ is 0, since $r_1$ was initially at the $x$th place from $A$ on $DA$. On the other hand the $(n(x-2) +1)$th term of $\lambda_{CD}^{old}$ is 1, corresponding to the robot $r_1''$. Hence $\lambda_{AB}^{old} < \lambda_{CD}^{old}$.
  
  This is a contradiction to the fact that $A$ is the second largest corner.
  
  \textbf{$\boldsymbol{\pi}$-rotational symmetry:} Due to the $\pi$ rotational symmetry there is a robot $r_1'$ on $BC$ at $(x-1)$th place from $C$, and a robot $r_2'$ on $CD$ at $y$th place from $C$. Then since the configuration is almost symmetric of the second type $y = x-1$. Exactly similar to the previous case we shall have $\lambda_{AD}^{old} < \lambda_{CD}^{old}$ and $\lambda_{AB}^{old} < \lambda_{CD}^{old}$. Again this contradicts the fact that $A$ is the second largest corner. $\square$

 \textbf{2.} The proof is similar to that of 1. 

\end{proof}

 \begin{figure}[thb!]
\centering
\subcaptionbox[Short Subcaption]{
       A horizontal symmetry is created after a move by $r_1$ \label{almost_second_type_trailing_fig_a}
}
[
    0.38\textwidth % width of caption
]
{
    \def\svgwidth{0.38\textwidth}
    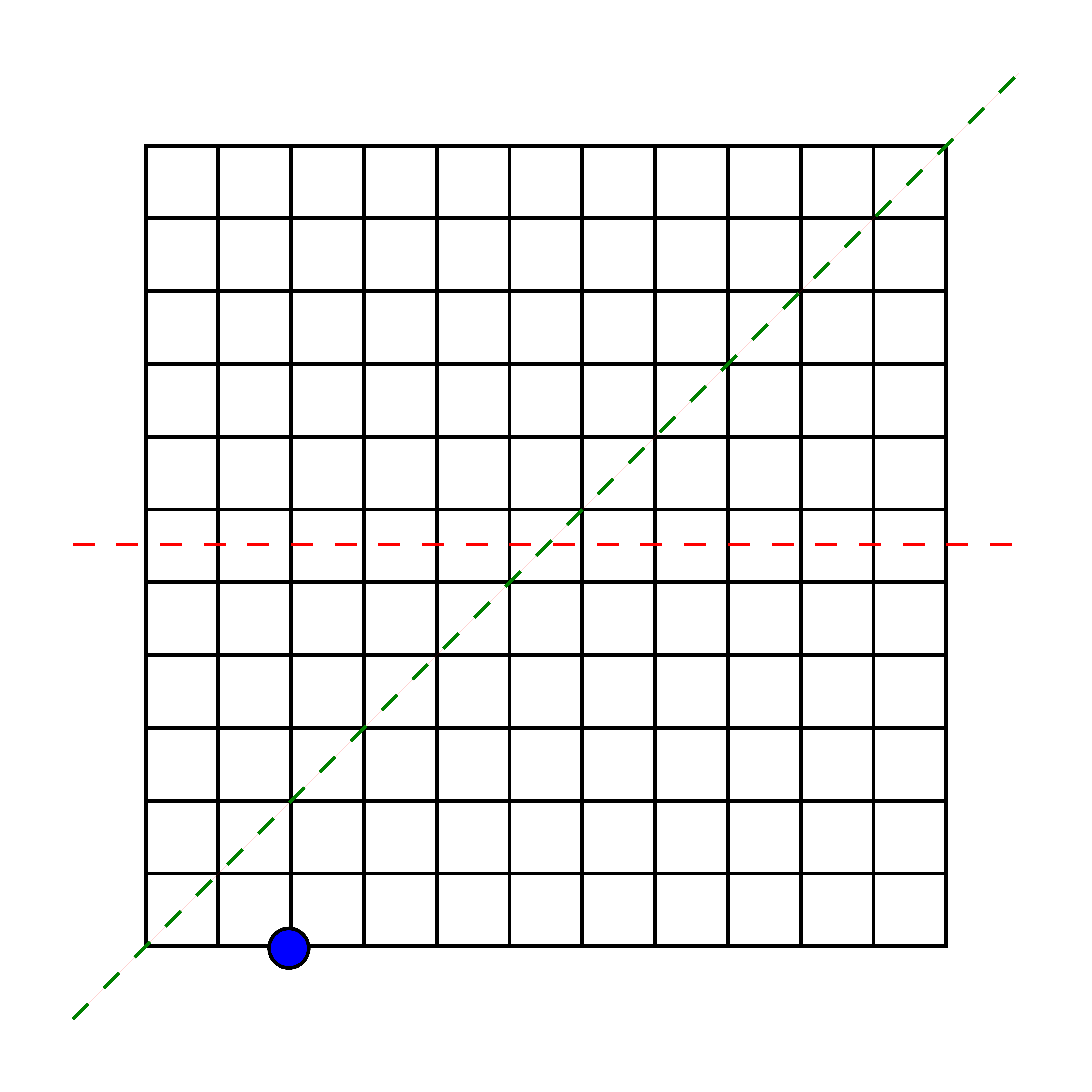
}
\hfill % seperation
\subcaptionbox[Short Subcaption]{
     A vertical symmetry is created after a move by $r_1$ and $y < x-1$ \label{almost_second_type_trailing_fig_b}
}
[
    0.38\textwidth % width of caption
]
{
    \def\svgwidth{0.38\textwidth}
    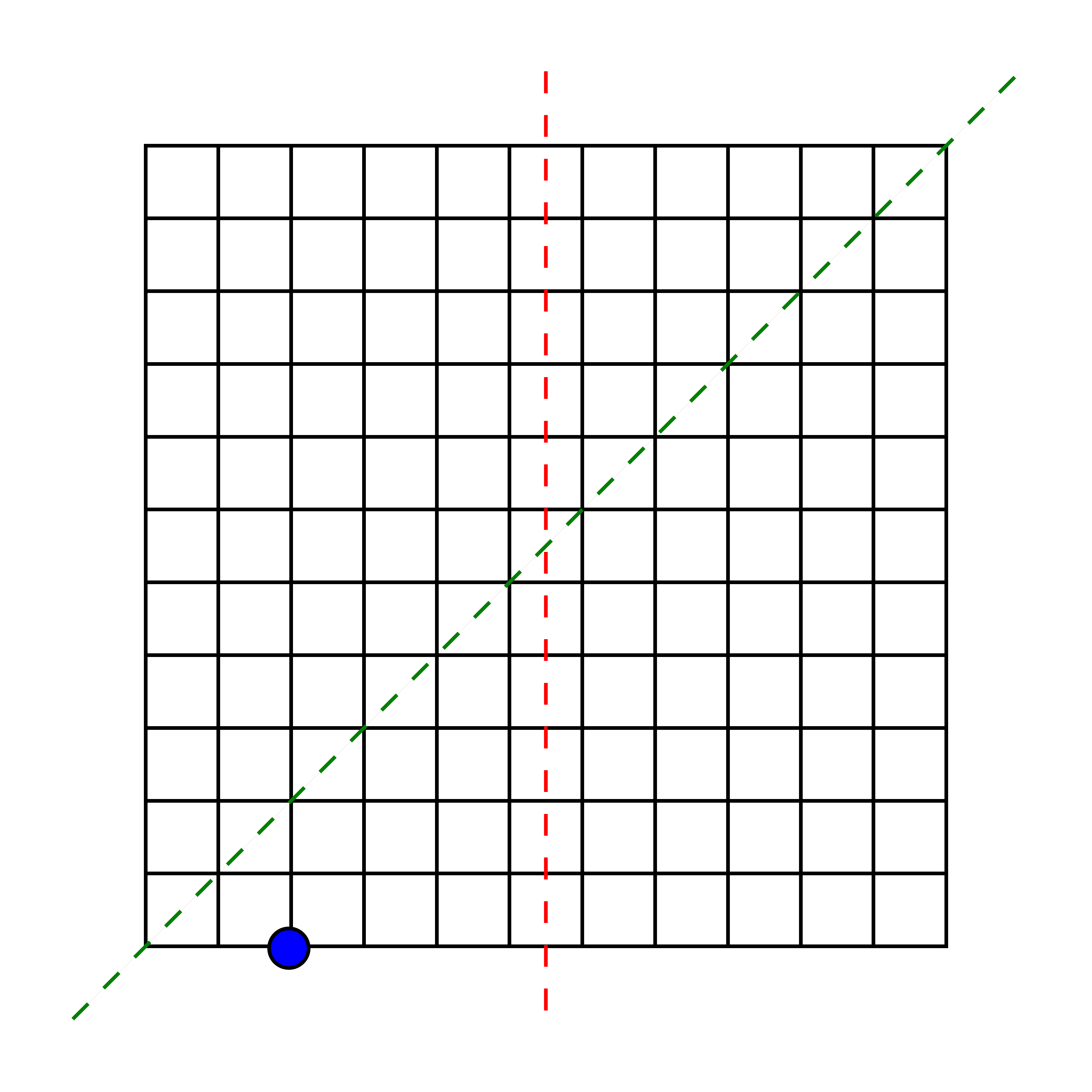
}
\\
\subcaptionbox[Short Subcaption]{
        A vertical symmetry is created after a move by $r_1$ and $y = x-1$ \label{almost_second_type_trailing_fig_c}
}
[
    0.38\textwidth % width of caption
]
{
    \def\svgwidth{0.38\textwidth}
    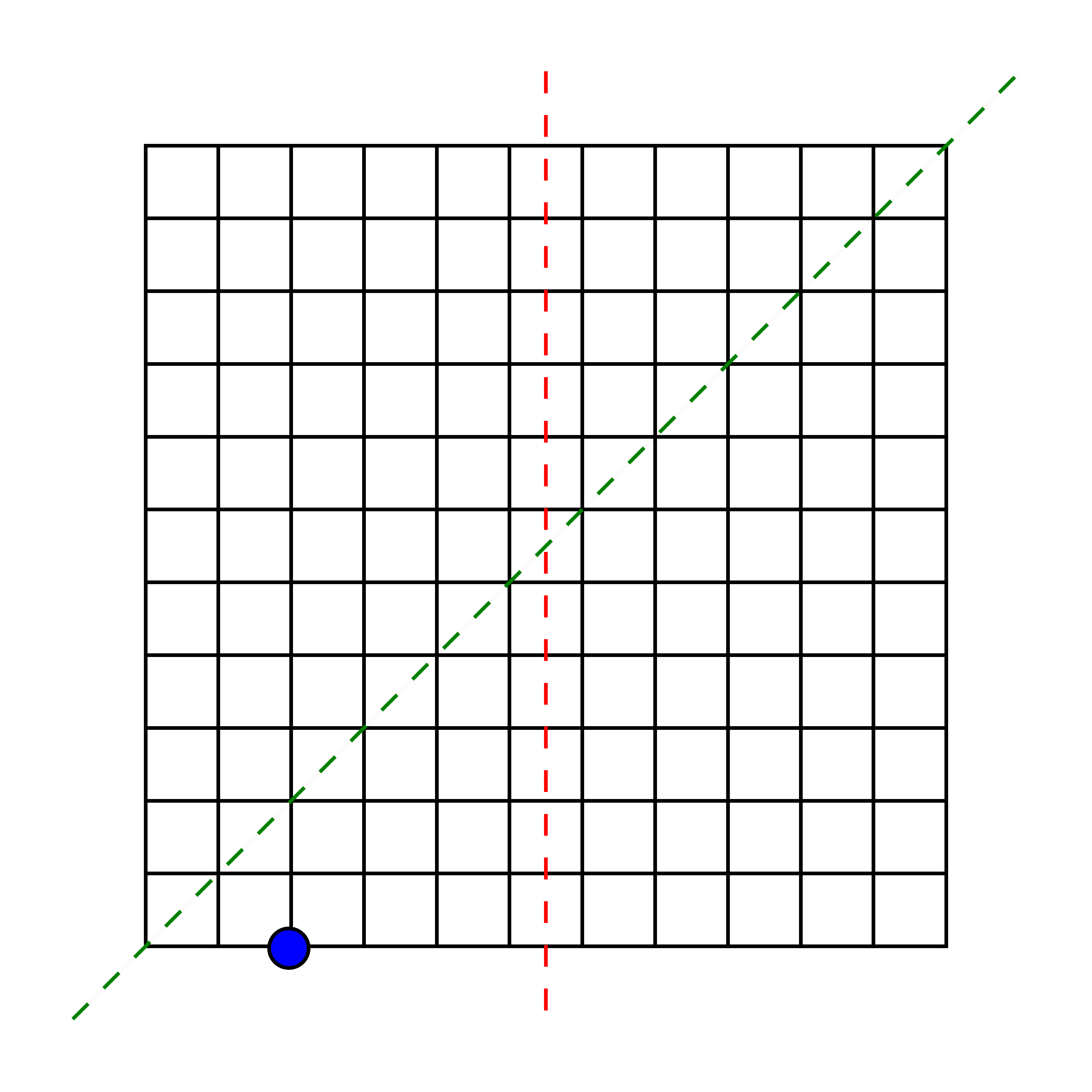
}
\hfill
\subcaptionbox[Short Subcaption]{
        A $\frac{\pi}{2}$-rotational symmetry is created after a move by $r_1$ \label{almost_second_type_trailing_fig_d}
}
[
    0.38\textwidth % width of caption
]
{
    \def\svgwidth{0.38\textwidth}
    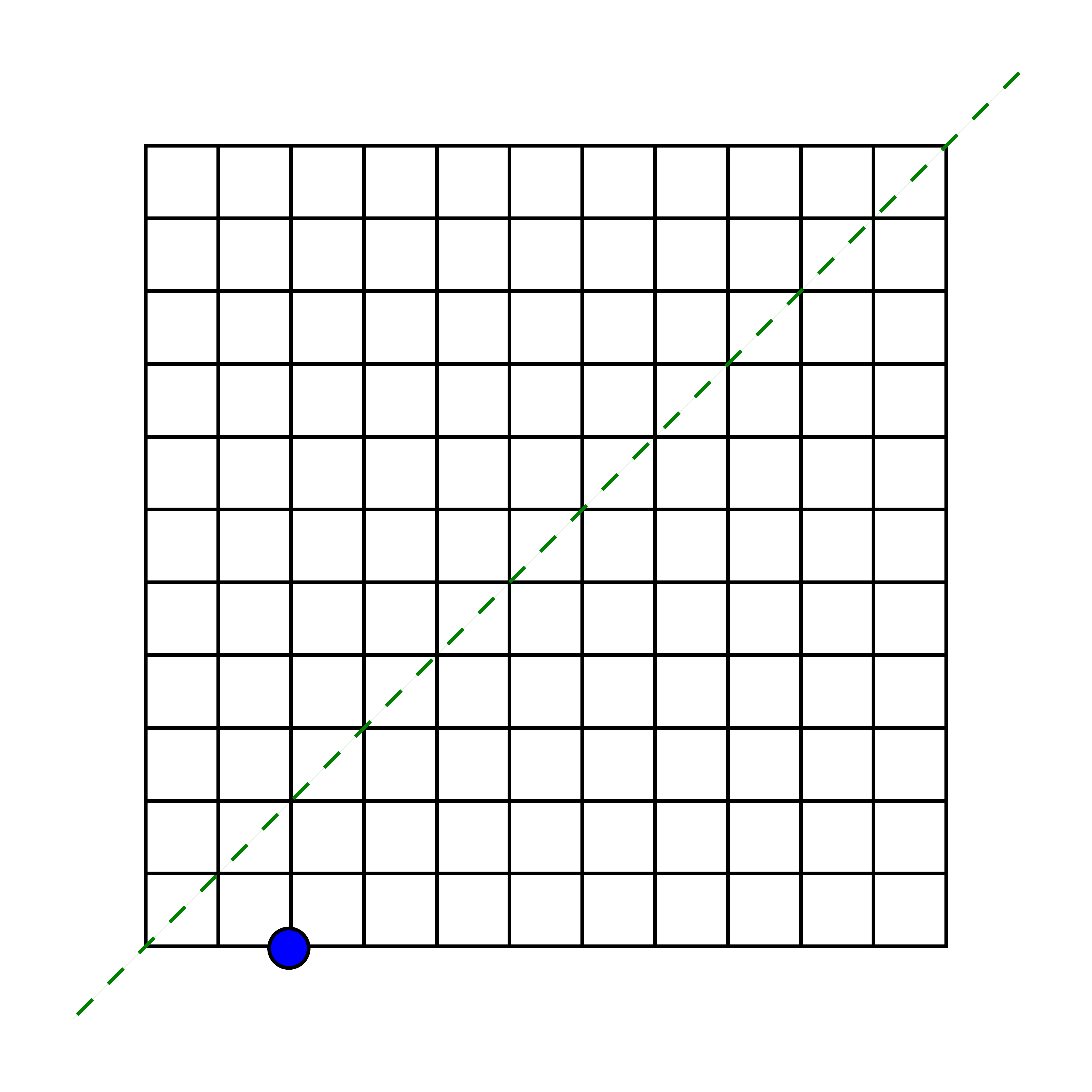
}

\subcaptionbox[Short Subcaption]{
        A $\pi$-rotational symmetry is created after a move by $r_1$ \label{almost_second_type_trailing_fig_e}
}
[
    0.38\textwidth % width of caption
]
{
    \def\svgwidth{0.38\textwidth}
    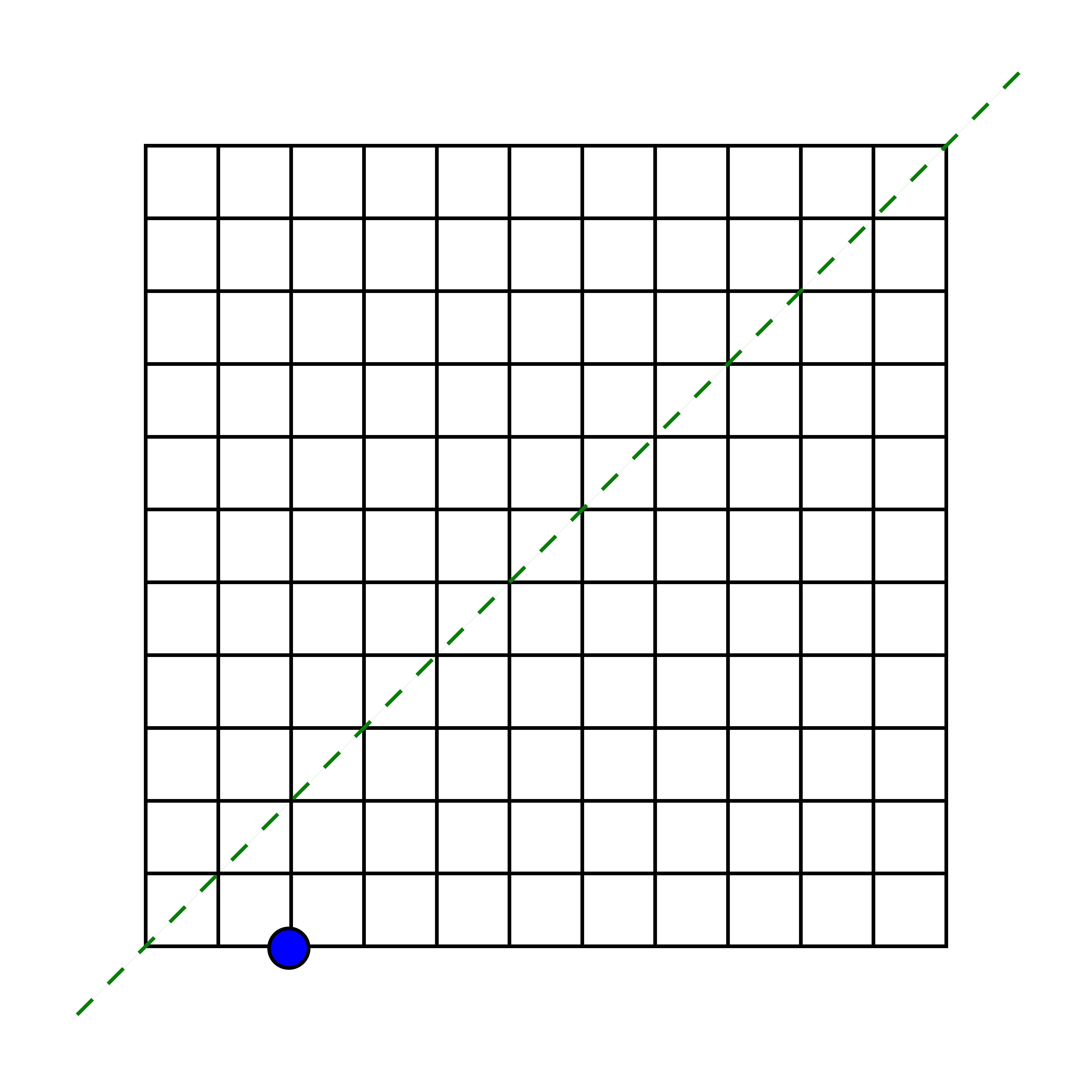
}

\caption[Short Caption]{Illustrations supporting the proof of case 2 of lemma \ref{almost_second_type_lemma}}
\label{almost_second_type_trailing_fig}
\end{figure}
 
  \begin{theorem}\label{second_type_theorem}
  \begin{enumerate}
      
   \item If the initial configuration is symmetric of the first type or almost symmetric of the first type, then algorithm \ref{Move0} leads to a configuration with exactly one corner occupied. 
   
   \item If the initial configuration is symmetric of the second type or almost symmetric of the second type, then algorithm \ref{Move0} leads to a configuration with exactly one corner occupied.
  \end{enumerate}

 \end{theorem}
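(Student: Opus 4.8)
The plan is to handle both parts by the same scheme; they differ only in the \emph{target corner}~$T$ towards which the leading duo is sent. Using the stated convention, take the lexicographically largest sequence to be $\lambda_{DA}$: for the first type it ties with $\lambda_{DC}$ and we put $T=D$, and for the second type it ties with $\lambda_{BA}$ and we put $T=A$; in the almost-symmetric cases the robots impeding symmetry play the role of the leading duo. In all four situations algorithm~\ref{Move0} orders these two robots $r_1,r_2$ towards~$T$. The first, purely geometric, observation is that $r_1,r_2$ sit on the two boundary edges incident to~$T$, close to~$T$, so the only distance-reducing move for such a robot is the single step along its edge towards~$T$; hence neither of them ever steps onto a corner other than~$T$, and while we are in this phase a corner becomes occupied only when $r_1$ or $r_2$ reaches~$T$.

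Next I would argue by induction along the (asynchronous) execution, maintaining the invariant that the current configuration is non-partitive and is one to which algorithm~\ref{Move0} reacts by moving a designated pair of robots (or, after a crash, the survivor among them) towards a designated corner. Non-partitiveness is preserved by lemma~\ref{almost_second_type_lemma} while the configuration is symmetric or almost symmetric, and by the treatment of the purely asymmetric case otherwise. The other half of the invariant is the delicate point: since $r_1$ and $r_2$ act asynchronously, a move by one of them destroys the symmetry, may move the lexicographically largest sequence to a different corner, and may change the class of the configuration. One therefore has to verify, by comparing the eight boundary sequences in the style of the proofs of lemmas~\ref{disjoint} and~\ref{almost_second_type_lemma}, that after such a move the configuration is still one of the handled forms --- symmetric of the same type, almost symmetric, or (the generic outcome) purely asymmetric --- and that its designated movers are still precisely $r_1,r_2$: when the outcome is purely asymmetric, $r_1,r_2$ must be the robots on the first occupied places of the new largest sequence, heading to the new largest corner.

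Granting the invariant, termination is obtained by a potential argument: one shows that apart from a bounded initial stretch the designated target is frozen and each designated move strictly shortens the sum of the Manhattan distances from the designated robots to it --- for instance, once the largest sequence no longer comes from a tie, each such move pushes a~$1$ to an earlier position of that sequence and so strictly increases it, and it takes only finitely many values. Since at most one robot crashes, at least one of the designated robots is activated infinitely often unless it has already reached~$T$, so after finitely many moves a corner becomes occupied; by the geometric observation the resulting configuration has exactly one occupied corner and no others, which is the assertion of both parts. The almost-symmetric hypotheses are absorbed here: the robots impeding symmetry form the initial designated pair, and their first move drops the configuration into one of the cases treated by the induction.

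The step I expect to be the main obstacle is the bookkeeping in the second paragraph. The awkward sub-cases are: (i)~the second-type configurations, where moving $r_1$ along its edge towards the second largest corner~$A$ decreases $\lambda_{DA}$ but increases $\lambda_{BA}$, so the largest corner may switch from the tied pair $\{D,B\}$ to~$B$ and the algorithm passes to a purely asymmetric regime with a fresh target; (ii)~the instant at which a move re-creates a symmetry, where one must confirm it is a symmetry of a non-partitive type keeping the correct largest corner; and (iii)~the boundary positions in which some side of the grid holds a single robot or a designated robot is one step from~$T$. Each of these is settled by a short comparison of boundary sequences of the kind already used in lemmas~\ref{disjoint} and~\ref{almost_second_type_lemma}, but fitting them together into one coherent invariant (and identifying the exact potential that forces termination) is the substance of the proof. $\square$
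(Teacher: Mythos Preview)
Your plan has the right shape --- maintain an invariant about the class of the configuration, the leading duo, and the target corner --- but it misses the key structural fact that makes the paper's proof work, and in doing so it opens a gap.

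You treat ``purely asymmetric'' as the generic outcome of a single move and then try to argue that the purely-asymmetric branch of algorithm~\ref{Move0} will still select the same pair $\{r_1,r_2\}$ heading to the same target. This is the wrong picture. The paper's argument is that starting from a (almost) symmetric configuration of either type, a move by one or both of the leading duo \emph{never} produces a purely asymmetric configuration: for the first type the result is always symmetric of the first type or almost symmetric of the first type with the same largest corner $D$; for the second type the result is symmetric of the first or second type, or almost symmetric of the first or second type, in every case with $A$ still the relevant corner. This closure is exactly why the ``almost symmetric'' classes were introduced, and it is what keeps the leading duo and the target corner invariant from the very first move --- not merely ``after a bounded initial stretch'' as you suggest. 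If the configuration could become purely asymmetric, your invariant would in fact fail: in the second-type case the leading duo are near $A$ and the target is the \emph{second} largest corner, whereas the purely-asymmetric branch selects the first two robots of the largest sequence and sends them to the \emph{largest} corner, so different robots would be designated towards a different corner, conflicting with pending moves.

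Concretely, what you need to prove (and what the paper proves) is: from symmetric/almost symmetric of the first type with largest corner $D$, any move by the duo yields a configuration again in that same pair of classes with $D$ still largest; and from symmetric/almost symmetric of the second type with second largest corner $A$, any move by the duo yields a configuration in one of the four symmetric/almost-symmetric classes with $A$ as the corner towards which the duo is sent (using lemma~\ref{almost_second_type_lemma} to rule out partitive symmetries, and a short argument to rule out the $DB$ axis). Once you have this closure, the leading duo and target are literally fixed throughout, pending moves are automatically consistent, and termination is immediate because each move strictly decreases the distance of a duo member to the fixed target. Your potential argument and the worry about the target switching are then unnecessary.
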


 \begin{proof}
 
  \textbf{1.} Suppose that the initial configuration is symmetric of the first type. Also let $D$ be the largest corner, and $\{r_1, r_2\}$ be the leading duo. If they both move synchronously towards $D$, then the configuration remains symmetric of the first type with $D$ as the largest corner. If only one moves, say $r_1$, then the configuration becomes almost symmetric of the first type with $D$ as the largest corner. The leading duo in the new configuration is again $r_1$ and $r_2$, and should again move towards $D$. Note that $r_2$ may have a pending move towards $D$, and this is consistent with the algorithm. Similarly if the initial configuration is almost symmetric of the first type, then after any move by the leading duo the configuration becomes symmetric of the first type or again almost symmetric of the first type. In either case $D$ remains the largest corner, the leading duo is unchanged and any pending move is towards $D$. Hence if the initial configuration is symmetric of the first type or almost symmetric of the first type, the leading duo remain invariant and keep moving towards $D$ as the algorithm proceeds. Hence eventually $D$ gets occupied, even if one of the leading duo crashes.
  
%   Now suppose that the initial configuration is almost symmetric of the first type. Again let D be the largest corner, and $r$ and $r'$ be the leading duo. We have already proved in lemma \ref{first_type_lemma} that a partitive configuration is not created by any move of the leading duo. If both $r$ and $r'$ move synchronously towards D, then the configuration remains almost symmetric of the first type. If one moves then the new configuration is either symmetric of the first type or remains almost symmetric of the first type. Again the pending move of the other robot is consistent with the algorithm for almost symmetric of the first type configuration, as the leading duo in the new configuration is again $r$ and $r'$, and is supposed to move towards D. 

  \textbf{2.} Now assume that the initial configuration is almost symmetric of the second type. Let $D$ be the largest corner, and $A$ be the second largest corner. After a move by one or both of the leading duo, the configuration is either asymmetric or symmetric. If the configuration is asymmetric and the largest corner is $A$, then it is almost symmetric of the first kind. If the largest corner is $D$ or $B$, then the configuration is almost symmetric of the second kind with $A$ being the second largest corner. Now suppose that the new configuration is symmetric. We have shown in lemma \ref{almost_second_type_lemma} that the symmetry is not partitive. Hence it must be a diagonal symmetry. The axis of symmetry can not be $DB$. This is because before the moves, $A$ was larger than $C$. So after moves towards $A$ and away from $C$, the corners $A$ and $C$ can not become symmetrical. Thus the axis of symmetry is $AC$. Then the configuration is either symmetric of the second type with $A$ as the second largest corner, or symmetric of the first type with $A$ as the largest corner. So we see that in each of the cases the leading duo remains the same and must continue to move towards $A$. Similarly if the initial configuration is symmetric of the second type with $A$ as the second largest corner, after any move the leading duo will remain unchanged and should again move towards $A$. Hence if the initial configuration is symmetric of the second type or almost symmetric of the second type, the leading duo remain invariant and keep moving towards the same corner as the algorithm proceeds. Any pending move arising due to the asynchronous environment is consistent with the algorithms for the constantly varying configurations. Therefore eventually exactly one corner gets occupied even if one of them crashes. $\square$
  
  \begin{algorithm}
    \setstretch{0.9}
    \SetKwInOut{Input}{Input}
    \SetKwInOut{Output}{Output}
    \SetKwProg{Fn}{Function}{}{}
    \SetKwProg{Pr}{Procedure}{}{}

    \Pr{\textsc{Move0()}}{

    \uIf{the configuration is purely asymmetric}{
    
    \uIf{the configuration is critical}{
    
    \If{$r$ is  one of the leading duo}{
    
    Move along the column towards the largest corner. 
    
    }

    }
    
    \Else{
    
    \If{ $r$ is one of the leading duo at $(i, j)$}{
    
    \uIf{($j = 1$) or ($j = 2$ and $i > 2$) or ($i = n$ and $j \leq \frac{n}{2}$)}{
    
    Move along the column towards the largest corner. \;
    
    }
    
    \Else{
    
    Move along the row towards the largest corner. \;
    
    }
    
    }

    }
    
    }
    
    \uElseIf{the configuration is symmetric of the first type or almost symmetric of the first type}{
    
    \If{$r$ is one of the leading duo}{
    
    Move towards largest corner \;
    
    }

    }
    
    \ElseIf{the configuration is symmetric of the second type or almost symmetric of the second type}{
    
    \If{$r$ is one of the leading duo}{
    
    Move towards second largest corner \;
    
    }

    }

    }
      
    \caption{\textsc{Move0}}
    \label{Move0} 
\end{algorithm}

 \end{proof}
 
  \begin{figure}[thb!]
\centering
\subcaptionbox[Short Subcaption]{
     $D$ is the largest corner and $\{r_1, r_2\}$ are the leading duo.   \label{counter_fig_a}
}
[
    0.4\textwidth % width of caption
]
{
    \def\svgwidth{0.4\textwidth}
    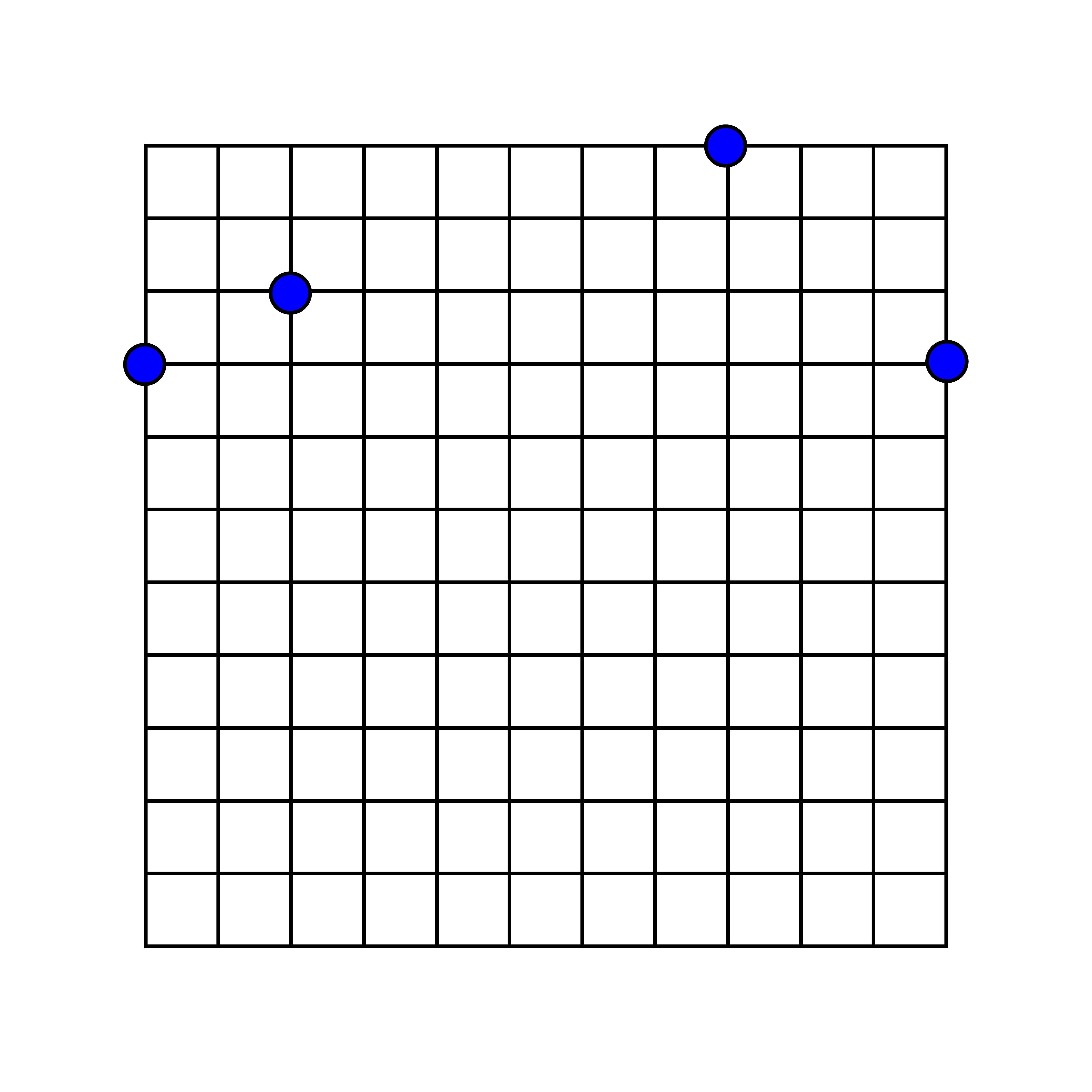
}
\hfill % seperation
\subcaptionbox[Short Subcaption]{
    After a move by $r_2$, $C$ becomes the largest corner with $\{r_3, r_2\}$ as the leading duo.  \label{counter_fig_b}
}
[
    0.4\textwidth % width of caption
]
{
    \def\svgwidth{0.4\textwidth}
    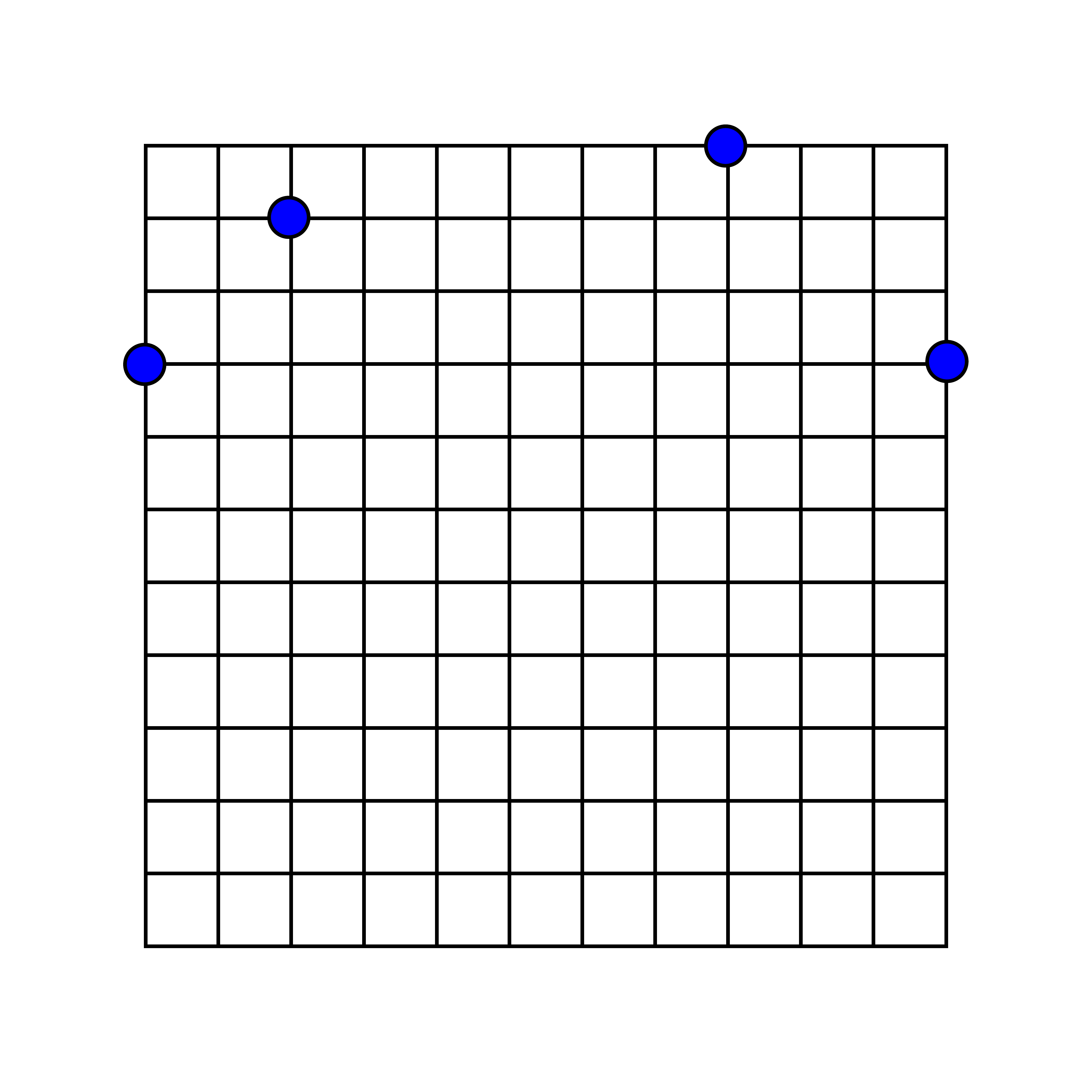
}
\\
\subcaptionbox[Short Subcaption]{
       After simultaneous moves by $r_2$ and $r_3$, the configuration becomes partitive. \label{counter_fig_c}
}
[
    0.4\textwidth % width of caption
]
{
    \def\svgwidth{0.4\textwidth}
    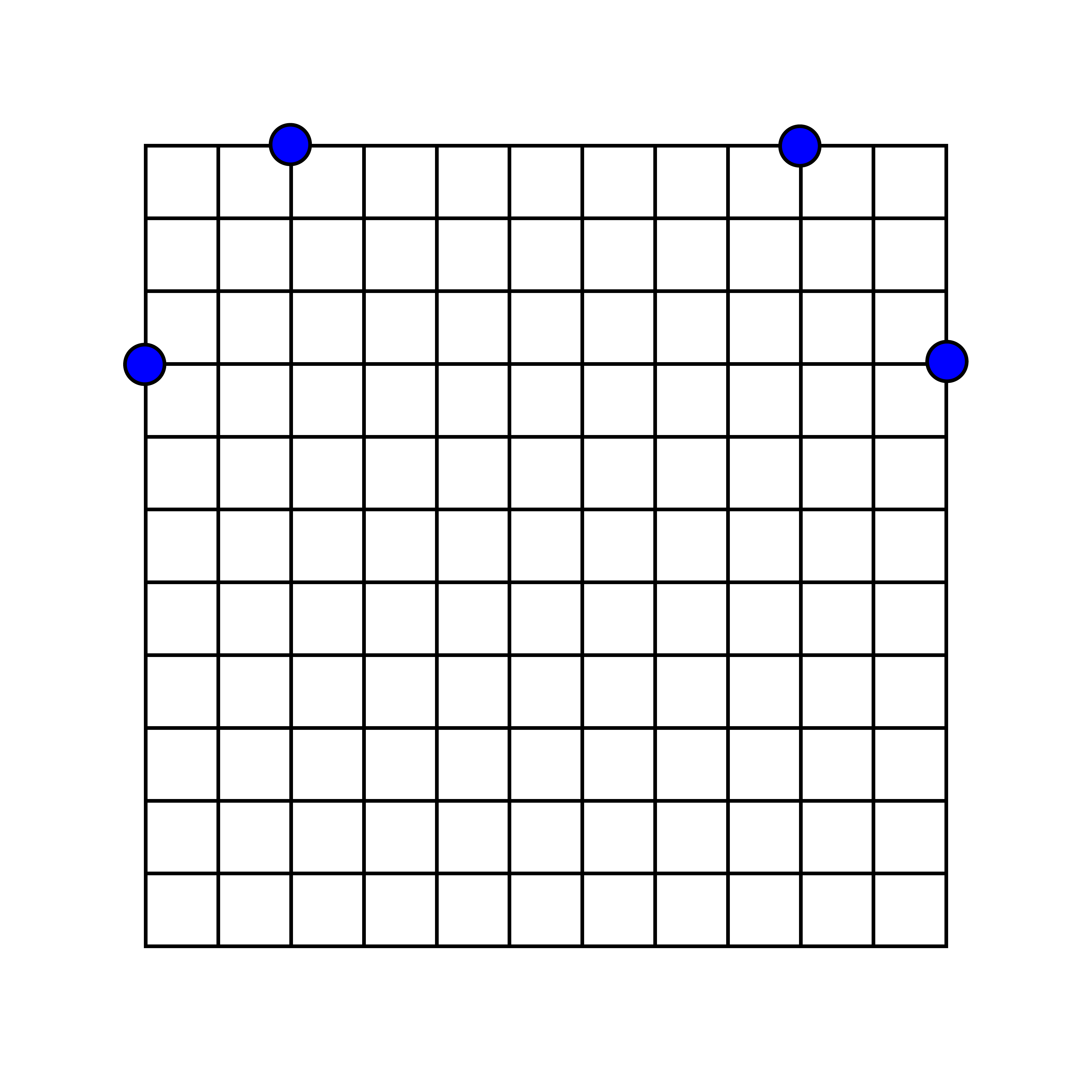
}
% \hfill
% \subcaptionbox[Short Subcaption]{
%        fsa \label{configuration2_fig_d}
% }
% [
%     0.45\textwidth % width of caption
% ]
% {
%     \def\svgwidth{0.45\textwidth}
%     \import{}{images/almost_second_partitive_leading_1.pdf_tex}
% }

\caption[Short Caption]{}
\label{counter_fig}
\end{figure}

 Now we discuss the purely asymmetric case. Suppose that $\lambda_{DA}$ is the strictly largest sequence. Then the most obvious algorithm for the purely asymmetric case, would be to ask the `first two' robots (the leading duo in this case) in the sequence $\lambda_{DA}$ to move towards $D$ (i.e., to reduce their Manhattan distance from $D$). However this naive strategy won't always work. See examples in figure \ref{counter_fig}. In figure \ref{counter_fig_a} the configuration is purely asymmetric and $\lambda_{DA}$ is the strictly largest sequence. So the robots $r_1$ and $r_2$ must move towards $D$. Suppose $r_1$ has crashed, and only $r_2$ moves one step vertically.  The new configuration is given in figure \ref{counter_fig_b}. The configuration is still purely asymmetric, but the largest sequence is now $\lambda_{CD}$. So now $r_2$ and $r_3$ are the leading duo. Suppose they both perceive this configuration, and hence must now move towards $C$. Suppose $r_2$ moves vertically, and $r_3$ horizontally. Assume that they move simultaneously, or synchronously. But this new configuration, shown in figure \ref{counter_fig_c}, has a vertical symmetry and hence partitive. Although $r_1$ has a pending towards $D$, since it has crashed it will not any longer take part in the algorithm. Hence now it has become impossible to gather the robots.

 Hence we need to impose some restrictions on the movement of the robots. But first we formulate a convention to specify the location of a robot in a purely asymmetric configuration. Since in a purely asymmetric configuration there is a unique lexicographically largest sequence, the position of a robot on the grid can be specified uniquely with respect to the largest sequence, in the following way. If $\lambda_{DA}$ is the (strictly) largest sequence, the grid can seen as a square matrix with $DA$ as the first column, and $DC$ as the first row. So if a robot is on the $i$th row and $j$th column, its position can be specified by the tuple $(i, j)$. With this convention in mind, we propose algorithm \ref{toy} for the movement of the leading duo. Incorporating this algorithm in the function  \textsc{Move0}, however, shall require some more work.
 
 \begin{algorithm}[h!]
    \setstretch{0.9}
    \SetKwInOut{Input}{Input}
    \SetKwInOut{Output}{Output}
    \SetKwProg{Fn}{Function}{}{}
    \SetKwProg{Pr}{Procedure}{}{}

    \If{the configuration is purely asymmetric}{
    
    \If{ $r$ is one of the leading duo at $(i, j)$}{
    
    \uIf{($j = 1$) or ($j = 2$ and $i > 2$) or ($i = n$ and $j \leq \frac{n}{2}$)}{
    
    Move along the column towards the largest corner. \;
    
    }
    
    \Else{
    
    Move along the row towards the largest corner. \;
    
    }
    
    }

    }

    \caption{}
    \label{toy} 
\end{algorithm}

%  
%  If the configuration is almost symmetric, the leading duo would be the robots corresponding to the first non-zero terms of the two sequences associated with the second largest corner. If the configuration is symmetric of the first type then the largest duo are the robots corresponding to the first non-zero terms of the two sequences associated with the largest corner. 

\begin{lemma}\label{strictly_large}
 Consider a purely asymmetric configuration, with the lexicographically strictly largest sequence being $\lambda_{DA}$. Then after a move by a robot according to algorithm \ref{toy}, $\lambda_{DA}$ remains strictly larger than $\lambda_{CD}, \lambda_{AD}, \lambda_{CB}, \lambda_{AB}, \lambda_{BA}$ and $\lambda_{BC}$.
\end{lemma}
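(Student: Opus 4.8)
\emph{Setup and strategy.} Orient the grid so that the largest corner $D$ is the cell $(1,1)$, with the convention that the $\lambda_{DA}$-index of a cell $(i,j)$ is $n(j-1)+i$; thus $\lambda_{DA}$ lists the columns of the grid in order, each scanned starting from the side $DC$. Write down once and for all the analogous address maps for the other seven sequences (each is the same array of symbols from $\{0,1,2\}$ read in a different order), since the whole argument is bookkeeping about where a single robot's move sends a $1$ in each of these readings. Let the leading duo occupy the first one or two occupied $\lambda_{DA}$-indices: call them $p$ (the first robot, if the first occupied node is a singleton) and, when present, $q>p$ (the second robot), recalling that every $\lambda_{DA}$-index below $p$, and every one strictly between $p$ and $q$, is currently empty. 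The crucial elementary observation is that both moves permitted by algorithm~\ref{toy} send the moving robot to a cell of strictly smaller $\lambda_{DA}$-index (a column move toward $D$ decreases it by $1$, a row move toward $D$ by $n$). The proof then splits into: (i) any such move strictly increases $\lambda_{DA}$; (ii) none of the six listed sequences catches up with the new $\lambda_{DA}$.

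\emph{Step (i).} If the first robot moves, its destination index $p'\in\{p-1,\,p-n\}$ lies below $p$, hence is empty before the move, so $\lambda_{DA}$ turns a $0$ into a $1$ at index $p'$, while any change at index $p$ itself (a multiplicity of two degrading to a singleton, say) occurs strictly later. If the second robot moves, its destination index $q'$ is either strictly between $p$ and $q$ (an empty cell: $0\to1$), or equal to $p$ (there the first robot is a singleton by hypothesis of this case, so $1\to2$), or below $p$ (again $0\to1$); in every case the first index at which $\lambda_{DA}$ changes receives a strictly larger symbol and all earlier entries are untouched. The case where the first occupied node is a multiplicity is identical (the moved robot goes to an empty cell below $p$). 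Hence $\lambda_{DA}^{new}>\lambda_{DA}^{old}$; writing $\ell$ for the first index of change we have $\lambda_{DA}^{new}[\ell]\in\{1,2\}$, with $\lambda_{DA}^{new}$ and $\lambda_{DA}^{old}$ agreeing on $1,\dots,\ell-1$.

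\emph{Step (ii).} Suppose, for contradiction, that $\lambda_X^{new}\ge\lambda_{DA}^{new}$ after the move for some $X\in\{\lambda_{CD},\lambda_{AD},\lambda_{CB},\lambda_{AB},\lambda_{BA},\lambda_{BC}\}$. Since $\lambda_{DA}^{old}$ was strictly largest and $\lambda_{DA}^{old}<\lambda_{DA}^{new}$ by Step~(i), we get $\lambda_X^{old}<\lambda_X^{new}$: the move \emph{increased} $\lambda_X$. As only the source and destination cells changed and the source symbol did not grow, the destination cell's $X$-index $d_X$ must be the first index at which $\lambda_X$ changed, and it increased there. Now run through the finitely many regimes of algorithm~\ref{toy}---$j=1$; $j=2$ with $i>2$; $i=n$ with $j\le n/2$; and the complementary ``row'' regime---and in each compare $\lambda_X^{new}$ with $\lambda_{DA}^{new}$ by locating the earlier of the two indices $\ell$ and $d_X$ and reading off symbols, using $\lambda_{DA}^{old}>\lambda_X^{old}$ and the address maps. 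The point of the case split in algorithm~\ref{toy} is exactly that in each regime the mandated move is the one that either leaves the relevant prefix of $\lambda_X$ untouched or else drops the moved robot so early in $\lambda_{DA}$ that $\lambda_{DA}^{new}$ wins the comparison before $\lambda_X^{new}$ can; the move the algorithm forbids is the one producing the overtaking illustrated in figure~\ref{counter_fig}.

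\emph{Main obstacle.} Everything hard is in Step~(ii): it is a multi-way case analysis (six competing sequences, the few position regimes of algorithm~\ref{toy}, two move types, minus the many combinations that cannot co-occur), and within each surviving case one must correctly identify the first index where the competitor could gain and then win a short symbol comparison. The genuinely delicate cases are those in which the moved leading-duo robot sits near a corner other than $D$, so that its source cell is simultaneously deep in the tail of $\lambda_{DA}$ and in the head of some $\lambda_X$---precisely the situations the ``$j=1$'', ``$j=2$, $i>2$'' and ``$i=n$, $j\le n/2$'' clauses of algorithm~\ref{toy} are engineered to neutralise---and verifying them hinges on getting the eight address maps, and the accompanying off-by-one arithmetic, exactly right. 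There is no conceptual hurdle beyond this bookkeeping.
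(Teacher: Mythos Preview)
Your framework is sound: Step~(i) is correct, and the reduction at the start of Step~(ii)---any overtaking $\lambda_X$ must have strictly increased, hence the destination cell precedes the source in the $X$-reading---is a clean way to dispose of $\lambda_{BA}$ and $\lambda_{BC}$ at once (moving toward $D$ is moving away from $B$, so the source comes first in both $B$-readings and these sequences only drop). But after that the proposal stops. ``Run through the finitely many regimes'' is an instruction, not an argument, and you say so yourself in the final paragraph. The lemma's entire content \emph{is} that case analysis; without it you have explained \emph{that} the clauses of algorithm~\ref{toy} are meant to block overtaking, not \emph{that they succeed}.

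The paper carries out the four nontrivial comparisons explicitly, and each one pins down a clause of the algorithm. For $\lambda_{CD}$ (only a column move can raise it) overtaking would require $n(j-1)+i-1 \ge n(i-1)-j+1$, i.e.\ $n(j-i)+(i+j)\ge 2$, and one checks this fails under each of the three column-move clauses $j=1$; $j=2,\,i>2$; $i=n,\,j\le n/2$. For $\lambda_{CB}$ the analogous index inequality reduces to $j>n/2$, again excluded by the column-move clauses. For $\lambda_{AB}$ (only a row move matters) the paper splits on whether the second robot $r$ is first, second, or later in the $\lambda_{AB}$ order; the delicate subcase is when $r$ is first in $\lambda_{AB}$ and the two leading positions differ by exactly one with $x=n/2$, where a short geometric argument about which half of the grid can contain robots is needed. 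For $\lambda_{AD}$ one observes that a row move by $r$ can land only at $(1,1)$ or $(2,1)$, which keeps $\lambda_{DA}$ ahead. These computations are short but they are the proof; your proposal identifies their shape and then defers them.
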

  
\begin{proof}
 If the configuration is purely asymmetric, then only the leading duo will move. Now if the robot corresponding to the first non-zero term of $\lambda_{DA}$ moves, then $\lambda_{DA}$ will remain strictly largest. So we shall only need to consider the case when only the second robot of the leading duo, say $r$, moves.
 
\underline{$\boldsymbol{\lambda_{DA}}$ \textbf{vs} $\boldsymbol{\lambda_{CD}}$}

Observe that the second robot of the leading duo always moves to an unoccupied node or a singly occupied node. Hence due to the weak multiplicity detection capability the move increases $\lambda_{DA}$. Similarly, if this move is along a row, then $\lambda_{CD}$ would decrease, while if the move is along a column, then $\lambda_{CD}$ increases. So we only need to consider moves along a column. 
 
 Before the move suppose that $r$ was at $(i, j)$. Then in terms of the sequence $\lambda_{DA}$ the robot moves from $n(j-1)+i$th place to $n(j-1)+i-1$th place, and in the sequence $\lambda_{CD}$ it moves from $ni-j+1$th place to $n(i-1)-j+1$th place. Hence the first term of $\lambda_{DA}$ that changes is the $n(j-1)+i-1$th term, and all the terms before it remain unchanged. Similarly the first term of $\lambda_{CD}$ that changes is the $n(i-1)-j+1$th term. Initially we had $\lambda_{DA}^{old} > \lambda_{CD}^{old}$. On the contrary assume that after the move we have $\lambda_{DA}^{new} \leq \lambda_{CD}^{new}$.  In that case, the position of the first term of $\lambda_{DA}$ that changes (increases) must be greater than or equal to the position of the first term  in the sequence $\lambda_{CD}$ that increases. To see this let $\Lambda_{DA}$ and $\Lambda_{CD}$ be the sequences obtained from $\lambda_{DA}$ and $\lambda_{CD}$ by taking only first $n(j-1)+i-1$ terms. Then $\Lambda_{DA}^{old} \geq \Lambda_{CD}^{old}$. If $ n(j-1)+i-1 < n(i-1)-j+1 $, then $\Lambda_{DA}^{new} > \Lambda_{CD}^{new}$. Hence $\lambda_{DA}^{new} > \lambda_{CD}^{new}$.

 Therefore, a necessary condition for $\lambda_{DA}^{new} \leq \lambda_{CD}^{new}$ is,
 $$ n(j-1)+i-1 \geq n(i-1)-j+1 $$
 $$ \Rightarrow n(j-i)+(i+j) \geq 2 $$
%  
%  In that case the first $n(i-1)-j$ terms of both sequences must be equal.

 It is easy to check that for $j=1$ or $j=2$, $i > 2$ or $i = n$, $j \leq \frac{n}{2}$ this criterion is not met, when $n \geq 4$. Hence, $\lambda_{DA}^{new} > \lambda_{CD}^{new}$.

 \underline{$\boldsymbol{\lambda_{DA}}$ \textbf{vs} $\boldsymbol{\lambda_{AB}}$}

 In this case we only need to consider moves along a row. Now consider the following cases.
 
  \textbf{Case 1:} We first assume that $r$ is the first robot in the sequence $\lambda_{AB}$. Let the first non-zero term of $\lambda_{DA}$ and $\lambda_{AB}$ be the $x$th and $y$th term respectively. We must have $y \geq x$.

   \textbf{Case 1A:} First assume that $x = y$. Then $r$ is not on $DA,$ and hence there is only a single robot on $DA$. Since $\lambda_{DA}$ is larger than $\lambda_{AD}$, $x \leq \frac{n}{2}$. Hence, $y \leq \frac{n}{2}$ and so $r$ must move columnwise. As mentioned earlier there is no need to consider columnwise moves.
   
   \textbf{Case 1B:}  Now suppose that $y \geq x+2$, and $r$ moves along the row to the $y-1$th place of $\lambda_{AB}$. Then $\lambda_{DA}$ remains strictly greater than $\lambda_{AB}$ as $y-1 \geq x+1 > x$.
   
   \textbf{Case 1C:} Let $y = x+1$. If $x < \frac{n}{2}$, then $y \leq \frac{n}{2}$ and $r$ must move columnwise. Also $x \ngtr \frac{n}{2}$, as otherwise $\lambda_{AD}$ becomes larger than $\lambda_{DA}$. So assume that $x = \frac{n}{2}$. Since $r$ is the second robot in the sequence $\lambda_{DA}$, the green and blue region together in figure \ref{color_fig1} has only two robots. Since $\lambda_{DA}$ is strictly larger than $\lambda_{BA}$ the pink and blue region together in figure \ref{color_fig1} also has only the same two robots. Then it is easy to see from figure \ref{color_fig2} that even after the move by $r$, $\lambda_{DA}$ remains strictly greater than $\lambda_{AB}$.

  \textbf{Case 2:} Now suppose that $r$ is the second robot in the sequence $\lambda_{AB}$. Assume that $r$ is at the $x'$th position in $\lambda_{DA}$, and at the $y'$th position in $\lambda_{AB}$. Also assume that the non-zero term, corresponding to the first robot, in the sequences $\lambda_{DA}$ and $\lambda_{AB}$ are at the $x$th and $y$th positions respectively. Then $x \leq y$. If $x < y$, then we are done as after the move by $r$, $\lambda_{DA}$ remains strictly larger than $\lambda_{AB}$. So let $x = y$. In this case we should have $x' \leq y'$. After a rowwise move by $r$, it moves from $x'$th position in $\lambda_{DA}$ to $(x'-n)$th position, and  from $y'$th position in $\lambda_{AB}$ to $(y'-1)$th position. Then even after the move $\lambda_{DA}$ remains strictly larger than $\lambda_{AB}$, as $x'-n < y'-1$.

  \begin{figure}[thb!]
\centering
\subcaptionbox[Short Subcaption]{
       Before the move by $r$ \label{color_fig1}
}
[
    0.48\textwidth % width of caption
]
{
    \def\svgwidth{0.48\textwidth}
    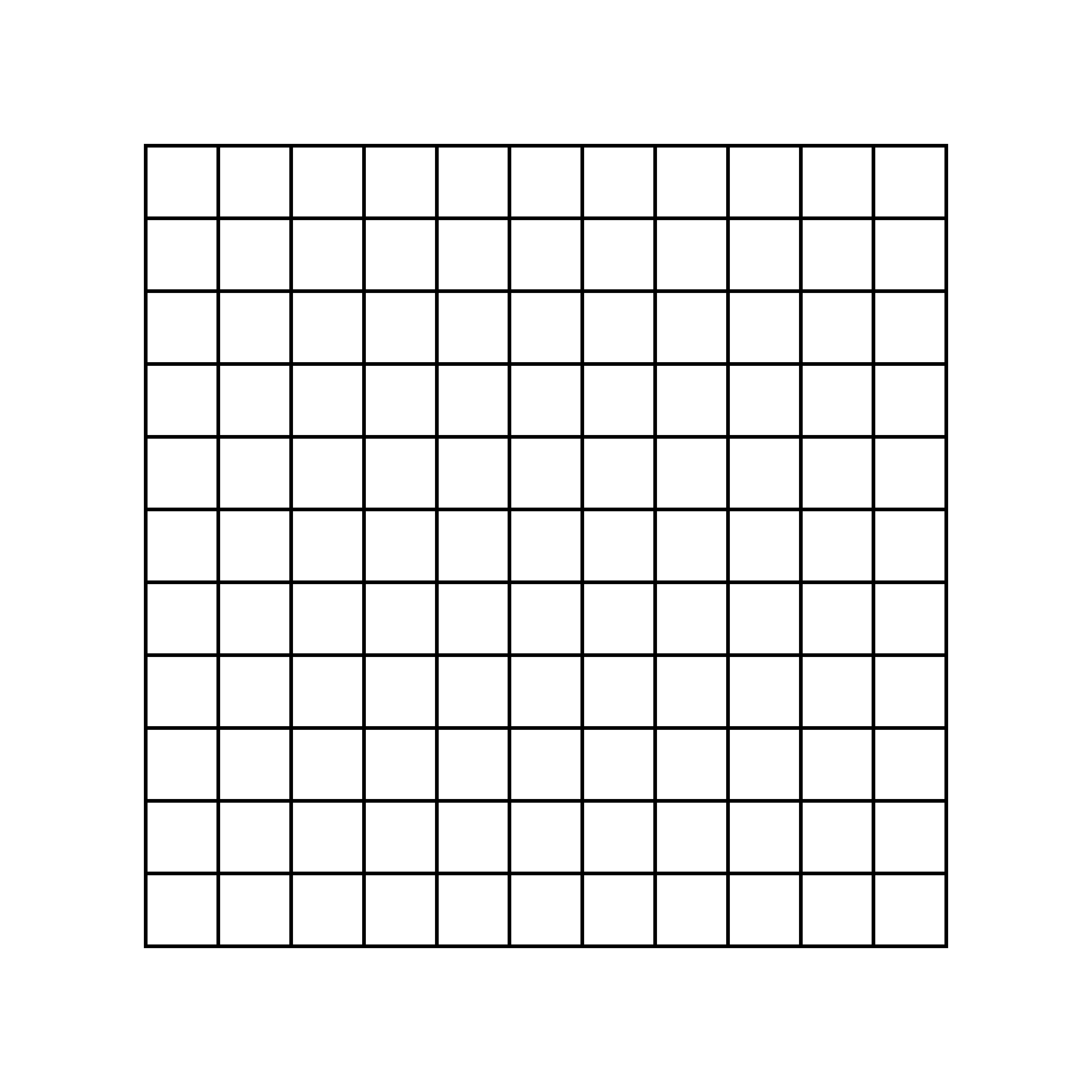
}
\hfill % seperation
\subcaptionbox[Short Subcaption]{
   After the move by $r$ \label{color_fig2}
}
[
    0.48\textwidth % width of caption
]
{
    \def\svgwidth{0.48\textwidth}
    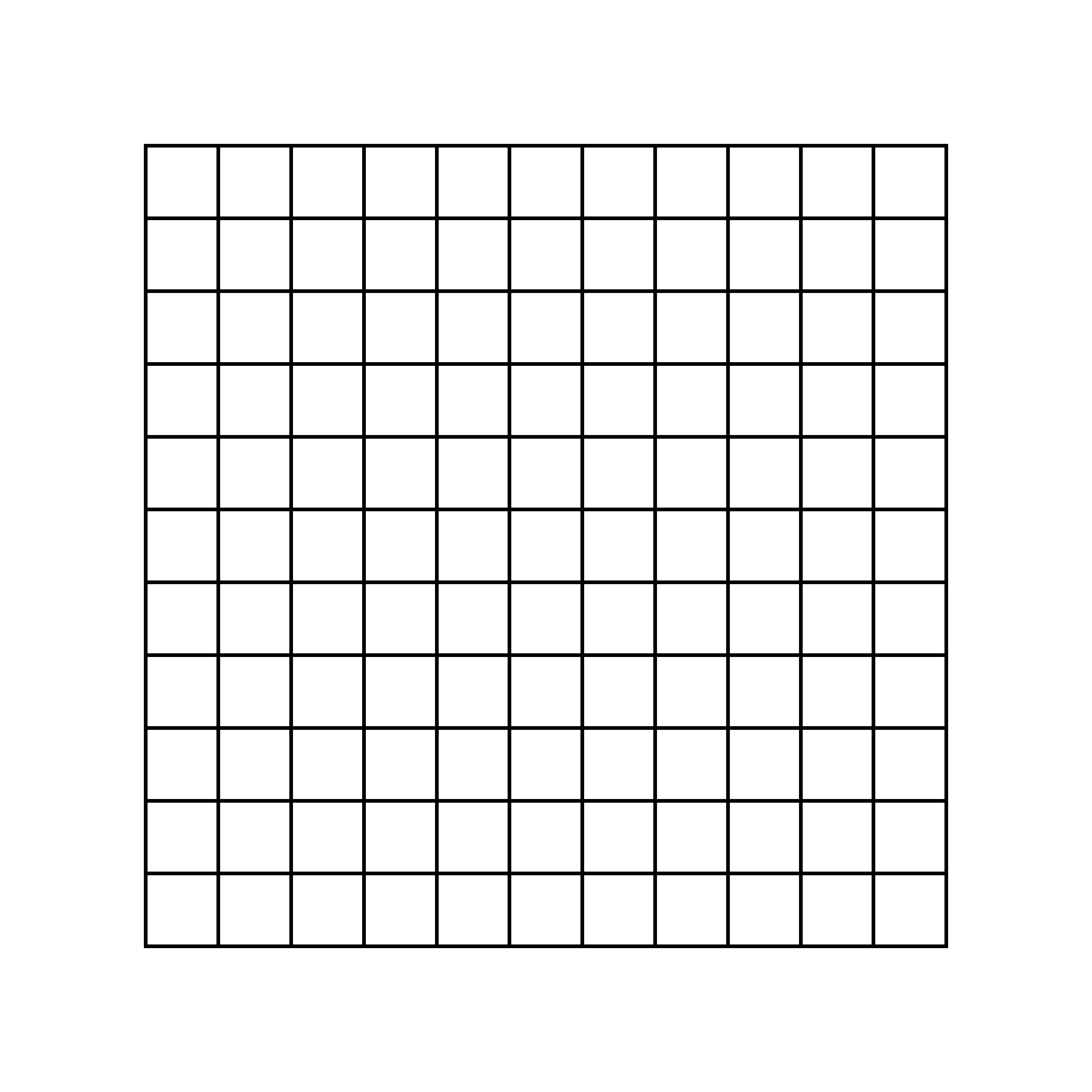
}
\caption[Short Caption]{Illustrations supporting the proof of case 1C of lemma \ref{strictly_large}}
\label{color_fig}
\end{figure}

  \textbf{Case 3:} Finally consider the case where $r$ neither the first nor the second robot in the sequence $\lambda_{AB}$. But since $r$ is the second robot in the sequence $\lambda_{DA}$, after the move $\lambda_{DA}$ remains strictly larger than $\lambda_{AB}$.

 \underline{$\boldsymbol{\lambda_{DA}}$ \textbf{vs} $\boldsymbol{\lambda_{AD}}$}
 
 Again we only need to consider moves along a row. Hence assume that $r$ is not in $AD$. This implies that there is only one robot, say $r'$, in $AD$. Suppose that $r'$ is at the $x$th position in the sequence $\lambda_{DA}$. Since $\lambda_{DA}$ is strictly larger than $\lambda_{AD}$, $x \leq \frac{n}{2}$. So clearly if $r$ does not move to $AD$, then $\lambda_{DA}$ remains strictly larger than $\lambda_{AD}$, as it is decided looking at first $\frac{n}{2}$ terms. Now if $r$ moves to $AD$, then it moves either to (1,1) or (2,1), according to algorithm \ref{Move0}. Clearly in either case, $\lambda_{DA}$ remains strictly larger than $\lambda_{AD}$.

 \underline{$\boldsymbol{\lambda_{DA}}$ \textbf{vs} $\boldsymbol{\lambda_{CB}}$}
 
 We only need to consider columnwise moves. Let $(i,j)$ be the position of $r$ before the move. Then in terms of the sequence $\lambda_{DA}$ the robot moves from $n(j-1)+i$th place to $n(j-1)+i-1$th place, and in the sequence $\lambda_{CB}$ it moves from $n(n-j)+i$th place to $n(n-j)+i-1$th place. Using the same arguments as earlier, we have the following sufficient condition for ${\lambda_{DA}}$ to remain strictly larger than ${\lambda_{CB}}$:
 $$n(j-1)+i-1 < n(n-j)+i-1$$
 $$\Rightarrow j \leq \frac{n}{2}$$
 
Clearly ${\lambda_{DA}}$ remains strictly larger, since $r$ does not move columnwise when $j > \frac{n}{2}$

\underline{$\boldsymbol{\lambda_{DA}}$ \textbf{vs} $\boldsymbol{\lambda_{BC}}$ \textbf{and} $\boldsymbol{\lambda_{DA}}$ \textbf{vs} $\boldsymbol{\lambda_{BA}}$}

Since $r$ moves towards $D$, it actually moves away from $B$. Hence even after the move ${\lambda_{DA}}$ remains strictly larger than both $\lambda_{BA}$ and $\lambda_{BC}$.  $\square$
 
\end{proof}

 \begin{lemma}\label{DC}
 Consider a purely asymmetric configuration, with the strictly largest sequence being $\lambda_{DA}$. Let $\{r_1, r_2\}$ be the leading duo with $r_1$ being the first and $r_2$ the second robot. Then after a move by $r_1$ or $r_2$ or both, according to algorithm \ref{toy}, the new configuration is one of the following:
 
 \begin{enumerate}
  \item purely asymmetric with a possible pending move consistent with algorithm \ref{toy},
  
  \item symmetric of the first type with a possible pending move of $r_1$ towards the largest corner $D$,
  
  \item almost symmetric of the first type with a possible pending move of $r_1$ towards the largest corner $D$,
  
  \item almost symmetric of the second type with a possible pending move of $r_1$ towards  $D$.
 \end{enumerate}

\end{lemma}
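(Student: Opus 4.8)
The plan is to lean on Lemma \ref{strictly_large} together with the classification of configurations set up above. Since $\mathcal{C}$ is purely asymmetric, algorithm \ref{Move0} moves only the leading duo $\{r_1,r_2\}$, and by algorithm \ref{toy} each of their moves is a single step toward $D$; hence the configuration $\mathcal{C}'$ reached after one or both of these steps differs from $\mathcal{C}$ only in the positions of $r_1$ and/or $r_2$. I would first record the elementary structural facts: the first robot $r_1$ must lie on the side $DA$ (otherwise comparing $\lambda_{DA}$ with the sequences rooted at $A$, $B$, $C$ forces the whole boundary to be empty, against the MES hypothesis), so $r_1$ moves columnwise toward $D$, never creates a multiplicity, and can be displaced from the top of $\lambda_{DA}$ only by $r_2$, whereas $r_2$ can create a multiplicity only by stepping onto $r_1$'s cell. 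By Lemma \ref{strictly_large}, in $\mathcal{C}'$ the sequence $\lambda_{DA}$ is still strictly larger than each of $\lambda_{CD},\lambda_{AD},\lambda_{CB},\lambda_{AB},\lambda_{BA},\lambda_{BC}$, so $\lambda_{DA}$ remains lexicographically maximal among all eight sequences and the only sequence it can tie is $\lambda_{DC}$. The proof then splits on whether $\lambda_{DA}=\lambda_{DC}$ in $\mathcal{C}'$.

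If $\lambda_{DA}>\lambda_{DC}$ in $\mathcal{C}'$, then $\lambda_{DA}$ is the unique maximal sequence, so $\mathcal{C}'$ is asymmetric, hence --- by the trichotomy of asymmetric configurations (purely asymmetric / almost symmetric of the first type / almost symmetric of the second type, the last two disjoint by Lemma \ref{disjoint}) --- it is one of the possibilities (1), (3), (4). The work is the pending-move clause. The key auxiliary claim is that a lone step of $r_1$ keeps $\mathcal{C}'$ purely asymmetric: writing $\mathcal{C}'=(\mathcal{C}\setminus\{r_1^{\mathrm{old}}\})\cup\{r_1^{\mathrm{new}}\}$ and assuming for contradiction that $\mathcal{C}'$ is almost symmetric (of either type --- it cannot be symmetric or partitive since $\lambda_{DA}$ is strictly largest), the usual term-by-term comparison of $\lambda_{DA},\lambda_{DC}$ and the sequences at $A,C$ forces either $\mathcal{C}'$ itself to be symmetric (impossible) or $\mathcal{C}$ itself to be almost symmetric (against the hypothesis). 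Granting this, possibilities (3) and (4) can only arise after $r_2$ has moved, so the only possible pending move is $r_1$'s, and since algorithm \ref{toy} always sends $r_1$ toward $D$, this is a ``pending move of $r_1$ towards $D$'' as required. When instead $\mathcal{C}'$ stays purely asymmetric, the robot holding a pending move is still one of the leading duo of $\mathcal{C}'$ --- in the worst case $r_2$ has overtaken $r_1$ or merged onto its cell, but in every branch $r_1$ and $r_2$ are still the first two occupied cells, or the unique multiplicity, of $\lambda_{DA}$ --- and since the branch of algorithm \ref{toy} taken by a robot depends only on its coordinates, the pending move coincides with the step algorithm \ref{toy} prescribes in $\mathcal{C}'$.

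If instead $\lambda_{DA}=\lambda_{DC}$ in $\mathcal{C}'$: since $\lambda_{DA}$ is still strictly larger than the other six sequences, $\mathcal{C}'$ has $D$ and $B$ on a reflection axis and no further automorphism, so it is symmetric of the first type with largest corner $D$, which is possibility (2). Here the decisive observation, already isolated inside the proof of Lemma \ref{strictly_large}, is that a step of the first robot leaves $\lambda_{DA}$ strictly largest; hence the equality $\lambda_{DA}=\lambda_{DC}$ can only have been produced by $r_2$, so $r_1$ has not moved and any pending move is $r_1$'s. Finally $r_1$, being unmoved, still realizes the leading nonzero term of $\lambda_{DA}$, so it belongs to the leading duo of $\mathcal{C}'$, and its pending columnwise step toward $D$ is exactly the move algorithm \ref{Move0} prescribes for a robot impeding symmetry in a configuration symmetric of the first type.

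The step I expect to be the real obstacle is the pending-move bookkeeping in the asymmetric branch: proving the auxiliary claim that a lone move of $r_1$ cannot turn a purely asymmetric configuration into an almost symmetric one, and the companion verification that when $r_2$ moves alone and overtakes $r_1$ (or merges onto it) the configuration reached is purely asymmetric with $r_1$ still in the leading duo rather than an almost symmetric configuration in which $r_1$ is stranded outside it. As in Lemma \ref{strictly_large}, these reduce to translating robot positions into indices in $\lambda_{DA},\lambda_{DC},\dots$ and comparing the sequences term by term, using the precise thresholds built into algorithm \ref{toy} ($j=1$; $j=2$ with $i>2$; $i=n$ with $j\leq n/2$); the geometric crux is that reflection across $DB$ sends the $p$-th entry of $\lambda_{DA}$ to the $p$-th entry of $\lambda_{DC}$, so that ``$\lambda_{DA}$ and $\lambda_{DC}$ coincide after deleting their leading robots'' says precisely that the configuration minus those two robots is $DB$-symmetric.
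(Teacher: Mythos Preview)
Your proposal is correct and follows essentially the same line as the paper's proof: both arguments rest on Lemma \ref{strictly_large} to exclude ties with the six sequences other than $\lambda_{DC}$, both isolate the auxiliary claim that a lone move of $r_1$ keeps the configuration purely asymmetric, and both observe that therefore outcomes (2)--(4) can only occur after $r_2$ has moved, so the only pending move is $r_1$'s step toward $D$. The organization differs slightly --- the paper splits cases by which of $r_1,r_2$ moved, whereas you split on whether $\lambda_{DA}=\lambda_{DC}$ in $\mathcal{C}'$ --- but the content is the same; your treatment of the pending-move bookkeeping in case (1) is in fact more explicit than the paper's, and your exclusion of symmetric of the second type is handled implicitly via Lemma \ref{strictly_large} where the paper spells out the chain $\lambda_{DA}^{new}>\lambda_{DA}^{old}>\lambda_{BA}^{old}>\lambda_{BA}^{new}$.
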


 \begin{proof}
   
  If only  $r_1$ moves then clearly the configuration remains asymmetric. We show that the new configuration is also not almost symmetric. To see this assume that $r_1$ was initially at the $x$th node from $D$ on $DA$. After a move the configuration can not become almost symmetric of the first type as that would imply that the initial configuration was also almost symmetric of the first type. Now assume that the new configuration is almost symmetric of the second type. If $C$ is the second largest corner, then it implies that there is a robot on $AB$ at $(x-1)$th place from $B$. The same would imply if $A$ is the second largest corner and $r_1$ is not impeding symmetry. So the remaining case is that $A$ is the second largest corner and $r_1$ is impeding symmetry, i.e., $r_1$ is the only robot on $DA$. But that would imply that the initial configuration was also almost symmetric of the second type.

  So assume that $r_2$ also moves. After a move by $r_2$, $\lambda_{DA}$ remains to be strictly larger than $\lambda_{CD}, \lambda_{AD}, \lambda_{CB}, \lambda_{AB}, \lambda_{BA}$ and $\lambda_{BC}$. Hence after the move the configuration does not admit any partitive symmetry. However, after such a move $\lambda_{DA}$ and $\lambda_{DC}$ may become equal. Hence the configuration may become symmetric of the first type. In this case $r_1$ has not moved, and may have a pending move towards $D$. A move by $r_2$ can also make the configuration almost symmetric of the first type or almost symmetric of the second type. Again $r_1$ can have a pending move towards $D$. In the case when the new configuration is almost symmetric of the first type, the pending move is consistent with the algorithm. But when the new configuration is almost symmetric of the second type, the pending move is conflicting with the algorithm for almost symmetric configurations of the second type. Also note that a move by $r_2$ can not make the configuration symmetric of the second type. This is because before the move we had $\lambda_{DA}^{old} > \lambda_{BA}^{old}$. Now after a move by $r_2$ towards $D$ (and thus away from $B$), we shall have $$\lambda_{DA}^{new} > \lambda_{DA}^{old} > \lambda_{BA}^{old} > \lambda_{BA}^{new}$$
  $\square$
  \end{proof}

  Thus in a purely asymmetric configuration, a move by the second robot of the leading duo may create an almost symmetric configuration of the second type. In this case there might be a pending move which conflicts with the algorithm for almost symmetric configurations of the second type. This pending move can break down the entire gathering algorithm. To see this consider the example in figure \ref{counter_critical}. The configuration in figure \ref{counter_critical_a} is purely asymmetric, with $\lambda_{DA}$ being the largest sequence and $\{r_1, r_2\}$ the leading duo. Now according to algorithm \ref{toy}, $r_2$ moves columnwise towards $D$. But $r_1$ is yet to move, and has a pending move towards $D$. The new configuration in figure \ref{counter_critical_b} is almost symmetric of the second type, and $A$ is the second largest corner. Here $r_1$ and $r_3$ are the robots impeding symmetry, and hence should go towards $A$. But $r_1$ has a move pending towards $D$. Now suppose $r_3$ makes a move towards $A$. $r_1$ still has a move pending towards $D$. Now suppose $r_1$ and $r_3$ move simultaneously towards $D$ and $A$ respectively. The resulting configuration (figure \ref{counter_critical_d}) is a partitive configuration with no pending moves,
  and hence it is now impossible to gather the robots.
  
  \begin{figure}[thb!]
\centering
\subcaptionbox[Short Subcaption]{
       \label{counter_critical_a}
}
[
    0.48\textwidth % width of caption
]
{
    \def\svgwidth{0.48\textwidth}
    \import{}{images/counter_critical_a.pdf_tex}
}
\hfill % seperation
\subcaptionbox[Short Subcaption]{
     \label{counter_critical_b}
}
[
    0.48\textwidth % width of caption
]
{
    \def\svgwidth{0.48\textwidth}
    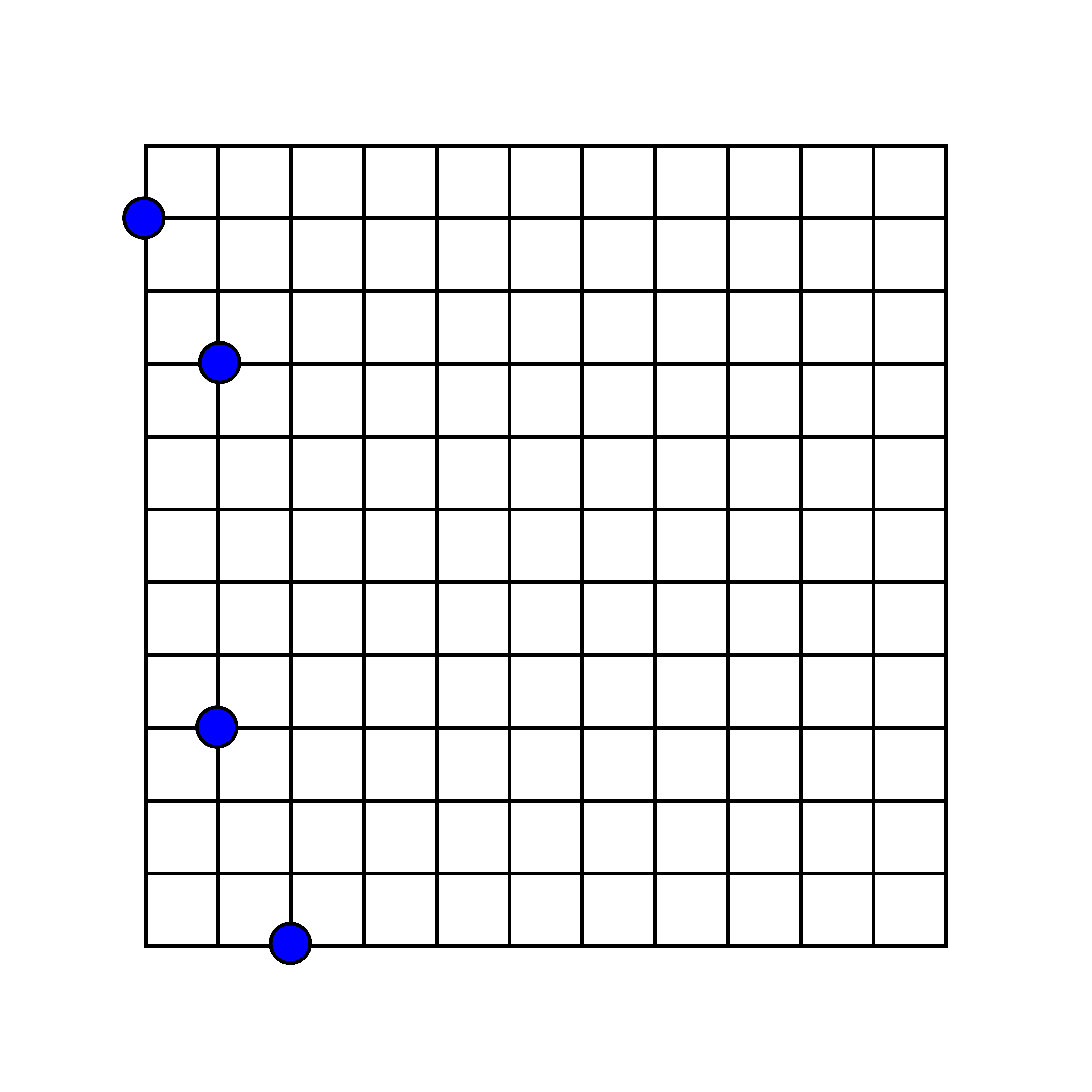
}
\\
\subcaptionbox[Short Subcaption]{
        \label{counter_critical_c}
}
[
    0.48\textwidth % width of caption
]
{
    \def\svgwidth{0.48\textwidth}
    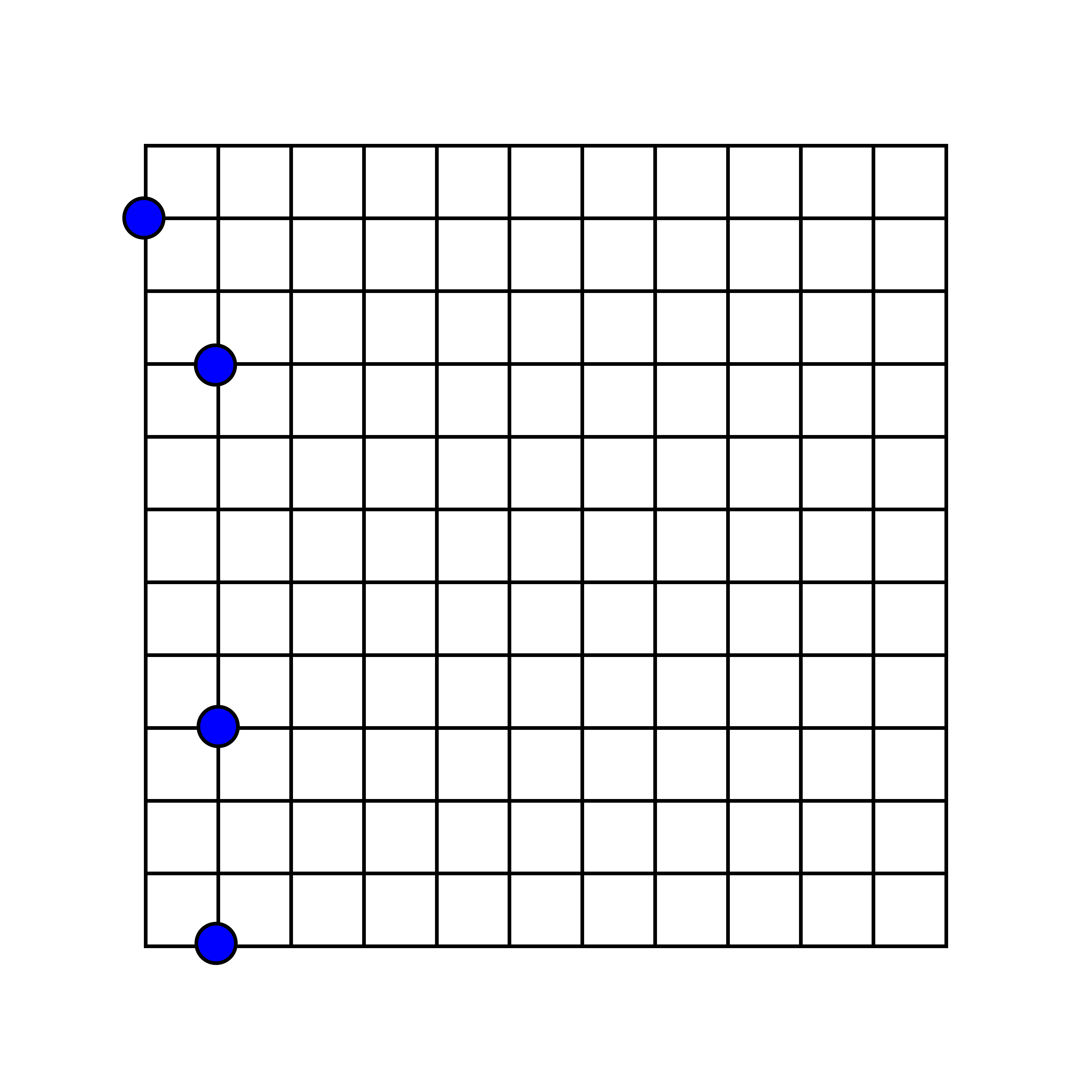
}
\hfill
\subcaptionbox[Short Subcaption]{
       \label{counter_critical_d}
}
[
    0.45\textwidth % width of caption
]
{
    \def\svgwidth{0.45\textwidth}
    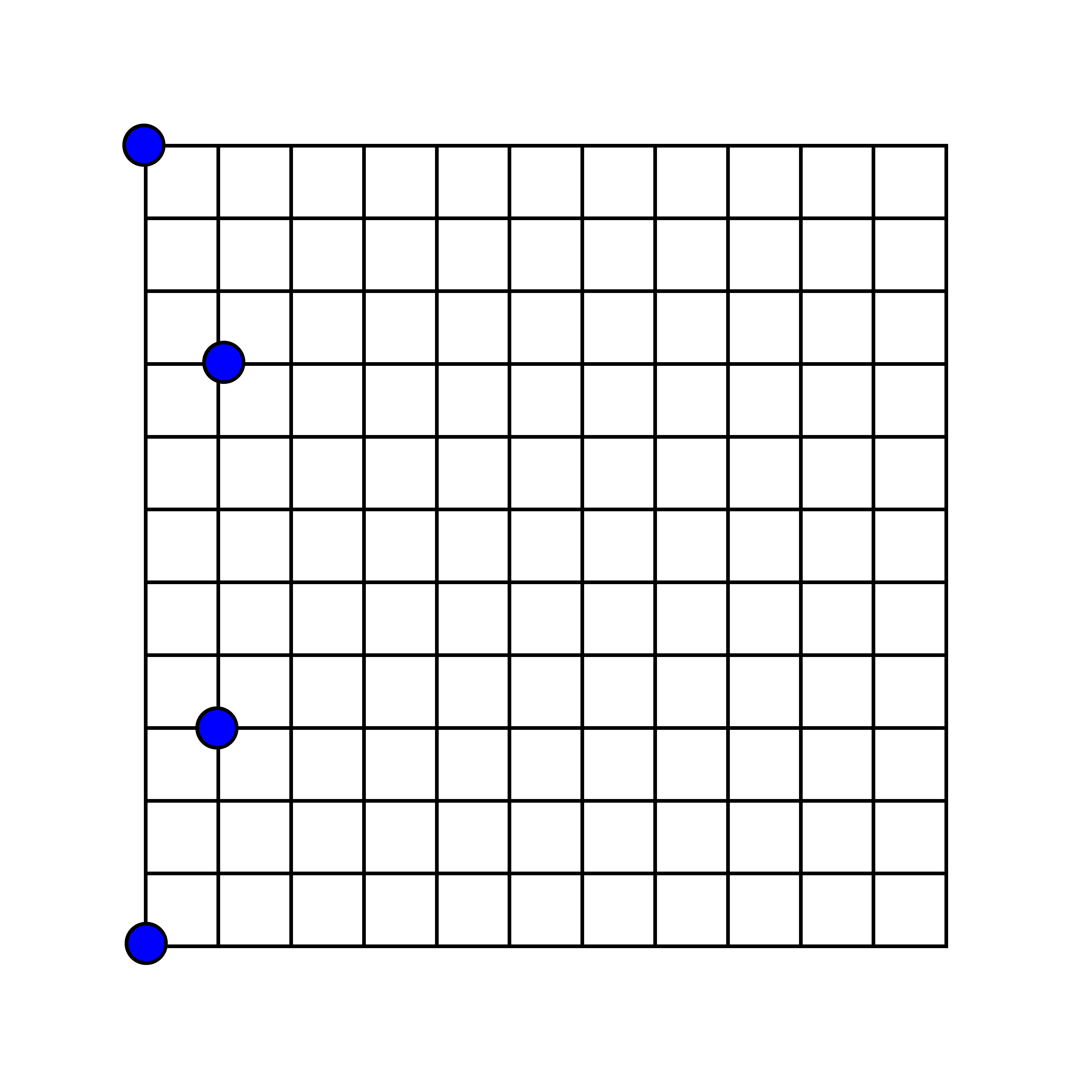
}

\caption[Short Caption]{Algorithm \ref{toy} can not be followed in critical configurations}
\label{counter_critical}
\end{figure}

Hence when the initial configuration is purely asymmetric, we have to make sure that the configuration doesn't become almost symmetric of the second type. The second robot of the leading duo can easily check if its move according to algorithm \ref{toy} will make the configuration almost symmetric of the second type. We shall call these configurations \emph{critical configurations}. Whenever the configuration becomes critical, the second robot of the leading duo will not move. Instead we have to ask some other robot to move. First we need to prove some results regarding the critical configuration.

\begin{lemma}\label{critical}
 Consider a critical configuration $\mathcal{C}$ with $\lambda_{DA}$ the lexicographically largest sequence. Assume that $\{r_1, r_2\}$ are the leading duo with $r_1$ being the first and $r_2$ the second robot. Let $\mathcal{C}'$ be the configuration after a move by $r_2$ according to the algorithm \ref{toy}. (Clearly according to the definition of critical configuration, $\mathcal{C}'$ is almost symmetric of the second type.) Then we have the following:
 
 \begin{enumerate}
  \item in $\mathcal{C}$, $r_2$ is not on $DA,$
  \item in $\mathcal{C}'$, $r_2$ is not on $DA,$
  \item in $\mathcal{C}$, $DA$ has exactly one robot, $r_1$, and $BA$ also has exactly one robot, say $r_3$,
  \item $A$ is the second largest corner in $\mathcal{C}'$.
 \end{enumerate}

\end{lemma}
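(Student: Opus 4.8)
The plan is to reduce parts 1--3 to the single assertion that $DA$ carries exactly one robot in $\mathcal{C}'$, then to pin down that the second largest corner of $\mathcal{C}'$ is $A$, and finally to read off the count on $BA$ from the diagonal reflection underlying the almost-symmetry of $\mathcal{C}'$. I would first record the standing facts. As $\mathcal{C}$ is purely asymmetric, $\lambda_{DA}$ is strictly the largest of the eight sequences; by Lemma \ref{strictly_large} it stays strictly larger than each of $\lambda_{CD},\lambda_{AD},\lambda_{CB},\lambda_{AB},\lambda_{BA},\lambda_{BC}$ after $r_2$'s move, and since $\mathcal{C}'$ is almost symmetric of the second type it is in particular asymmetric, so also $\lambda_{DA}\neq\lambda_{DC}$ in $\mathcal{C}'$; hence $\lambda_{DA}$ is strictly largest in $\mathcal{C}'$ with $D$ its unique largest corner. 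Since robots on $DA$ occupy positions $1,\dots,n$ of $\lambda_{DA}$ while every other robot occupies a position beyond $n$, the first non-zero term $r_1$ of $\lambda_{DA}$ lies on $DA$ in both $\mathcal{C}$ and $\mathcal{C}'$, and $r_1$ does not move. Finally, the second largest corner $Y$ of $\mathcal{C}'$ is one of $A,C$, and zeroing the two robots impeding symmetry (the first robots from $Y$ on the two boundary sides incident to $Y$) yields a configuration symmetric about the diagonal through $A$ and $C$; this reflection fixes that diagonal pointwise, swaps $D$ with $B$, and interchanges the sides $DA\leftrightarrow BA$ and $DC\leftrightarrow BC$.

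For parts 1 and 2 I would prove the equivalent statement that $DA$ holds no robot other than $r_1$ in $\mathcal{C}'$: this gives part 2 directly, it gives part 1 because a robot $r_2\in DA$ in $\mathcal{C}$ stays on $DA$ under Algorithm \ref{toy} (its column index is $1$), and it gives the $DA$ half of part 3 because the first two robots of $\lambda_{DA}$ would both lie on $DA$ if $DA$ held at least two robots, while $r_2$ is the only robot that moves and lies off $DA$ in $\mathcal{C}$. Assume, for contradiction, that $DA$ carries $r_1$ together with a further robot in $\mathcal{C}'$. Then $\lambda_{AD}$, which scans $DA$ starting from $A$, has two early non-zero terms, and I would compare $\lambda_{DA}$ with $\lambda_{AD}$, $\lambda_{AB}$ and with the $C$-sequences, using on one side that each of them is strictly dominated by $\lambda_{DA}$ in $\mathcal{C}'$ and on the other the near-equality of the two sequences at $Y$ together with the diagonal reflection: the surplus robot on $DA$ forces, through the reflection, a robot on $BA$ (resp. on $DC$ or $BC$) lying so close to the relevant corner that one of $\lambda_{AD},\lambda_{AB},\lambda_{CD},\lambda_{CB}$ surpasses $\lambda_{DA}$, or else that $\mathcal{C}$ was not purely asymmetric. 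This case-checking, split over whether $Y=A$ or $Y=C$ and over which boundary side of the grid carries $r_2$, is the principal technical burden of the lemma.

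For part 4, I would rule out $C$ as the second largest corner: the inequality $\lambda_{DA}>\lambda_{DC}$ carries over from $\mathcal{C}$ to $\mathcal{C}'$, and $r_2$ moved towards $D$, a corner incident to $A$ along the dominant side $DA$; turned into the comparisons $\lambda_{AD}$ vs $\lambda_{CD}$ and $\lambda_{AB}$ vs $\lambda_{CB}$ in $\mathcal{C}'$, this shows the largest of the four $A,C$-sequences lies among $\lambda_{AD},\lambda_{AB}$, i.e. $A$ is the second largest corner. With $A$ identified, the first robot from $A$ on side $AD$ is exactly $r_1$ (the unique robot on $DA$), so $r_1$ is one of the two robots impeding symmetry; zeroing the impeding robots therefore empties $DA$ in the $AC$-symmetric configuration and, by the reflection $DA\leftrightarrow BA$, empties $BA$ as well, so $BA$ carried exactly one robot (the other impeding robot) in $\mathcal{C}'$. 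It remains to transfer this count back to $\mathcal{C}$: a move of $r_2$ towards $D$ cannot bring it onto row $n$; the only way such a move could take $r_2$ off row $n$ (a columnwise step from some $(n,j)$ with $j\le n/2$) is excluded by the same kind of lexicographic comparison as in parts 1--2; and if $r_2$ already lies on $BA$ it moves along $BA$, leaving the count unchanged. Hence $BA$ carries exactly one robot $r_3$ in $\mathcal{C}$ too. I expect the two sub-cases to be excluded here --- $A$ versus $C$, and $r_2$ leaving $BA$ --- to demand precisely the lexicographic bookkeeping that already makes parts 1--2 the heart of the argument.
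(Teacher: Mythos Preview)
Your overall architecture --- reduce parts 1, 2 and the $DA$-half of 3 to the single claim ``$DA$ carries only $r_1$ in $\mathcal{C}'$'', then handle part 4, then use part 4 to get the $BA$-half of 3 --- is close to the paper's, the main difference being that you want to prove part 4 \emph{before} the $BA$ count, whereas the paper proves the $BA$ count by splitting on whether the second largest corner is $A$ or $C$ (so without yet knowing part 4). Your reordering is legitimate and would, if it worked, slightly streamline the $BA$ argument.

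The genuine gap is your argument for part 4. You write that ``$\lambda_{DA}>\lambda_{DC}$ carries over \ldots\ turned into the comparisons $\lambda_{AD}$ vs $\lambda_{CD}$ and $\lambda_{AB}$ vs $\lambda_{CB}$''. There is no such transfer: $\lambda_{DA}$ and $\lambda_{AD}$ scan the same columns in opposite row order, while $\lambda_{DC}$ and $\lambda_{CD}$ scan row by row, so an inequality between $\lambda_{DA}$ and $\lambda_{DC}$ says nothing direct about $\lambda_{AD}$ versus $\lambda_{CD}$. Nor does ``$r_2$ moved towards $D$'' favour $A$ over $C$: a columnwise step towards $D$ moves \emph{away} from $A$ and \emph{towards} $C$, so the heuristic points the wrong way in half the cases. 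The paper's proof of part 4 is quite different and uses the almost-symmetric structure of $\mathcal{C}'$ directly: assuming $C$ is the second largest corner, one looks at the first $y$ terms of $\lambda_{DA}$ and $\lambda_{BA}$ (where $y$ is $r_2$'s old position in $\lambda_{DA}$), zeroes out any term corresponding to a robot impeding symmetry, and argues that the resulting truncated strings $\tilde\Lambda_{DA}$ and $\tilde\Lambda_{BA}$ must be equal by the $AC$-reflection underlying the almost-symmetry --- yet strict inequality $\tilde\Lambda_{DA}>\tilde\Lambda_{BA}$ follows from $\Lambda_{DA}^{old}\ge\Lambda_{BA}^{old}$ together with the fact that $r_2$'s move strictly increases $\Lambda_{DA}$ while it can only decrease $\Lambda_{BA}$, and that neither $r_1$ nor $r_2$ can be an impeding robot (they would have to sit on $DC$ or $BC$). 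This is the missing idea you need.

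A smaller point: your transfer of the $BA$ count from $\mathcal{C}'$ back to $\mathcal{C}$ correctly isolates the dangerous case ``$r_2$ at $(n,j)$ with $j\le n/2$ steps columnwise off $BA$'', but ``excluded by the same kind of lexicographic comparison as in parts 1--2'' is not enough. The paper handles it by observing that if $BA$ had two robots with $r_2$ at $(n,j)$, $j\ge 3$, then the second column is empty (since $r_2$ is the second robot of $\lambda_{DA}$), and after the columnwise step $r_2$ sits on row $n-1$ while column $2$ is still empty, which is incompatible with $\mathcal{C}'$ being almost symmetric of the second type with $A$ the second largest corner; the residual case $j=2$ forces a robot adjacent to $A$ on $AB$, contradicting $\lambda_{DA}>\lambda_{AB}$ in $\mathcal{C}$.
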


\begin{proof}
 \textbf{1.} This follows from 2, since if $r_2$ is initially on $DA$ then after the move according to algorithm \ref{toy}, it remains on $DA$. Hence we only need to prove 2.

 \textbf{2.} On the contrary assume that after the move, $r_2$ reaches $DA$. Assume that in $\mathcal{C}'$, $r_1$ and $r_2$ are at $x$th and $y$th place from $D$ on $DA,$ respectively. Clearly $x \neq y$, because an almost symmetric configuration of the second type has no multiplicities.

  \textbf{Case 1 ($A$ is the second largest corner): } Consider the following cases.

  \textbf{Case 1A ($\boldsymbol{x>y}$):} Clearly $r_2$ is not a robot impeding symmetry on $DA$ in $\mathcal{C}'$. This implies that there is a robot on $AB$ at $y$th place from $B$. Hence $\lambda_{DA}^{old} < \lambda_{BA}^{old}$, a contradiction.
   
   \textbf{Case 1B ($\boldsymbol{x<y}$ and $\boldsymbol{r_2}$ is not a robot impeding symmetry):} This implies that there is a robot on $DA$ in $\mathcal{C}'$ beyond the $y$th place from $D$. Hence in  $\mathcal{C}$, $r_2$ was on $DA$  at $(y+1)$th place from $D$. Also there must be a robot at $y$th place from $B$ on $BA$ in $\mathcal{C}$. But the $y$th place from $D$ on $DA$  in $\mathcal{C}$ is empty. So comparing the first $y$ terms of $\lambda_{DA}^{old}$ and $\lambda_{BA}^{old}$, we can conclude that $\lambda_{DA}^{old} < \lambda_{BA}^{old}$, which is a contradiction.
   
   \textbf{Case 1C ($\boldsymbol{x<y}$ and $\boldsymbol{r_2}$ is a robot impeding symmetry):} If $r_2$ is on $DA$ in $\mathcal{C}$, then the configuration is already an almost symmetric of the second type. Hence $r_2$ is not on $DA$ in $\mathcal{C}$. This implies that in $\mathcal{C}$, there is only $r_1$ on $DA,$ which is at $x$th place from $D$. There is a robot $r_3$ on $BA$ at $x$th place from $B$. Now $r_3$ is not impeding symmetry on $BA$ in $\mathcal{C}'$, as $r_1$ is not impeding symmetry on $DA$. Hence there is another robot on $BA$ other than $r_3$. Now compare the first $n$ terms of $\lambda_{DA}^{old}$ and $\lambda_{BA}^{old}$. In case of $\lambda_{DA}^{old}$, the $x$th term is 1 and the other $n-1$ terms are all 0. For $\lambda_{BA}^{old}$, the $x$th term is also 1, but there is another 1 in the first $n$ terms. This implies that $\lambda_{DA}^{old} < \lambda_{BA}^{old}$, a contradiction.

  \textbf{Case 2 ($C$ is the second largest corner):} Construct sequences  $\Lambda_{DA}$ and $\Lambda_{BA}$ from  $\lambda_{DA}$ and $\lambda_{BA}$ respectively, by taking only the first $n$ terms. We must have $\Lambda_{DA}^{old} \geq \Lambda_{BA}^{old}$. After $r_2$ has moved on or to $DA,$ obviously we have $\Lambda_{DA}^{new} > \Lambda_{BA}^{new}$. But if the new configuration $\mathcal{C}'$ is almost symmetric of the second type, with $C$ as the second largest corner, then $\Lambda_{DA}^{new} = \Lambda_{BA}^{new}$, as the impeding robots are on $DC$ and $BC$. Thus we have a contradiction.

\textbf{3.} By 1, $r_2$ is not on $DA$ in $\mathcal{C}$. But it is the second robot. Hence $r_1$ is the only robot on $DA$ in $\mathcal{C}$. So it remains to show that $BA$ has only one robot in $\mathcal{C}$. To do that consider the following cases.

 \textbf{Case 1 ($A$ is the second largest corner):} By 2, $DA$ has only one robot in $\mathcal{C}'$, which is $r_1$. Hence if $A$ is the second largest corner, $r_1$ is a robot impeding symmetry in $\mathcal{C}'$. Then $BA$ also has only one robot in $\mathcal{C}'$. Hence in $\mathcal{C}$, $BA$ has at most two robots. If it has only one robot, then we are done. So assume that there are exactly two robots on $BA$ in $\mathcal{C}$. In that case the robot on $BA$ closest to $A$ is $r_2$ and it has made a columnwise move. Suppose that $r_1$ is on $DA$ at $x$th place from $D$ and $r_2$ is on $BA$ at $y$th place from $A$. First let $y > 2$. Then as $r_2$ is the second robot, the second column is empty in $\mathcal{C}$. After the move $r_2$ is now on the $(n-1)$th row, while the second column is still empty. This clearly contradicts the fact that $\mathcal{C}'$ is almost symmetric of the second type. So we must have $y=2$. So in $\mathcal{C}$, $BA$ has two robots with one of them very next to the corner $A$. Then clearly comparing the first $n$ terms we see that $\lambda_{DA}^{old} < \lambda_{AB}^{old}$, a contradiction.
 
 \textbf{Case 2 ($C$ is the second largest corner):} In this case the first $n$ terms of $\lambda_{DA}^{new}$ and $\lambda_{BA}^{new}$ are equal. By 2, the first $n$ terms of $\lambda_{DA}^{new}$ has exactly one 1, say at $x$th palce, and the rest are 0. Then in $\mathcal{C}'$, and hence also in $\mathcal{C}$, there is a robot on $BA$ at $x$th place from $B$. Since $\lambda_{DA}^{old}$ is strictly largest, there is no other robot on $BA$ in $\mathcal{C}$.

\textbf{4.} Lastly we show that $A$ is the second largest corner. On the contrary assume that $C$ is the second largest corner. Before the move, let $r_2$ correspond to the $y$th term in $\lambda_{DA}^{old}$. Now construct sequences  $\Lambda_{DA}$ and $\Lambda_{BA}$ from  $\lambda_{DA}$ and $\lambda_{BA}$ respectively, by taking only the first $y$ terms. We must have $\Lambda_{DA}^{old} \geq \Lambda_{BA}^{old}$. After a move by $r_2$, we have $\Lambda_{DA}^{new} > \Lambda_{BA}^{new}$. Let $\tilde{\Lambda}_{DA}^{new}$ and $\tilde{\Lambda}_{BA}^{new}$ be obtained from $\Lambda_{DA}$ and $\Lambda_{BA}$ by replacing any 1 with 0, if it corresponds to a robot impeding symmetry in $\mathcal{C}'$. Note that a robot impeding symmetry can only be on $DC$ or $BC$. We claim that $\tilde{\Lambda}_{DA}^{new} = {\Lambda}_{DA}^{new}$. $\Lambda_{DA}^{new}$ has only two non-zero terms, a 1 corresponding to $r_1$ and another 1 corresponding to $r_2$. Of these two only $r_2$ can be a robot impeding symmetry in $\mathcal{C}'$, and in that case it must be on $DC$. But in that case the configuration was almost symmetric of the second type even before the move. This implies that $\tilde{\Lambda}_{DA}^{new} = {\Lambda}_{DA}^{new}$. Thus we have

$$\tilde{\Lambda}_{DA}^{new} = {\Lambda}_{DA}^{new} > \Lambda_{BA}^{new} \geq  \tilde{\Lambda}_{BA}^{new}$$
$$ \Rightarrow  \tilde{\Lambda}_{DA}^{new} > \tilde{\Lambda}_{BA}^{new}$$.

This is a contradiction, since if $\mathcal{C}'$ is almost symmetric of the second type, then $\tilde{\Lambda}_{DA}^{new} = \tilde{\Lambda}_{BA}^{new}$. Hence $C$ is not the second largest corner. $\square$
\end{proof}

Hence if a configuration is critical with $\lambda_{DA}$ as the largest corner, then by lemma \ref{critical} the boundary sides $AD$ and $AB$ contain a single robot each. In a critical configuration, these two robots will be the leading duo.

\begin{lemma}\label{critical2}
 Consider a critical configuration with $\lambda_{DA}$ as the largest corner. Then by lemma \ref{critical}, there is a single robot on $AB$, say $r_3$. If it makes a columnwise move then $\lambda_{DA}$ remains the strictly largest sequence.  
\end{lemma}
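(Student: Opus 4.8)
The plan is to track how a columnwise move by $r_3$ affects $\lambda_{DA}$ relative to each of the seven other corner sequences, exactly in the style of Lemma~\ref{strictly_large}, while crucially using the structural information guaranteed by Lemma~\ref{critical}: in a critical configuration with $\lambda_{DA}$ largest, $DA$ contains exactly the single robot $r_1$, $AB$ contains exactly the single robot $r_3$, and $A$ is the second largest corner of the configuration $\mathcal{C}'$ obtained after $r_2$'s move. First I would set up coordinates: since the configuration is purely asymmetric with $\lambda_{DA}$ strictly largest, label $r_1$ at the $x$th node from $D$ on $DA$ and $r_3$ at the $y$th node from $A$ on $AB$. A columnwise move by $r_3$ ``towards the largest corner'' means $r_3$ moves one step off the side $AB$ into the interior, along its column; this changes its position in several of the $\lambda$-sequences, and the task is to show none of those changes can let another sequence catch up to $\lambda_{DA}$.

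The comparisons split naturally. Against $\lambda_{BA}$ and $\lambda_{BC}$: a columnwise move by $r_3$ off $AB$ takes it strictly closer to $A$-side interior but its effect on distance from $B$ is the key point — I would argue (as in Lemma~\ref{strictly_large}'s final case) that the move does not decrease $r_3$'s contribution at an early position of $\lambda_{DA}$ while $\lambda_{BA}, \lambda_{BC}$ can only be affected at later-or-equal positions, so strict largeness is preserved. Against $\lambda_{AD}$ and $\lambda_{CD}$: since $r_3$ leaves $AB$, it enters neither $AD$ nor, generically, $CD$; moreover $\lambda_{DA} > \lambda_{AD}$ is already decided within the first $\frac{n}{2}$ terms (because $r_1$ at position $x\le \frac n2$ governs it), and that prefix is untouched. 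Against $\lambda_{CB}$ and $\lambda_{AB}$: here I would use that $DA$ has the lone robot $r_1$, so the first $n$ terms of $\lambda_{DA}$ have a single $1$ (at position $x$) and all else $0$; a move by $r_3$ keeps its contribution to $\lambda_{DA}$ at a position $\ge x$ (it is in column $\ge 2$), hence cannot overtake. The genuinely delicate comparison is $\lambda_{DA}$ versus $\lambda_{DC}$: a columnwise move of $r_3$ could in principle increase $\lambda_{DC}$, and this is precisely the kind of coincidence that created the symmetry-of-the-first-type possibility in Lemma~\ref{DC}. I expect this to be the main obstacle, and I would handle it by invoking the critical structure: because $\mathcal{C}'$ is almost symmetric of the second type with $A$ the second largest corner (Lemma~\ref{critical}(2)--(4)), the positions available to $r_3$ are tightly constrained, and a columnwise move by $r_3$ — unlike the columnwise move by $r_2$ that produced $\mathcal{C}'$ — moves $r_3$ into a row and column position whose image in $\lambda_{DC}$ lies beyond the first differing term between $\lambda_{DA}$ and $\lambda_{DC}$, so $\lambda_{DA}$ stays strictly ahead.

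Concretely, the order of steps is: (1) invoke Lemma~\ref{critical} to fix the skeleton ($r_1$ alone on $DA$, $r_3$ alone on $AB$, $A$ second largest in $\mathcal{C}'$); (2) dispose of $\lambda_{BA},\lambda_{BC}$ by the ``moving away from $B$'' argument; (3) dispose of $\lambda_{AD},\lambda_{CD}$ [the latter as an easy sub-case], using that $\lambda_{DA}>\lambda_{AD}$ is governed by the first $\frac n2$ terms which $r_3$ does not touch; (4) dispose of $\lambda_{AB},\lambda_{CB}$ using the single-$1$ structure of the first $n$ terms of $\lambda_{DA}$; (5) treat $\lambda_{DC}$ carefully by comparing the first term at which $\lambda_{DA}$ and $\lambda_{DC}$ changed against the positions dictated by the almost-symmetric-of-the-second-type structure of $\mathcal{C}'$, deriving that a columnwise move of $r_3$ cannot equalize them. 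In each case the conclusion ``$\lambda_{DA}^{new} > \lambda_{\bullet}^{new}$'' follows, and since $\lambda_{DA}$ remains strictly largest among all eight sequences, the configuration stays purely asymmetric with $\lambda_{DA}$ as its largest sequence. $\square$
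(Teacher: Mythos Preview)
Your overall plan---compare $\lambda_{DA}$ against each of the other seven sequences and invoke Lemma~\ref{critical} for structural information---is the right one, and your treatment of the $\lambda_{DC}$ case is in the correct spirit. But there is a genuine misclassification in your triage of cases. The columnwise move sends $r_3$ from $(n,j)$ to $(n-1,j)$; this moves $r_3$ strictly \emph{away} from both $A$ and $B$ and strictly \emph{towards} both $D$ and $C$. Consequently the four sequences $\lambda_{AD},\lambda_{AB},\lambda_{BA},\lambda_{BC}$ all decrease and are automatically dominated by the (increased) $\lambda_{DA}$; the paper dispatches all four in one line. The three sequences that can increase, and hence genuinely require argument, are $\lambda_{CB}$, $\lambda_{CD}$ and $\lambda_{DC}$. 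Your step~(3) places $\lambda_{CD}$ among the easy cases (``the latter as an easy sub-case''), but $\lambda_{CD}$ increases under the move, and your stated reason---that $r_3$ does not enter $CD$---does not address this. The paper handles $\lambda_{CD}$ by the same prefix-comparison argument it uses for $\lambda_{CB}$.

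Your step~(4) also understates what is needed for $\lambda_{CB}$. The ``single-$1$ in the first $n$ terms of $\lambda_{DA}$'' observation does not by itself settle the comparison, since the first $n$ terms of $\lambda_{CB}$ (the column $CB$) are untouched by $r_3$'s move and may well match the first $n$ terms of $\lambda_{DA}$. The paper instead argues as follows: with $r_1$ at the $x$th node from $D$ and $r_3$ at the $y$th node from $A$, the fact that $A$ is the second largest corner in $\mathcal{C}'$ forces the first robot on $CB$ (at position $z$ from $C$) to satisfy $z\ge y$; if $x<y$ this already gives $z>x$ and we are done, while if $x=y$ one deduces $y\le \tfrac{n}{2}$ (from $\lambda_{DA}>\lambda_{AD}$) and then compares the first $\tfrac{n^2}{2}$ terms, where $r_3$'s new position contributes to $\Lambda_{DA}$ but not to $\Lambda_{CB}$. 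So the fix is: regroup your cases so that $\lambda_{CB},\lambda_{CD},\lambda_{DC}$ are the three non-trivial ones, and supply for $\lambda_{CB}$ and $\lambda_{CD}$ the prefix argument just described.
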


\begin{proof}
 
 We only need to prove that after the move $\lambda_{DA}$ remains strictly larger than $\lambda_{CB}$, $\lambda_{CD}$ and $\lambda_{DC}$. Assume that the first robot of the leading duo, $r_1$, is on $DA$ at $x$th place from $D$ and $r_3$ on $AB$ is at $y$th place from $A$. Also let $r_2$ be the second robot in $\lambda_{DA}$.
%  
%   We only show that $\lambda_{DA}^{new} > \lambda_{CB}^{new}$. The other two are similar. The other two are similar.

 \underline{$\boldsymbol{\lambda_{DA}}$ \textbf{vs} $\boldsymbol{\lambda_{CB}}$}
 
 Let the robot on $CB$ closest to $C$ is at $z$th place from $C$. Let $x < y$. Then as $A$ is the second largest corner, $z \geq y$. Hence $z > x$. So even after the move by $r_3$, $\lambda_{DA}$ remains strictly larger than $\lambda_{CB}$. So now assume that $x = y = z$. Now $x \leq \frac{n}{2}$, for otherwise $\lambda_{DA} < \lambda_{AD}$. Hence also $y \leq \frac{n}{2}$.  Now construct sequences  $\Lambda_{DA}$ and $\Lambda_{CB}$ from  $\lambda_{DA}$ and $\lambda_{CB}$ respectively, by taking only the first $\frac{n^2}{2}$ terms. Then it easy to see that $$ \Lambda_{DA}^{new} > \Lambda_{DA}^{old} \geq \Lambda_{CB}^{old} = \Lambda_{CB}^{new} $$.
 This implies that $\lambda_{DA}^{new} > \lambda_{CB}^{new}$.
 
 \underline{$\boldsymbol{\lambda_{DA}}$ \textbf{vs} $\boldsymbol{\lambda_{CD}}$}
 
 It is similar to the last proof.
 
 \underline{$\boldsymbol{\lambda_{DA}}$ \textbf{vs} $\boldsymbol{\lambda_{DC}}$}
 
 Let $r_3$ be at $(n, j)$. It is easy to see that if $j < n-1$, then $\lambda_{DC}$ can not overtake $\lambda_{DA}$. So let $j = n-1$. Since $\lambda_{DA}$ is strictly largest, $r_1$ is at $(2, 1)$. By lemma \ref{critical}, if $r_2$ moves then $A$ will be the second largest corner and $r_1, r_3$ will be the robots impeding symmetry in the new configuration which is almost symmetric of the second type. But this is impossible since $r_1, r_3$ are in symmetric positions with respect to the diagonal axis $AC$. $\square$
 
\end{proof}

\begin{theorem}\label{move1th}
 If the initial configuration is purely asymmetric, then algorithm \ref{Move0} leads to a configuration with exactly one corner occupied.
\end{theorem}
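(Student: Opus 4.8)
The plan is to establish an invariant that holds at every configuration produced along the execution and then combine it with a strictly decreasing potential. Fix the lexicographically largest sequence $\lambda_{DA}$ of the initial purely asymmetric configuration, so $D$ is the largest corner. I would maintain the following invariant at every configuration $\mathcal{C}$ reached while algorithm \ref{Move0} runs: either (i) $\mathcal{C}$ is purely asymmetric, $\lambda_{DA}$ is still the strictly largest of the eight sequences, and every move still pending (from an earlier Look) is a step of the current leading duo towards $D$ that is consistent with algorithm \ref{Move0} --- namely the columnwise/rowwise step dictated by algorithm \ref{toy} if $\mathcal{C}$ is non-critical, or a columnwise step of $r_1$ or $r_3$ towards $D$ if $\mathcal{C}$ is critical; or (ii) $\mathcal{C}$ is symmetric of the first type or almost symmetric of the first type, with $D$ the largest corner, and every pending move is a step of the leading duo towards $D$. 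Case (ii) closes the argument at once: by Theorem \ref{second_type_theorem}(1) algorithm \ref{Move0} then drives the system to a configuration with exactly one corner occupied, and, just as in the proof of that theorem, a pending move towards $D$ is consistent with the rule for first-type configurations. So it remains to show the invariant is preserved while we are in case (i) and that case (i) cannot persist forever.

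For the inductive step, suppose $\mathcal{C}$ is as in case (i), and split on whether $\mathcal{C}$ is critical. If $\mathcal{C}$ is non-critical, the leading duo $\{r_1,r_2\}$ moves according to algorithm \ref{toy}, and Lemma \ref{DC} says the successor is (1) purely asymmetric with a consistent pending move, (2)--(3) symmetric or almost symmetric of the first type with a pending move of $r_1$ towards $D$, or (4) almost symmetric of the second type with a pending move of $r_1$ towards $D$. Outcome (4) is impossible here: a move of $r_1$ alone never produces an almost symmetric configuration of the second type (this is shown inside the proof of Lemma \ref{DC}), while a move of $r_2$ by algorithm \ref{toy} that produces one is, by definition, precisely what would make $\mathcal{C}$ critical --- contrary to assumption. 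Hence only outcomes (1)--(3) occur; they land in case (i) or case (ii), with $\lambda_{DA}$ kept strictly largest in the case-(i) branch (Lemma \ref{strictly_large}).

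If $\mathcal{C}$ is critical, Lemma \ref{critical} tells us that $DA$ and $AB$ each carry a single robot, $r_1$ and $r_3$, that these two are the leading duo, and that $A$ is the prospective second-largest corner; algorithm \ref{Move0} has both of them step columnwise towards $D$. After such a move, $\lambda_{DA}$ is still strictly largest --- by Lemma \ref{critical2} for $r_3$, and by a direct argument for $r_1$ (its step only makes $\lambda_{DA}$ lexicographically larger while $\lambda_{DC}$ changes only later) --- so the successor is again asymmetric with $D$ the largest corner, i.e. purely asymmetric (possibly still critical) or first-type (almost) symmetric. The point that must still be checked, and which I expect to be the main obstacle, is that this successor never becomes almost symmetric of the second type with a pending move conflicting with the rule for that case, and also that when a non-critical configuration turns critical, a move of the old $r_2$ that may still be pending is in fact a columnwise step towards $D$ of a member of the new leading duo. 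I would settle this exactly in the spirit of Lemmas \ref{critical} and \ref{critical2}: using that $DA$ and $AB$ are nearly empty and that $\lambda_{DA}$ strictly dominates $\lambda_{BA}$, one shows any freshly created second-type (almost) symmetry would have to have $A$ as its second-largest corner, and then a sequence comparison of $\lambda_{DA}$ against $\lambda_{BA}$ (or $\lambda_{AB}$) contradicts the strict maximality of $\lambda_{DA}$; any first-type symmetry created in this way carries only pending moves towards $D$.

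Finally, for termination take the potential $\Phi(\mathcal{C}) = \sum_{r} \mathrm{dist}(r, D)$, the sum of Manhattan distances from the robots to $D$. Only leading-duo robots ever move, every move prescribed in case (i) strictly shortens some robot's distance to $D$, and no robot is ever routed onto a corner other than $D$, so $\Phi$ strictly decreases at each move and is bounded below by $0$. Since the leading duo always contains at least two robots and at most one crashes, some leading-duo robot keeps moving; hence after finitely many moves the execution either enters case (ii), where Theorem \ref{second_type_theorem}(1) finishes, or reaches a configuration in which $D$ is occupied, i.e. with exactly one corner occupied. (From there all the non-faulty robots gather at that corner by Theorem \ref{move0th}.)
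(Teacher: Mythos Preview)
Your approach is essentially the same as the paper's: both arguments use Lemmas \ref{strictly_large}, \ref{DC}, \ref{critical}, \ref{critical2} to show that from a purely asymmetric configuration with largest corner $D$, the execution stays inside the set of purely asymmetric or first-type (almost) symmetric configurations with $D$ as the largest corner, and then invoke Theorem \ref{second_type_theorem}(1) for the first-type branch.

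The one substantive difference is in how the critical case is disposed of. The paper does not attempt to maintain an invariant across arbitrary critical/non-critical transitions; instead it makes the observation that once $r_3$ performs its columnwise move, the side $AB$ becomes empty and \emph{stays} empty (no rule of algorithm \ref{Move0} ever sends a robot onto $AB$), and hence by Lemma \ref{critical}(3) a critical configuration can never arise again. This yields a clean dichotomy: either $r_3$ eventually moves --- after which the analysis reduces \emph{permanently} to the non-critical case with leading duo $\{r_1,r_2\}$ --- or $r_3$ is the crashed robot, in which case $r_1$ is non-faulty and walks straight to $D$. This single observation sidesteps exactly the issue you flag as ``the main obstacle'': there is no need to analyse a pending move of the old $r_2$ at a freshly created critical configuration, nor to worry about repeated critical/non-critical alternation, because criticality occurs at most once along any execution.

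Your potential function $\Phi=\sum_r \mathrm{dist}(r,D)$ is a perfectly valid termination device, but it does not by itself close the pending-move consistency gap across the non-critical $\to$ critical transition that you yourself identify; your sketch of how to close it (``in the spirit of Lemmas \ref{critical} and \ref{critical2}'') is plausible but not carried out. The paper's ``$AB$ becomes empty and stays empty'' observation is the missing ingredient that resolves this with almost no extra work, and is worth incorporating into your argument.
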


\begin{proof}
 Let $\lambda_{DA}$ be the strictly largest sequence. Suppose that the configuration is not critical. After the move by the leading duo, by lemma \ref{DC} the new configuration is purely asymmetric or symmetric of the first type or almost symmetric of the first type. In each case any pending move is consistent with the algorithm. If the new configuration is symmetric of the first type or almost symmetric of the first type, then by theorem \ref{second_type_theorem} eventually exactly one corner gets occupied. If the new configuration is purely asymmetric and not critical, then again the same moves are to be executed by the leading duo. So now assume that the initial configuration is critical or a critical configuration is obtained from a purely asymmetric configuration. Let $\{r_1, r_3\}$ be the leading duo, where $r_1$ is on $DA$ and $r_3$ is on $AB$. If the critical configuration is obtained from a purely asymmetric configuration, then $r_1$ may have a pending move towards $D$. But that is consistent with the algorithm. If $r_3$ moves, then new configuration is no longer critical as $AB$ is empty. Also $\lambda_{DA}$ remains strictly largest by lemma \ref{critical2}, and the leading duo is $\{r_1, r_2\}$, $r_2$ is the second robot in $\lambda_{DA}$. Furthermore, from this configuration a critical configuration can never be created as $AB$ has become empty and will remain so. However $r_3$ may crash and never move from $AB$. But then $r_1$ will keep moving towards $D$ (as there can be at most one crash), and eventually $D$ gets occupied. $\square$
\end{proof}

The following result immediately follows from theorems \ref{second_type_theorem}, \ref{move1th} and \ref{move0th}.

\begin{theorem}
 A non-partitive configuration on an even $\times$ even square grid with no corners occupied is gatherable despite at most one crash fault.
\end{theorem}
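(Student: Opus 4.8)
The plan is to prove the theorem purely by assembling the classification of configurations established earlier with the three cited theorems, chaining the two phases of the algorithm; no new symmetry-breaking argument is needed here.

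\textbf{Step 1: classify the configuration.} First I would invoke the case analysis that follows the seven cases above. Given a configuration with no corners occupied, let $\lambda_{DA}$ be the lexicographically largest among the eight boundary sequences, so $D$ is the largest corner. If $\lambda_{DA}$ is strictly largest, the configuration is asymmetric (Case 1), and then by the definitions of \emph{purely asymmetric}, \emph{almost symmetric of the first type} and \emph{almost symmetric of the second type} it falls into exactly one of these three classes, the last two being mutually exclusive by Lemma \ref{disjoint}. If $\lambda_{DA}$ is not strictly largest, it must equal another sequence; Cases 3, 4, 5 and 7 each produce a partitive configuration and are therefore excluded by hypothesis, leaving only Case 2 (symmetric of the first type) and Case 6 (symmetric of the second type). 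Hence every non-partitive configuration with no corners occupied is exactly one of: purely asymmetric, almost symmetric of the first type, almost symmetric of the second type, symmetric of the first type, symmetric of the second type.

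\textbf{Step 2: reach a configuration with exactly one corner occupied.} For the four symmetric-or-almost-symmetric cases, Theorem \ref{second_type_theorem} shows that running algorithm \ref{Move0} leads to a configuration with exactly one corner occupied; for the purely asymmetric case, Theorem \ref{move1th} gives the same conclusion. I would stress that both theorems already incorporate the possible crash of one of the two designated movers (the leading duo, or in a critical configuration the pair treated in Lemma \ref{critical2} and Theorem \ref{move1th}): since at most one robot can crash, at least one designated mover stays active and keeps decreasing its Manhattan distance to a fixed corner of the (invariant) MES, so that corner becomes occupied after finitely many rounds, and the pending moves caused by asynchrony are shown there to remain consistent with the algorithm throughout.

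\textbf{Step 3: gather at the occupied corner.} Once exactly one corner is occupied, Theorem \ref{move0th} finishes: the remaining robots gather at that corner despite at most one crash fault. The only bookkeeping to verify is that a single fault budget suffices across the two phases. If the crash occurred during phase one, phase two runs with no faults, a special case of the hypothesis of Theorem \ref{move0th}; otherwise the full statement of Theorem \ref{move0th} applies. Also, a leading-duo robot whose move is still pending when the corner becomes occupied is by construction moving toward that very corner, so its move reduces the Manhattan distance to the occupied corner and cannot create a second occupied corner, hence it is consistent with the phase-two rule. The heavy lifting — breaking diagonal symmetries, handling critical configurations — is entirely inside Theorems \ref{second_type_theorem}, \ref{move1th} and \ref{move0th}; the only genuinely delicate point in this proof is the fault-budget and pending-move bookkeeping at the phase boundary, and it is routine, which is why the statement follows immediately.
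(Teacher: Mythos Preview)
Your proposal is correct and follows the same approach as the paper, which simply states that the result ``immediately follows from theorems \ref{second_type_theorem}, \ref{move1th} and \ref{move0th}.'' You have spelled out the case classification and added explicit bookkeeping about the fault budget and pending moves at the phase boundary that the paper leaves implicit, but the logical structure is identical.
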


\subsubsection{Exactly two corners occupied}
 
Consider the case where exactly two corners of the grid are occupied. The two occupied corners can either be diagonally opposite or the two end-points of a side of the grid. If the two occupied corners lie on the same side of the grid, then the configuration can not have a diagonal symmetry. So in this case a gatherable configuration must be asymmetric. But if the two occupied corners are diagonally opposite, the configuration can have a diagonal symmetry. In that case the axis of symmetry passes through either the two occupied corners or the two unoccupied corners. In the later case the configuration will be called a \emph{2S2 configuration}.

When the two occupied corners are lying on the same side of the grid, each occupied corner has two sequences attached to it: one has 1 at the $n$ place, while the other has 0 at the $n$th place. We shall only consider the later one. The two sequences associated with the two occupied corners can not be equal, for otherwise the configuration becomes partitive. The occupied corner with the lexicographically larger sequence will be called the \emph{larger occupied corner}.  The other occupied corner will be the \emph{smaller occupied corner}. For the case with two diagonally opposite occupied corners, if the configuration is not 2S2, the larger occupied corner can be similarly defined by comparing the four sequences associated with the occupied corners. The other occupied corner will again be called the smaller occupied corner.

\begin{theorem}\label{th2}
 If the initial configuration on an even $\times$ even square grid with exactly two corners occupied is non-partitive and not 2S2, then it is gatherable despite at most one crash fault.
\end{theorem}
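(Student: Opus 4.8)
The plan is to reduce every admissible two‑corners‑occupied configuration to one with exactly one corner occupied and then invoke Theorem~\ref{move0th}. First I would use the classification already set up: a non‑partitive, non‑2S2 configuration with exactly two occupied corners is either (i) asymmetric --- this covers every configuration whose two occupied corners are the endpoints of a side, together with the asymmetric configurations with diagonally opposite occupied corners --- or (ii) symmetric with respect to the diagonal joining the two (necessarily diagonally opposite) occupied corners.

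For case (i) the \emph{larger occupied corner} $L$ and the \emph{smaller occupied corner} $S$ are well defined by the lexicographic comparison of the sequences attached to the occupied corners, so the algorithm would be: every robot not located at $L$ moves towards $L$, strictly reducing its Manhattan distance to $L$ and never stepping onto any other corner (exactly the move of Theorem~\ref{move0th}). The heart of the argument is an invariance lemma in the spirit of Lemmas~\ref{strictly_large} and~\ref{critical2}: such a move (a) never makes a new corner occupied, (b) never empties $L$, and (c) keeps the sequence read from $L$ strictly larger than every other occupied‑corner sequence --- because a step towards $L$ can only weakly increase the sequence read from $L$ while a robot leaving $S$ strictly decreases the sequence read from $S$, and a short row‑move versus column‑move check rules out any competitor overtaking $L$. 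Hence every robot always computes the same destination $L$ (and once $S$ has emptied, $L$ is simply the unique occupied corner, so the rule agrees with Theorem~\ref{move0th}); in particular all pending moves produced by asynchrony are consistent. Termination follows because each non‑crashed robot strictly decreases its distance to $L$ whenever it acts, and acts infinitely often, so after finite time every non‑crashed robot is at $L$; with at most one crash the terminal configuration is either $\{L\}$ occupied, or $\{v,L\}$ with a singleton at the crashed robot's node $v$ and a multiplicity at $L$ (there are $k-1\ge 2$ robots at $L$), both of which count as gathering accomplished.

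Case (ii) is the real obstacle, and I expect it to need machinery analogous to the \emph{symmetric/almost symmetric of the first and second type} analysis (cf.\ Theorem~\ref{second_type_theorem}). The difficulty is twofold. First, the configuration is mirror‑symmetric about the diagonal $DB$ through the occupied corners, so no asymmetric move can be specified deterministically; the symmetry can only be broken indirectly, by asking a mirror pair of robots to move and letting asynchrony turn the configuration asymmetric, after which case~(i) takes over --- so one must check that the pending move left behind is consistent with the case~(i) rule for whichever of $D,B$ becomes the larger occupied corner. Second, one must \emph{not} funnel every robot onto $\{D,B\}$: the configuration with $D$ and $B$ both multiplicities and nothing else is symmetric about both diagonals, hence partitive and ungatherable, so the routing must send the off‑diagonal robots to interior nodes of $DB$ (never to $D$ or $B$), and when all robots already lie on $DB$ non‑partitivity guarantees an interior occupied node towards which the $D$‑cluster and $B$‑cluster can be merged. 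The crux is to design these moves so that (1) a partitive configuration is never produced, (2) every asynchronous pending move remains consistent with the reactive algorithm, and (3) a single crash is survived; once that is established, the configuration again reaches a state with exactly one occupied corner (or a gathered state on the diagonal) and Theorem~\ref{move0th} closes the proof.
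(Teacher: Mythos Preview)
Your case~(ii) rests on a false premise. You write that when the configuration is symmetric about the diagonal $DB$ through the two occupied corners, ``no asymmetric move can be specified deterministically,'' and you then sketch an elaborate symmetry‑breaking routine. But the \emph{larger occupied corner} is still well defined in this situation: the $DB$‑symmetry forces $\lambda_{DA}=\lambda_{DC}$ and $\lambda_{BA}=\lambda_{BC}$, and if in addition $\lambda_{DA}=\lambda_{BA}$ held, the configuration would also be symmetric about $AC$, hence admit a $\pi$‑rotation and be partitive. Since the configuration is assumed non‑partitive, $\lambda_{DA}\neq\lambda_{BA}$, so $D$ and $B$ are distinguishable and one of them is the larger occupied corner $L$. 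The paper exploits exactly this: it treats both your cases~(i) and~(ii) uniformly by asking only two robots to move towards $L$, namely the robot sitting at the smaller occupied corner $S$ and the robot corresponding to the leading non‑corner term of the largest sequence. If the robot at $S$ moves, only one corner is occupied and Theorem~\ref{move0th} applies; if the other robot moves, the largest sequence from $L$ strictly increases, so $L$ stays larger. Either way a multiplicity eventually forms at $L$, after which everyone heads there. Your worry about reaching a partitive state with multiplicities at both $D$ and $B$ evaporates once you see this: no robot is ever sent towards $S$, so $S$ can only be emptied, never thickened.

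A second, smaller issue is your case~(i) rule ``every robot not at $L$ moves towards $L$.'' You assert an invariance lemma but justify it only by ``a short row‑move versus column‑move check.'' With \emph{all} robots moving asynchronously this is precisely the kind of argument that fails without restrictions --- witness the critical configurations of Lemma~\ref{critical} in the no‑corners case, where an unrestricted move towards the largest corner lets a competing sequence overtake. A row move towards $L$ can simultaneously increase a sequence read from an adjacent occupied corner, so the claim that $L$ stays larger needs real work if arbitrarily many robots move. The paper sidesteps this entirely by moving only the two designated robots; the invariance is then a one‑line observation, and the switch to ``all move to the multiplicity corner'' happens only after a multiplicity has been created at $L$, at which point the target is unambiguous regardless of sequence comparisons.
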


\begin{proof}

 In each case the following robots will be asked to move towards the larger occupied corner: 1) the robot at the smaller occupied corner, and 2) the robot that corresponds to the leading non-zero term (not corresponding to an occupied corner) of the largest sequence. If the robot at the smaller occupied corner moves then there is only one occupied corner and all the robots will start moving towards it. If the robot corresponding to the leading non-zero term of the largest sequence moves then the larger occupied corner remains unchanged. So even if the robot at the smaller occupied corner crashes, a multiplicity will be created at the larger occupied corner after finite time. Once a multiplicity is created at a corner, all the robots will be asked to move towards it. $\square$

%  \begin{description}
%   \item [Case 1] Suppose that the initial configuration is asymmetric. Let $\{r_1, r_2\}$ be the leading duo where $r_1$ is at the smaller occupied corner. If $r_1$ moves then the new configuration has exactly one corner occupied and we are done. So assume that only $r_2$ moves. In that case the configuration remains asymmetric (it is easy to see why the new configuration is not almost symmetric) and the larger corner is invariant. So after finite time a multiplicity will be created at the larger occupied corner, and then all the robots will start moving towards it.
%   
%   \item [Case 2] Suppose that the initial configuration is symmetric of the first type. The proof is similar to case 1. 
%   
%   \item [Case 3] If the initial configuration is symmetric of the second type or almost symmetric, then the algorithm proceeds in same way as in the case where no corners were occupied. Eventually this would lead to a configuration with exactly 3 corners occupied. Configurations with three corners occupied are dealt with in the following section. The configurations with exactly three corners  occupied are classified into five types. If the leading duo have simultaneously reached the corner, then a multiplicity is created and there are no pending moves. This is a type 1 configuration. If one of the leading duo has reached the corner first, then result configuration will of type 3 or type 5. In this case we have a pending move. It can be easily seen from the proof of theorem \ref{th3} that from these situations gathering can achieved.
%   
%  \end{description}

\end{proof}

\subsubsection{Exactly three corners occupied}

Now we consider configurations with exactly three corners of the grid occupied. We shall call the occupied corner diagonally opposite to the unoccupied corner as the \emph{angular corner}.

\begin{theorem}\label{th3}
 All configurations on an even $\times$ even square grid with exactly three corners occupied are gatherable despite at most one crash fault.
\end{theorem}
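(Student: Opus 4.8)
The plan is to gather everyone at the \emph{angular corner} $C$ (the corner diagonally opposite the unique empty corner $A$), and, as soon as an occupied corner is vacated, to fall back on the one- and two-corner cases already settled in Theorems \ref{move0th} and \ref{th2}. The choice of $C$ is essentially forced: $C$ is read off the combinatorial type of the configuration, so all robots agree on it even when the configuration has the diagonal reflective symmetry through $A$ and $C$ — and this is the only symmetry a three-corner configuration can possess, because any automorphism must fix the empty corner $A$, while the only non-identity automorphism of the square fixing a corner is the diagonal reflection through it, whose axis passes through vertices and is therefore not partitive. In particular a three-corner configuration is always non-partitive, in agreement with the statement.

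Since at most one crash may occur, at least two robots must be moving at every instant, and this dictates a case distinction on the number $t$ of robots that do not sit on a corner. If $t\ge 2$, all non-corner robots are sent towards $C$ along shortest paths (never stepping onto a corner other than $C$, detouring around the at most one crashed robot when a shortest step is blocked, and never creating a multiplicity off $C$), while the robots at $B$ and $D$ stay; since only non-corner robots move and none of them can step onto a corner, the configuration remains a three-corner configuration throughout — hence non-partitive — and some non-faulty non-corner robot eventually reaches $C$, creating a multiplicity there. If $t=1$, with $p$ the lone non-corner robot, $p$ is sent towards $C$ together with every robot among the two at $B,D$ whose departure from its corner is \emph{safe}, meaning it does not create a partitive configuration; the key point is that the departure of the robot at $B$ is unsafe only when $p$ sits at the neighbour of $A$ on side $AD$, and that of the robot at $D$ only when $p$ sits at the neighbour of $A$ on side $AB$, and these are the two distinct neighbours of $A$, so at least one of the two corner robots is always safe and at least two robots move. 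If $t=0$ (so $k=3$), the robots at $B$ and $D$ are sent towards $C$; here there is no non-corner robot to sit at a neighbour of $A$, so both departures are safe.

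Along the way the following invariants are maintained: $A$ stays empty, $C$ stays occupied and remains the angular corner, the set of occupied corners only shrinks, and every multiplicity is born at $C$. Once a multiplicity occupies $C$ the configuration has a unique multiplicity, sitting at a vertex, hence is non-partitive from then on, and the rule ``everyone moves towards the multiplicity'' gathers all non-faulty robots at $C$. If instead the number of occupied corners drops to one, that corner is $C$ and Theorem \ref{move0th} finishes (its rule is exactly ``move towards $C$'', so any pending move is consistent). If it drops to two, the occupied corners are $C$ together with one of $B,D$ — hence \emph{adjacent}, so the configuration is not 2S2 — and it is non-partitive: the only candidate partitive symmetry, the reflection swapping the two occupied corners, is excluded because it would require an occupied node adjacent to an empty corner on a specific side, which the choice of a safe corner robot in the $t=1$ case (and the absence of non-corner robots on the relevant sides in the other two cases) rules out; thus Theorem \ref{th2} applies.

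I expect the real work to be twofold. First, the verification in the $t=1$ case — pinning down the two ``unsafe'' positions of $p$, and checking that after the departure of the chosen safe corner robot the resulting two-corner configuration is genuinely non-partitive and not 2S2, so that Theorem \ref{th2} may be invoked. Second, the asynchronous bookkeeping at the instants when the regime changes (a corner is vacated, or a multiplicity appears): one must check that a move inherited from the three-corner phase and executed afterwards — necessarily a move of some robot towards $C$ — can only shrink the set of occupied corners or advance a robot, so it never re-occupies a corner, never creates an off-$C$ multiplicity, and never produces a 2S2 or a partitive configuration, after which the algorithm of Theorem \ref{th2} or \ref{move0th} takes over correctly. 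The remaining ingredients — non-partitivity of three-corner configurations, the ``move to the multiplicity'' phase, the shortest-path navigation with detours around the single crashed robot, and the final check over the three crash locations ending in an accepted final configuration (one occupied node, or a singleton holding the crashed robot together with one multiplicity holding everyone else) — are routine.
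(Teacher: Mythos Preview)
Your case split on the number $t$ of non-corner robots matches the paper's, and for $t\ge 2$ and $t=0$ your strategy coincides with theirs (send the non-corner robots, respectively the two side-corner robots, to the angular corner). The divergence is at $t=1$, and there your plan has a genuine gap.

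Your ``safe'' test for a corner robot at $B$ or $D$ checks only that its departure does not create a \emph{partitive} configuration; it does not guarantee that in the resulting two-corner configuration the angular corner $C$ is the \emph{larger occupied corner}, which is what Theorem~\ref{th2} actually gathers towards. Take $n=4$, $p$ on $AD$ adjacent to $D$ (so not the neighbour of $A$; both $B$ and $D$ are ``safe'' by your criterion). If only the robot at $B$ steps towards $C$, the occupied corners become $C$ and $D$, and $\lambda_{DA}=1,1,0,0,\dots$ beats $\lambda_{CB}=1,0,1,0,\dots$ at position $2$, so $D$ is the larger corner. Theorem~\ref{th2} now sends the robot at $C$ (the smaller corner) towards $D$, while the robot at $D$ and $p$ hold pending moves towards $C$. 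If the robot at $C$ takes its snapshot before the robot at $D$ executes its pending move, both corner robots step off their corners in opposite directions and you drop into the no-corner regime with stale moves pointing every which way --- a scenario your outline does not address and that the paper's machinery is specifically built to avoid.

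The paper handles exactly this sub-case (its Case 2A: the lone non-corner robot on a side incident to the empty corner) by \emph{abandoning} the angular corner as target: it sends $p$ together with the robot \emph{at} the angular corner towards the non-angular corner $D$ on $p$'s side, precisely because once the angular corner is vacated the two-corner configuration has $D$ as the larger occupied corner, so every pending move is already consistent with Theorem~\ref{th2}. Your plan is salvageable if you adopt this asymmetry in the $t=1$ case rather than insisting on the angular corner throughout.
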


\begin{proof}

  \textbf{Case 1:} Suppose that there are at least 5 robots. Then there are at least 2 robots other than the ones at the corners. We shall ask these robots to go to the angular corner. Hence despite a crash fault, a multiplicity will be created at the angular corner after finite time. Then the rest will move to that corner.

  \textbf{Case 2:} Now assume that there are exactly four robots. Let $A$ be the angular corner. Let $r_1, r_2, r_3$ be the robots at the corners $D, A$ and $B$ respectively. Let $r_4$ be the remaining robot.

  \textbf{Case 2A:} Let $r_4$ be on $DC$ or $BC$. Without loss of generality assume that it is on $DC$. Then $r_2$ and $r_4$ will be asked to move towards $D$. If $r_2$ moves then only two corners of the grid become occupied, with $D$ being the larger occupied corner. So a possible pending move by $r_4$ is consistent with the algorithm described in theorem \ref{th2} and both robots will continue moving towards $D$ accordingly. If $r_2$ crashes and doesn't leave $A$, then $r_4$ will reach $D$. After a multiplicity is created at $D$, $r_3$ will move towards it. So in any case eventually all non-faulty robots will meet at $D$.
  
  \textbf{Case 2B:} Suppose that $r_4$ is not on $DC$ or $BC$. Then we ask $r_1$ and $r_3$ to move towards $A$. If both $r_1, r_3$ move simultaneously towards $A$, then the new configuration has exactly one corner occupied and we are done. So now assume that only $r_3$ moves. After the move the new configuration has exactly two corners occupied. Note that regardless of where $r_4$ is, after the move by $r_3$, $A$ is the largest corner. So a possible pending move by $r_1$ is consistent with the algorithm described in theorem \ref{th2}. Thus eventually all the non-faulty robots will gather at $A$.

  \textbf{Case 3:} Lastly assume that the initial configuration has exactly 3 robots all at distinct corners. Then the robots will be asked to move towards the angular corner. Gathering will be achieved like case 2B. $\square$

\end{proof}

\subsubsection{Exactly four corners occupied}\label{sec4}

Similar to the case where no corners were occupied, we can classify the non-partitive configurations as asymmetric, symmetric of the first type and symmetric of the second type. If there are at least two robots other than the ones at the corners (which means that the configuration has at least 6 robots) and at least one of them on the boundary, then we can replicate the algorithm for no corners occupied. In that case the aim would be to create a multiplicity at one of the corners, and then gather all the remaining robots at that corner. So we shall consider exclusively the configurations with either exactly 5 robots or that has no robots on the boundary except at the corners.

First consider configurations with exactly 5 robots. Then the configuration is either asymmetric or symmetric of the first type. Therefore let $D$ be the strictly largest corner. Then the robot at $B$ and the one not at any corner will be asked to move towards $D$. Now consider configurations with at least 6 robots and no robots on the boundary except at the corners. If the configuration is asymmetric or symmetric of first type with $D$ being the largest corner, then the robots on $A$ and $C$ will be asked to move towards $D$. If the configuration is symmetric of second type with $A$ being the second largest corner, the robots on $D$ and $B$ will move towards $A$.

Notice that this algorithm may create configurations with exactly three corners occupied. In that case if the robots start behaving according to the algorithm described in theorem \ref{th3}, then gathering may not be achieved due to conflicting pending moves. Hence we need different strategy for configurations with exactly three corners occupied that may arise from the algorithm we just described. The following are the two configurations with three occupied corners that can arise from a configuration with four corners occupied.  Assume that the corners $D, A$ and $B$ are singly occupied.

\textbf{3C1 configuration:} Consider a configuration with exactly 4 robots or at least 6 robots having exactly four singleton robots on the boundary of the grid: three at the three corners, and one on a side of the grid that has the unoccupied corner as an end-point, i.e. $CD$ or $CB$. This configuration will be called a \emph{3C1 configuration}. Without loss of generality assume that the fourth robot is on $CD$. This robot will be called the \emph{solitary robot}. In this case the solitary robot and the robot at $A$ will be asked to move towards $D$. Note that a 3C1 configuration with exactly four robots can not arise from an initial configuration with four occupied corners, because a configuration with exactly fours robots at fours corners is ungatherable. However we have included the 4 robot case in the definition of 3C1 as it will be useful later on in the unified algorithm presented in section \ref{sec_uni}.

 \textbf{3C2 configuration:} If there are exactly 5 robots in the configuration, then it will be called a \emph{3C2 configuration}. Then the two robots not at a corner will be asked to move towards the angular corner.

\begin{theorem}\label{th4}
 All non-partitive configurations on an even $\times$ even square grid with all four corners occupied are gatherable despite at most one crash fault.
\end{theorem}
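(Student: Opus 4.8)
The plan is to follow the case split laid out just before the statement and, in every case, to reduce to one clean sub-goal: manufacture a multiplicity at a single distinguished corner $T$ and then invoke the ``everyone walks to the occupied corner'' argument of Theorem \ref{move0th} to finish. Since at most one robot may crash, in each sub-case at least two robots are commanded to head for $T$, so a multiplicity at $T$ is guaranteed to appear in finite time regardless of a crash; and once a multiplicity sits on a corner the configuration is of the special terminating type (one multiplicity, the rest converging), so the theorem follows. The work is therefore entirely in (a) choosing $T$ canonically from the largest-sequence data, (b) checking that $T$ stays the distinguished corner under the commanded moves, and (c) checking that the asynchronous \emph{pending} moves never conflict with whichever rule is active in the configuration currently on the grid.

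For the generic sub-case --- at least $6$ robots, at least two non-corner robots, at least one of them on the boundary --- I would simply rerun the no-corners-occupied machinery: classify the configuration as purely asymmetric / almost symmetric / symmetric of the first or second type, take $T$ to be the largest corner $D$ (or the second largest corner in the second-type case), let the designated duo be the boundary non-corner robots selected as in algorithm \ref{Move0}, and quote Lemmas \ref{strictly_large}, \ref{DC}, \ref{critical} together with Theorems \ref{second_type_theorem} and \ref{move1th}. These already establish that the distinguished corner is invariant under any single step of the duo, that pending moves remain consistent with the algorithm, and that no partitive configuration is ever produced; hence a multiplicity appears at $T$ in finite time even with one crash, after which Theorem \ref{move0th}'s argument collects everyone at $T$.

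The real content is the two remaining sub-cases: exactly $5$ robots, and at least $6$ robots with all non-corner robots in the interior. Here a commanded move can vacate a corner, and one must observe --- by inspecting which corner can be emptied and where the departing robot lands --- that the result is always one of the two configurations singled out before the statement, namely a \emph{3C1} (the departing robot becomes the solitary robot on a side incident to the now-empty corner, with the other three corners still occupied) or a \emph{3C2} (five robots, three at corners). Crucially, the transient three-corner configurations are handled by the dedicated 3C1/3C2 rules, not by the rule of Theorem \ref{th3}. I would then check the two facts on which everything hinges. First, the robot still owing a move computed in the four-corner configuration is, in the resulting 3C1 or 3C2 configuration, exactly a robot that the new rule also drives towards the \emph{same} corner $T$: this is precisely why the 3C1 rule moves the solitary robot together with the appropriate corner robot towards $T$, and why the 3C2 rule moves the two non-corner robots towards the angular corner. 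Second, no commanded move --- whether issued while the configuration is still four-corner, or under the 3C1/3C2 rule --- can create a partitive configuration; this follows from strict-dominance estimates of the Lemma \ref{strictly_large} type ($T$ strictly dominates its mirror and rotational images) together with the observation that no robot ever occupies a side that would be needed to complete a diagonal symmetry through the two empty corners, so in particular no 2S2 configuration is ever reachable. Since the 3C1 and 3C2 rules each send two robots towards $T$, a multiplicity at $T$ survives a crash, and the two sub-cases close.

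The step I expect to be the main obstacle is exactly this asynchronous pending-move bookkeeping across the four-corner $\to$ three-corner ($\to$ two-corner) transitions. In the fully asynchronous model a robot can still be carrying a move decided in the old configuration while every other robot already reads the new one, and --- as the ``critical configuration'' discussion before the statement shows --- a single stale move in the wrong direction can lock the system into a partitive configuration with no pending moves left to rescue it. The definitions of 3C1 and 3C2, and the choice of the designated duo in each four-corner sub-case, were engineered so that every stale move is always aimed at the current target corner $T$, hence harmless; carrying out this verification exhaustively over all the ways a four-corner configuration can shed a corner (while simultaneously confirming that $T$ never loses its distinguished status and that no partitive or 2S2 configuration is reachable) is the technical heart of the proof, whereas the terminal ``everyone walks to the multiplicity'' phase is routine given Theorem \ref{move0th}.
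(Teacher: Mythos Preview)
Your proposal is correct and follows essentially the same approach as the paper: the same three-way case split (generic boundary case handled by replaying the no-corners machinery; exactly five robots; at least six robots with only the corners on the boundary), the same choice of target corner $T$ from the largest-sequence data, and the same reduction to the tailor-made 3C1/3C2 rules when a corner is vacated. Your emphasis on verifying pending-move consistency across the four-corner $\to$ 3C1/3C2 $\to$ two-corner transitions, and on ruling out partitive and 2S2 configurations along the way, is exactly the content of the paper's (terse) Case~1 and Case~2 arguments.
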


\begin{proof}
  As mentioned earlier, configurations having at least two robots other than the ones at the corners and at least one of them on the boundary are gatherable. So we consider only the cases with exactly 5 robots or that have no robots on the boundary except at the corners. We have the following cases to consider, based on the initial configuration:

  \textbf{Case 1:} Assume that the initial configuration has exactly five robots. Let $\lambda_{DA}$ be the largest sequence. Let $r_1$ be at $B$ and $r_2$ be the robot not at a corner. Now the algorithm asks $r_1, r_2$ to move towards $D$. If $r_1$ moves then we have a 3C2 configuration. Here too $r_1, r_2$ is to move towards $D$. If $r_1$ crashes and doesn't leave $B$, then $r_2$ should reach $D$. Once a  multiplicity is created at a corner, all non-faulty robots will move towards that corner.
  
  \textbf{Case 2:} Now assume that the configuration has at least 6 robots. First let the configuration be asymmetric or symmetric of first type with $D$ being the largest corner. Then let $r_1, r_2$ be the robots on $A$ and $C$ respectively. If both $r_1, r_2$ moves towards $D$, then the configuration has two corners occupied with $D$ being the largest corner. Then $r_1, r_2$ will continue to move towards $D$ and gathering will be accomplished according to theorem \ref{th2}. If only $r_2$ moves, then the configuration becomes 3C1. The pending move of $r_1$ is towards $D$, and is consistent with the algorithm for 3C1 configuration. Hence either the configuration becomes with two corners occupied with $D$ being the largest corner, or $r_2$ reaches $D$ first. In either case gathering will be achieved at $D$. The algorithm progresses exactly in the same manner if the configuration is symmetric of second type. $\square$

 \end{proof}
 
 \begin{figure}[thb!]
\centering

{
    \def\svgwidth{0.48\textwidth}
    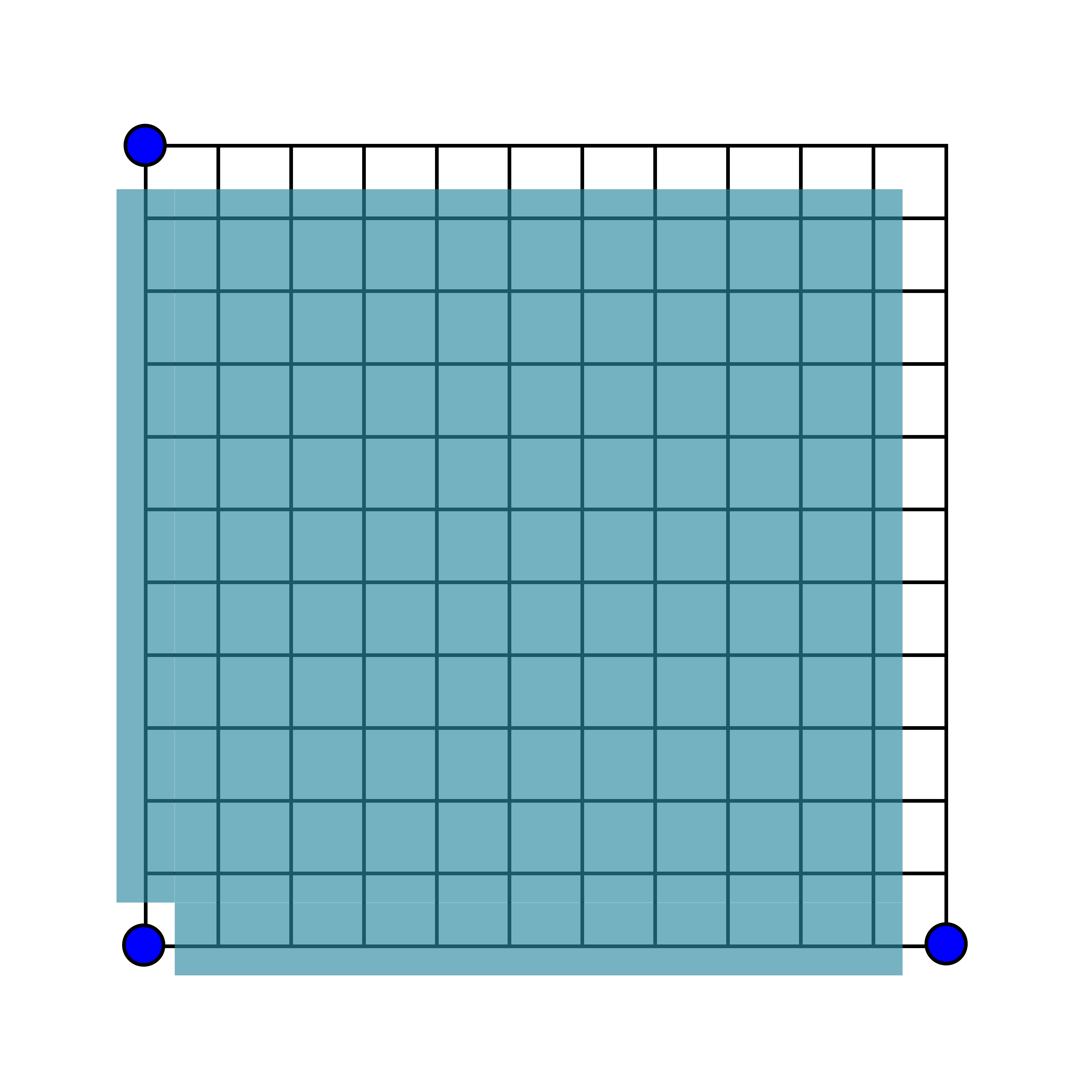
}

\caption[Short Caption]{}
\label{3region}
\end{figure}

Thus we have the following result.
 
 \begin{theorem}[Main result 1]
 
 If the initial configuration on a finite grid is non-partitive and not 2S2, then it is gatherable despite at most one crash fault.
 
\end{theorem}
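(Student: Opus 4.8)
The plan is to assemble the statement from the case analyses already carried out, organized first by the parity type of the grid and then, in the even $\times$ even case, by the number of occupied corners of the minimum enclosing square. Throughout, the hypothesis that the configuration is non-partitive is exactly what makes each cited theorem applicable (and, by the theorems of \cite{Navarra17} and \cite{Navarra16} recalled in Section~2, it is also necessary, so nothing is lost by assuming it); the hypothesis that the configuration is not 2S2 will be invoked at precisely one point.

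First I would dispose of the two easy parity types. If the grid is odd $\times$ odd, Section~3.1 already supplies an algorithm: every robot heads for the unique central node, which is determined by the grid alone and is therefore invariant under all robot movements, so all non-faulty robots gather there despite one crash. If the grid is even $\times$ odd, the theorem of Section~3.2 asserts that every non-partitive configuration is gatherable despite one crash, which settles this case; note that the 2S2 exception cannot intervene here, since a 2S2 configuration is by definition a configuration on a square grid.

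The bulk of the argument is the even $\times$ even case, which I would reduce to square grids as in Section~3.3 (a non-square even rectangle admits no non-partitive symmetry and is handled essentially as the one-corner case). So assume the grid is an $n \times n$ even square, and pass to the minimum enclosing square (MES): it is again even $\times$ even, has at least one robot on its boundary, and every algorithm in Section~3.3 keeps the MES invariant, so one may assume the grid equals its own MES. Now split on the number of occupied corners. Zero corners occupied is covered by the concluding theorem of Section~3.3.2, itself built from Theorems~\ref{move0th}, \ref{second_type_theorem} and \ref{move1th}; exactly one corner is Theorem~\ref{move0th}; exactly two corners is Theorem~\ref{th2}, which is the single place where the hypothesis ``not 2S2'' is used (2S2 being a two-diagonal-corner configuration); exactly three corners is Theorem~\ref{th3}; and all four corners is Theorem~\ref{th4}. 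In each subcase the cited theorem either gathers directly or drives the configuration into the one-corner situation of Theorem~\ref{move0th}, whose destination (the occupied corner) is invariant, so non-faulty robots gather there.

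The main obstacle is not any single hard computation but verifying completeness and consistency of this reduction: one must check that passing to the MES preserves non-partitivity and does not create a 2S2 configuration from a non-2S2 one, that the corner-count cases are exhaustive, that the ``at least six robots with a boundary robot'' sub-branch (used explicitly in Section~3.3.6 and implicitly in the other corner-count cases) is correctly folded into the no-corners machinery, and that the various ``pending move'' interleavings allowed by the asynchronous adversary never produce a partitive configuration along the way — this last point being the content the preceding lemmas (\ref{disjoint}, \ref{almost_second_type_lemma}, \ref{strictly_large}, \ref{DC}, \ref{critical}, \ref{critical2}) were designed to guarantee. Once these bookkeeping checks are in place, 2S2 is seen to be the unique non-partitive configuration escaping every handled bucket, and the theorem follows.
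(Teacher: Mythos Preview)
Your proposal is correct and follows essentially the same approach as the paper: the paper presents Main Result 1 without a written proof, stating only ``Thus we have the following result'' immediately after Theorem~\ref{th4}, so it is understood to be assembled from the odd$\times$odd case (Section~3.1), the even$\times$odd case (Theorem~3), and the even$\times$even square-grid cases split by the number of occupied corners of the MES (Theorems~\ref{move0th}, \ref{th2}, \ref{th3}, \ref{th4}, and the concluding theorem of Section~3.3.2). Your write-up is in fact more explicit than the paper about the bookkeeping (MES invariance, where the 2S2 hypothesis enters, consistency of pending moves), but the decomposition and the cited ingredients are identical.
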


\subsection{A unified gathering algorithm}\label{sec_uni}

 The algorithm we have presented in section \ref{sec4} gathers all non-partitive configurations with four corners occupied. However, the strategy we have used for configurations with three occupied corners as a subroutine of this algorithm, is different from the algorithm described in theorem \ref{th3}. If it is known beforehand that the initial configuration has exactly three corners occupied, then the algorithm described in theorem \ref{th3} can be used which works for all possible initial configurations. On the other hand if the initial configuration has four corners occupied, then the algorithm described in theorem \ref{th4} will work for all non-partitive initial configurations. However it is desired to have a unified gathering algorithm for all configurations. In order to achieve this we have to extend the strategy used for configurations 3C1 and 3C2 to other configurations with three corners occupied. However some specific configurations with exactly 6 or 7 robots have to be excluded. The configurations 3C1 and 3C2 are already defined in section \ref{sec4}. We classify the remaining configurations and describe the algorithm in each case in the following way. Again assume that the corners $D, A$ and $B$ are singly occupied.

\textbf{3C3 configuration:}\emph{3C3 configuration} includes the configuration with exactly 3 robots and configurations with exactly 4 robots but not 3C1. In these cases the two robots at $D$ and $B$ will be asked to move towards $A$, the angular corner.

\textbf{ 3C4 configuration:} Consider a configuration with at least 6 robots and which is not 3C1. If the shaded region shown in figure \ref{3region} contains at least two robots then it will be called a \emph{3C4 configuration}. In this all the robots in the shaded region will be asked to move towards the angular corner $A$.
 
 \textbf{3C5 configuration:} A configuration with at least 8 robots will be called a \emph{3C5 configuration} if it is not 3C1 or 3C4. In this case if the shaded region is empty, then the sides $DC$ and $CB$ have at least 5 robots other than the ones at the corner. If the shaded region has exactly one robot, then the sides $DC$ and $CB$ have at least 4 robots other than the ones at the corner. If one of these two sides, say $DC$, is empty, then 2 robots on $CB$ that are closest to $B$ will be asked to leave the boundary. If none of the sides are empty, then the robots on $DC$ and $CB$ that are closest to $D$ and $B$ respectively, will be asked to leave the boundary.

%  From these robots one each from the sides $DC$ and $CB$ will be asked to leave the boundary. If one of the sides is empty two robots from the other side will move. Robots closest to D or B will be asked to move.
 
 \textbf{3C6 configuration:} A configuration with exactly 6 or 7 robots which is not 3C1 or 3C4 is called a \emph{3C6 configuration}.

\begin{theorem}[Main result 2]
 There is a unified gathering algorithm that gathers all non-partitive initial configurations on a finite grid except the 2S2 configuration and the 3C6 configuration.
\end{theorem}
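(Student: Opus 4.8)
The plan is to merge the procedures of the previous sections into one oblivious rule and to show that, for every non-partitive initial configuration that is neither 2S2 nor 3C6, its asynchronous crash-prone execution reaches a configuration that is ``gathered'' in the sense fixed at the beginning of Section~3. The easy terrains are disposed of first: on an odd $\times$ odd grid the ``move to the unique centre'' rule already works, and on an even $\times$ odd grid the even $\times$ odd theorem gives a crash-tolerant algorithm for every non-partitive configuration; moreover neither a 2S2 nor a 3C6 configuration can occur on those grids. It therefore suffices to exhibit the unified rule on an $n\times n$ even grid, and after the usual passage to the minimum enclosing square we may assume at least one robot lies on the boundary.

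The rule then branches on the number of occupied corners (which every robot can read off) and on whether a multiplicity is already present (if so, everybody heads for it). For $0$, $1$, $2$ and $4$ occupied corners I would invoke, in turn, \textsc{Move0} together with Theorems~\ref{second_type_theorem} and \ref{move1th}, then Theorem~\ref{move0th}, then Theorem~\ref{th2}, then Theorem~\ref{th4}; each of these already proves crash-tolerant gathering, the two-corner case under the standing hypothesis ``non-partitive and not 2S2'', which is exactly one of the two exclusions. The only new point here is to check that the handover between branches is sound under asynchrony, i.e. that any move computed in one branch and still pending when the execution crosses to a configuration with a different corner count is also legal in the new branch; the transitions $0\to1$, $2\to1$, $4\to2$ are already treated inside the cited proofs, so what remains is the three-occupied-corner branch and the transitions into it.

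The crux is thus the three-occupied-corner case, where a robot further classifies the (multiplicity-free) configuration as 3C1, 3C2, 3C3, 3C4 or 3C5, the remaining six- and seven-robot configurations being the excluded 3C6. For 3C3 (at most four robots) the configuration is asymmetric or symmetric of the first type, the two corner robots other than the angular corner are sent to it, and---copying Cases~2B and~3 of Theorem~\ref{th3}---after one such move the angular corner becomes the larger of the two occupied corners whatever the position of the remaining robot, so the pending move agrees with the two-corner rule of Theorem~\ref{th2}. For 3C1 and 3C2 the bookkeeping of Case~2A of Theorem~\ref{th3} and Case~1 of Theorem~\ref{th4} applies unchanged. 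For 3C4 (at least six robots, not 3C1, with at least two robots in the shaded region of Figure~\ref{3region}) all shaded-region robots head for the angular corner; one shows every such robot owns a shortest path to that corner that stays inside the shaded region and avoids the other three corners, so the configuration remains a 3C4 one with the same robot count until one of these (at least one being non-faulty) reaches the angular corner and creates a multiplicity there, after which the ``move to the multiplicity'' rule completes gathering; as in the proofs of Lemmas~\ref{strictly_large} and \ref{almost_second_type_lemma}, only a reflection through the angular diagonal can survive these moves, so no partitive configuration is produced. Finally, for 3C5 (at least eight robots, not 3C1 or 3C4) the shaded region holds at most one robot, hence the two sides incident to the unoccupied corner carry enough robots that the prescribed ``leave the boundary'' moves are well defined; since at most one robot crashes and those sides retain at least four non-faulty robots, after finitely many such moves at least two robots lie in the interior, turning the configuration into a 3C4 one with at least eight robots, whence the previous case applies. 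The robot counts $\le 4$, $=5$, $\ge 6$, $\ge 8$ attached to 3C3, 3C2, 3C1/3C4, 3C5 are preserved by all these moves, so the execution can never drift into a 3C6 configuration.

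The step I expect to be the main obstacle is precisely this last block of invariance claims for 3C4 and 3C5: one has to pin down the shaded region of Figure~\ref{3region} precisely enough that (i)~every robot in it owns a shortest path to the angular corner that misses the other three corners and stays inside the region, (ii)~``at least two robots in the shaded region''---or ``a multiplicity already at the angular corner''---is an invariant of the 3C4 moves, and (iii)~the off-boundary moves of 3C5 genuinely deposit two robots in the shaded region without passing through a 3C6 or a partitive configuration on the way. Once these are in place, combining them with Theorems~\ref{move0th}, \ref{th2}, \ref{th3}, \ref{th4}, \ref{second_type_theorem}, \ref{move1th} and the no-corner result yields a single algorithm that gathers every non-partitive configuration on a finite grid except 2S2 and 3C6. $\square$
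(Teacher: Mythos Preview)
Your plan is essentially the paper's own: reduce to the even $\times$ even square, branch on the number of occupied corners, invoke the existing theorems for $0,1,2,4$ corners, and handle the three-corner case via the 3C1--3C5 taxonomy, excluding 3C6. The paper's proof is in fact shorter than your outline, because it treats the branch handovers more casually (it simply asserts that 3C4 ``remains 3C4 during the movements'' and that 3C5 becomes 3C4) rather than spelling out the pending-move checks you list.

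One concrete point your obstacle list (iii) is missing: for the 3C5 case the danger is not only drifting into 3C6 or a partitive configuration, but drifting into a \emph{3C1} configuration, which would send robots toward a different corner. The paper's argument here is a counting one: in 3C5 the sides $DC\cup CB$ carry at least $4$ or $5$ non-corner robots depending on whether the shaded region already has one robot or none; due to asynchrony at most $2$ (resp.\ $3$) robots can compute a ``leave the boundary'' move before the shaded region reaches size $2$; hence at least $2$ non-corner robots remain on $DC\cup CB$, so the boundary has more than four singletons and the resulting configuration is 3C4, not 3C1. You should add this check to your invariance block; once you do, your plan and the paper's proof coincide.
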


\begin{proof}

For even $\times$ even sqaure grids except for configurations with exactly three occupied corners, the algorithm will be exactly the same as described in section \ref{sec_even}. From an initial configuration with four occupied corners, configurations 3C1 or 3C2 can be created. The algorithms for these configurations are already described in section \ref{sec4}, and it follows from the proof of theorem \ref{th4} that they can be incorporated the unified gathering algorithm. We only have to prove the same for the algorithms presented in this section.

 \textbf{Case 1:} Assume that the initial configuration is 3C3. Suppose that there are exactly 4 robots. Let $r_1, r_2, r_3$ be the robots at the corners $D, A$ and $B$ respectively. Let $r_4$ be the remaining robot. If both $r_1, r_3$ move simultaneously towards $A$, then the new configuration has exactly one corner occupied and we are done. So now assume that only one of $r_1, r_3$, say $r_1$, moves. After the move the new configuration has exactly two corners occupied. Since the initial configuration was not of 3C1, $r_4$ is not on $DC$ or $BC$. So $r_4$ is either at $DA,$ or $AB$ or somewhere in the interior of the grid. Note that regardless of where $r_4$ is, after the move by $r_1$, $A$ is the largest corner. So a possible pending move by $r_3$ is consistent with the algorithm for configurations with two corners occupied. Clearly eventually all the non-faulty robots will gather at $A$. The arguements are similar if the configuration has exactly 3 robots.
 
 \textbf{Case 2:} Assume that the initial configuration is 3C4. Clearly at least one of the robots in shaded region will reach $A$ in finite time. Note that the configuration remains 3C4 during the movements. Once a multiplicity is created at $A$ all the other non-faulty robots will go to $A$.
 
 \textbf{Case 3:} Now consider the 3C5 configuration. As the robots on the sides $DC$ and $CB$ leave the bounday, they enter the shaded region. If the shaded region initially contains exactly one robot, at most 2 new robots will enter the shaded region. If shaded region is initially empty, at most 3 new robots will enter the shaded region due to the asynchronous nature of the robots. Clearly after these moves the sides $DC$ and $CB$ together have at least 2 robots remaining on them (other than the ones at the corner). Hence the new configuration is not 3C1. Also after these moves the shaded region has now at least 2 robots. So the new configuration is 3C4. As shown earlier, gathering can be achieved from this configuration. $\square$

\end{proof}

\section{Conclusion}

We have shown that except one specific configuration called the 2S2 configuration, all configurations that are gatherable in a non-faulty system,  are also gatherable in presence of at most one crash fault. We have also devised a unified gathering algorithm for these initial configurations except for some marginal and specific configurations with exactly 6 or 7 robots on even $\times$ even square grids. The question left open is whether a fault-tolerant gathering algorithm can be given for the 2S2 configuration and be included in a unified algorithm. 

While there are some configurations where our algorithm asks many robots to move at the same time, in many cases at most two robots are allowed to move concurrently. A challenging direction of future research would be to further parallelize the movements of the robots so that they can survive multiple crash faults. This problem is valid for other graph topologies as well, as in most gathering algorithms in literature for weak robots on graphs, e.g. rings \cite{D14}, only a limited number of robots are allowed to move concurrently.

% \section*{References}

\bibliography{elsarticle-template.bib}

\begin{thebibliography}{10}
\expandafter\ifx\csname url\endcsname\relax
  \def\url#1{\texttt{#1}}\fi
\expandafter\ifx\csname urlprefix\endcsname\relax\def\urlprefix{URL }\fi
\expandafter\ifx\csname href\endcsname\relax
  \def\href#1#2{#2} \def\path#1{#1}\fi

\bibitem{Cao97}
Y.~U. Cao, A.~S. Fukunaga, A.~Kahng,
  \href{http://dx.doi.org/10.1023/A:1008855018923}{Cooperative mobile robotics:
  Antecedents and directions}, Auton. Robots 4~(1) (1997) 7--27.
\newblock \href {http://dx.doi.org/10.1023/A:1008855018923}
  {\path{doi:10.1023/A:1008855018923}}.
\newline\urlprefix\url{http://dx.doi.org/10.1023/A:1008855018923}

\bibitem{Flocchini12}
P.~Flocchini, G.~Prencipe, N.~Santoro, Distributed computing by oblivious
  mobile robots, Synthesis lectures on distributed computing theory 3~(2)
  (2012) 1--185.

\bibitem{Cieliebak03}
M.~Cieliebak, P.~Flocchini, G.~Prencipe, N.~Santoro, Solving the robots
  gathering problem, in: ICALP, Vol.~3, Springer, 2003, pp. 1181--1196.

\bibitem{Cieliebak12}
M.~Cieliebak, P.~Flocchini, G.~Prencipe, N.~Santoro, Distributed computing by
  mobile robots: Gathering, SIAM Journal on Computing 41~(4) (2012) 829--879.

\bibitem{Prencipe07}
G.~Prencipe, Impossibility of gathering by a set of autonomous mobile robots,
  Theoretical Computer Science 384~(2-3) (2007) 222--231.

\bibitem{Flocchini05}
P.~Flocchini, G.~Prencipe, N.~Santoro, P.~Widmayer, Gathering of asynchronous
  robots with limited visibility, Theoretical Computer Science 337~(1-3) (2005)
  147--168.

\bibitem{Czyzowicz09}
J.~Czyzowicz, L.~Gasieniec, A.~Pelc, Gathering few fat mobile robots in the
  plane, Theoretical Computer Science 410~(6-7) (2009) 481--499.

\bibitem{Lin07}
J.~Lin, A.~S. Morse, B.~D.~O. Anderson,
  \href{https://doi.org/10.1137/040620552}{The multi-agent rendezvous problem.
  part 1: The synchronous case}, SIAM Journal on Control and Optimization
  46~(6) (2007) 2096--2119.
\newblock \href {http://arxiv.org/abs/https://doi.org/10.1137/040620552}
  {\path{arXiv:https://doi.org/10.1137/040620552}}, \href
  {http://dx.doi.org/10.1137/040620552} {\path{doi:10.1137/040620552}}.
\newline\urlprefix\url{https://doi.org/10.1137/040620552}

\bibitem{Lin207}
J.~Lin, A.~S. Morse, B.~D.~O. Anderson,
  \href{https://doi.org/10.1137/040620564}{The multi-agent rendezvous problem.
  part 2: The asynchronous case}, SIAM Journal on Control and Optimization
  46~(6) (2007) 2120--2147.
\newblock \href {http://arxiv.org/abs/https://doi.org/10.1137/040620564}
  {\path{arXiv:https://doi.org/10.1137/040620564}}, \href
  {http://dx.doi.org/10.1137/040620564} {\path{doi:10.1137/040620564}}.
\newline\urlprefix\url{https://doi.org/10.1137/040620564}

\bibitem{Kranakis03}
E.~Kranakis, N.~Santoro, C.~Sawchuk, D.~Krizanc, Mobile agent rendezvous in a
  ring, in: Distributed Computing Systems, 2003. Proceedings. 23rd
  International Conference on, IEEE, 2003, pp. 592--599.

\bibitem{De06}
G.~De~Marco, L.~Gargano, E.~Kranakis, D.~Krizanc, A.~Pelc, U.~Vaccaro,
  Asynchronous deterministic rendezvous in graphs, Theoretical Computer Science
  355~(3) (2006) 315--326.

\bibitem{Dessmark06}
A.~Dessmark, P.~Fraigniaud, D.~R. Kowalski, A.~Pelc, Deterministic rendezvous
  in graphs, Algorithmica 46~(1) (2006) 69--96.

\bibitem{Flocchini04}
P.~Flocchini, E.~Kranakis, D.~Krizanc, N.~Santoro, C.~Sawchuk, Multiple mobile
  agent rendezvous in a ring, in: LATIN, Vol.~4, Springer, 2004, pp. 599--608.

\bibitem{Klasing08}
R.~Klasing, E.~Markou, A.~Pelc, Gathering asynchronous oblivious mobile robots
  in a ring, Theoretical Computer Science 390~(1) (2008) 27--39.

\bibitem{Klasing10}
R.~Klasing, A.~Kosowski, A.~Navarra, Taking advantage of symmetries: Gathering
  of many asynchronous oblivious robots on a ring, Theoretical Computer Science
  411~(34-36) (2010) 3235--3246.

\bibitem{D11}
G.~D'Angelo, G.~Di~Stefano, A.~Navarra, Gathering of six robots on anonymous
  symmetric rings., in: SIROCCO, Springer, 2011, pp. 174--185.

\bibitem{Koren10}
M.~Kore{\'n}, Gathering small number of mobile asynchronous robots on ring,
  Zeszyty Naukowe Wydzia{\l}u ETI Politechniki Gda{\'n}skiej. Technologie
  Informacyjne 18 (2010) 325--331.

\bibitem{Haba08}
K.~Haba, T.~Izumi, Y.~Katayama, N.~Inuzuka, K.~Wada, On gathering problem in a
  ring for 2n autonomous mobile robots, in: Proceedings of the 10th
  International Symposium on Stabilization, Safety, and Security of Distributed
  Systems (SSS), poster, 2008.

\bibitem{D14}
G.~D’Angelo, G.~Di~Stefano, A.~Navarra, Gathering on rings under the
  look--compute--move model, Distributed Computing 27~(4) (2014) 255--285.

\bibitem{Izumi10}
T.~Izumi, T.~Izumi, S.~Kamei, F.~Ooshita, Mobile robots gathering algorithm
  with local weak multiplicity in rings, in: International Colloquium on
  Structural Information and Communication Complexity, Springer, 2010, pp.
  101--113.

\bibitem{Kamei11}
S.~Kamei, A.~Lamani, F.~Ooshita, S.~Tixeuil, Asynchronous mobile robot
  gathering from symmetric configurations without global multiplicity
  detection, in: International Colloquium on Structural Information and
  Communication Complexity, Springer, 2011, pp. 150--161.

\bibitem{Kamei12}
S.~Kamei, A.~Lamani, F.~Ooshita, S.~Tixeuil, Gathering an even number of robots
  in an odd ring without global multiplicity detection., in: MFCS, Springer,
  2012, pp. 542--553.

\bibitem{D14mini}
G.~D’Angelo, A.~Navarra, N.~Nisse, Gathering and exclusive searching on rings
  under minimal assumptions, in: International Conference on Distributed
  Computing and Networking, Springer, 2014, pp. 149--164.

\bibitem{Navarra16}
G.~D'Angelo, G.~Di~Stefano, R.~Klasing, A.~Navarra, Gathering of robots on
  anonymous grids and trees without multiplicity detection, Theoretical
  Computer Science 610 (2016) 158--168.

\bibitem{Di17}
G.~Di~Stefano, A.~Navarra, Gathering of oblivious robots on infinite grids with
  minimum traveled distance, Information and Computation 254 (2017) 377--391.

\bibitem{Guilbault13}
S.~Guilbault, A.~Pelc, Gathering asynchronous oblivious agents with local
  vision in regular bipartite graphs, Theoretical Computer Science 509 (2013)
  86--96.

\bibitem{Agmon06}
N.~Agmon, D.~Peleg, Fault-tolerant gathering algorithms for autonomous mobile
  robots, SIAM Journal on Computing 36~(1) (2006) 56--82.

\bibitem{Defago06}
X.~D{\'e}fago, M.~Gradinariu, S.~Messika, P.~Raipin-Parv{\'e}dy, Fault-tolerant
  and self-stabilizing mobile robots gathering, in: International Symposium on
  Distributed Computing, Springer, 2006, pp. 46--60.

\bibitem{Bouzid13}
Z.~Bouzid, S.~Das, S.~Tixeuil, Gathering of mobile robots tolerating multiple
  crash faults, in: Distributed Computing Systems (ICDCS), 2013 IEEE 33rd
  International Conference on, IEEE, 2013, pp. 337--346.

\bibitem{Izumi12}
T.~Izumi, S.~Souissi, Y.~Katayama, N.~Inuzuka, X.~D{\'e}fago, K.~Wada,
  M.~Yamashita, The gathering problem for two oblivious robots with unreliable
  compasses, SIAM Journal on Computing 41~(1) (2012) 26--46.

\bibitem{bhagat16}
S.~Bhagat, S.~G. Chaudhuri, K.~Mukhopadhyaya, Fault-tolerant gathering of
  asynchronous oblivious mobile robots under one-axis agreement, Journal of
  Discrete Algorithms 36 (2016) 50--62.

\bibitem{Dieudonne14}
Y.~Dieudonn{\'e}, A.~Pelc, D.~Peleg, Gathering despite mischief, ACM
  Transactions on Algorithms (TALG) 11~(1) (2014) 1.

\bibitem{Bouchard16}
S.~Bouchard, Y.~Dieudonn{\'e}, B.~Ducourthial, Byzantine gathering in networks,
  Distributed Computing 29~(6) (2016) 435--457.

\bibitem{Chalopin16}
J.~Chalopin, Y.~Dieudonn{\'e}, A.~Labourel, A.~Pelc, Rendezvous in networks in
  spite of delay faults, Distributed Computing 29~(3) (2016) 187--205.

\bibitem{Pelc17}
A.~Pelc, Deterministic gathering with crash faults, arXiv preprint
  arXiv:1704.08880.

\bibitem{flocchini04token}
P.~Flocchini, E.~Kranakis, D.~Krizanc, F.~Luccio, N.~Santoro, C.~Sawchuk,
  Mobile agents rendezvous when tokens fail, Structural Information and
  Communication Complexity (2004) 161--172.

\bibitem{das08}
S.~Das, Mobile agent rendezvous in a ring using faulty tokens, Distributed
  Computing and Networking (2008) 292--297.

\bibitem{das16}
S.~Das, R.~Focardi, F.~L. Luccio, E.~Markou, D.~Moro, M.~Squarcina, Gathering
  of robots in a ring with mobile faults., in: ICTCS, 2016, pp. 122--135.

\bibitem{Navarra17}
G.~Di~Stefano, A.~Navarra,
  \href{https://doi.org/10.1007/s00446-016-0278-7}{Optimal gathering of
  oblivious robots in anonymous graphs and its application on trees and rings},
  Distrib. Comput. 30~(2) (2017) 75--86.
\newblock \href {http://dx.doi.org/10.1007/s00446-016-0278-7}
  {\path{doi:10.1007/s00446-016-0278-7}}.
\newline\urlprefix\url{https://doi.org/10.1007/s00446-016-0278-7}

\end{thebibliography}

\end{document}